\newcommand{\version}{1}
\newcommand{\commentvar}{0}  
\newcommand{\SLV}[2]{\ifthenelse{\equal{\version}{0}}{#1}{ #2}}
\newtheorem{lemmaAppendix}{Lemma} 
\newtheorem{propositionAppendix}{Proposition} 
    \newtheorem{theorem}{Theorem}[section]
    \newtheorem{lemma}[theorem]{Lemma}
    \newtheorem{corollary}[theorem]{Corollary}
    \newtheorem{proposition}[theorem]{Proposition}
    \newtheorem{definition}[theorem]{Definition}
  \newtheorem{example}{Example}[section]
\newcommand{\myproof}[1]{
\ifthenelse{\boolean{withproofs}}{#1}{}
}
\newcommand{\lterms}{\Lambda}
\newcommand{\la}[1]{\lambda #1.}
\newcommand{\tm}{t}
\newcommand{\tmtwo}{s}
\newcommand{\tmthree}{u}
\newcommand{\tmfour}{r}
\newcommand{\tmfive}{p}
\newcommand{\tmsix}{q}
\newcommand{\var}{x}
\newcommand{\vartwo}{y}
\newcommand{\varthree}{z}
\newcommand{\Rew}[1]{\rightarrow_{#1}}
\renewcommand{\to}{\Rew{}}
\newcommand{\tob}{\Rew{\beta}}
\newcommand{\toone}{\Rew{1}}
\newcommand{\totwo}{\Rew{2}}
\newcommand{\lRew}[1]{\; \mbox{}_{#1}{\leftarrow}\ }
\newcommand{\lpartob}{\; \mbox{}_{\beta}{\Leftarrow}\ }
\newcommand{\parRew}[1]{\Rightarrow_{#1}}
\newcommand{\parto}{\parRew{}}
\newcommand{\partob}{\parRew{\beta}}
\newcommand{\toh}{\Rew{h}}
\newcommand{\towh}{\Rew{wh}}
\newcommand{\tonoth}{\Rew{\neg h}}
\newcommand{\tostrat}{\Rew{x}}
\newcommand{\esym}{{\mathtt e}}
\newcommand{\wsym}{w}
\newcommand{\cbv}{{CbV}\xspace}
\newcommand{\val}{v}
\newcommand{\wal}{w}
\newcommand{\waltwo}{\wal'}
\newsavebox{\@brx}
\newcommand{\llangle}[1][]{\savebox{\@brx}{\(\m@th{#1\langle}\)}%
  \mathopen{\copy\@brx\kern-0.7\wd\@brx\usebox{\@brx}}}
\newcommand{\rrangle}[1][]{\savebox{\@brx}{\(\m@th{#1\rangle}\)}%
  \mathclose{\copy\@brx\kern-0.7\wd\@brx\usebox{\@brx}}}
\newcommand{\nbvctxtwo}[1]{\nbvctxtwo{#1}}
\newcommand{\eqdef}{:=}
\newcommand{\grameq}{::=}
\newcommand{\isub}[2]{\{#1/#2\}}
\renewcommand{\isub}[2]{\{#1{\shortleftarrow}#2\}}
\newcommand{\llbrace}{\{ \kern -0.27em \vert}
\newcommand{\rrbrace}{\vert \kern -0.27em \}}
\renewcommand{\l}{\lambda}
\newcommand{\ie}{{\em i.e.}\xspace}
\newcommand{\ih}{{\textit{i.h.}}\xspace}
\newcommand{\fv}[1]{\mathtt{fv}(#1)}
\newcommand{\ignore}[1]{}
\newcommand{\colspace}{@{\hspace{.5cm}}}
\newcommand{\myinput}[1]{\ifthenelse{\boolean{withimages}}{\input{#1}}{}}
\newcommand{\reflemma}[1]{Lemma~\ref{l:#1}}
\newcommand{\refthm}[1]{Theorem~\ref{thm:#1}}
\newcommand{\refprop}[1]{Proposition~\ref{prop:#1}}
\newcommand{\refpropp}[2]{Prop.~\ref{prop:#1}.\ref{p:#1-#2}} 
\newcommand{\refsect}[1]{Sect.~\ref{sect:#1}}
\newcommand{\reffig}[1]{Fig.~\ref{fig:#1}}
\newcommand{\refdef}[1]{Def.~\ref{def:#1}}
\newcommand{\refpoint}[1]{Point~\ref{p:#1}}
\newcommand{\refpoints}[2]{Points~\ref{p:#1}-\ref{p:#2}}
\newcommand{\mellies}{{Melli{\`e}s}\xspace}
\newcommand{\rios}{{R{\'i}os}\xspace}
\newcommand{\set}[1]{\{#1\}}
\newcommand{\nat}{\mathbb{N}}
\newcommand{\toe}{\Rew{\esym}}
\newcommand{\toi}{\Rew{\isym}}
\renewcommand{\toi}{\Rew{\intsym}}
\newcommand{\tow}{\Rew{\wsym}}
\newcommand{\betav}{\beta_\val}
\newcommand{\tobv}{\Rew{\betav}}
\newcommand{\sizep}[2]{|#1|_{#2}}
\newcommand{\lo}{$\losym$\xspace}
\newcommand{\losym}{\ell o}
\newcommand{\LO}{$\losym$\xspace}
\newcommand{\sep}{\hspace*{0.35cm}}
\newcommand{\tolo}{\Rew{\losym}}
\newcommand{\withproofs}[1]{\ifthenelse{\boolean{withproofs}}{#1}{}}
\newcommand{\withoutproofs}[1]{\ifthenelse{\boolean{withproofs}}{}{#1}}
\newcounter{numberone}
\newcommand{\terms}{S}
\newcommand{\rewsys}{\mathcal S}
\newcommand{\extsym}{\esym}
\newcommand{\intsym}{{\lnot\extsym}}
\newcommand{\partoint}{\parRew{\neg\esym}}
\newcommand{\tonotlo}{\Rew{\neg \losym}}
\newcommand{\partoind}[1]{\overset{#1}{\Rightarrow}}
\newcommand{\partobind}[1]{\overset{#1}{\Rightarrow}_{\beta}}
\newcommand{\partobindlong}[1]{\overset{#1}{\parRew{\beta}}}
\newcommand{\partonotlo}{\parRew{\neg \losym}}
\newcommand{\partonoth}{\parRew{\neg h}}
\newcommand{\partobvind}[1]{\overset{#1}{\Rightarrow}_{\betav}}
\newcommand{\partobvindlong}[1]{\overset{#1}{\parRew{\betav}}}
\newcommand{\partobv}{\parRew{\betav}}
\newcommand{\tonotw}{\Rew{\neg \wsym}}
\newcommand{\partonotw}{\parRew{\neg \wsym}}
\newcommand{\essential}{essential\xspace}
\newcommand{\Essential}{Essential\xspace}
\newcommand{\external}{\essential}
\newcommand{\External}{\Essential}
\newcommand{\tmu}{\tmthree}
\newcommand{\tms}{\tmtwo}
\newcommand{\tmr}{\tmfour}
\newcommand{\tmq}{\tmsix}
\newcommand{\macrostep}{macro-step}
\newcommand{\strategy}{reduction\xspace}
\newcommand{\strategies}{reductions\xspace}
\renewcommand{\ll}{$\llsym$\xspace}
\newcommand{\llsym}{\ell\ell}
\newcommand{\ltow}{\lRew{\wsym}}
\newcommand{\complete}{full\xspace}
\newcommand{\completeness}{fullness\xspace}
\newcommand{\Completeness}{Fullness\xspace}
\newcommand{\sequence}{sequence\xspace}
\tikzset{
node distance=1.1cm, auto,
every node/.style={font=\scriptsize },
ocenter/.style={baseline={([yshift=-.5ex, xshift=-.5ex]current bounding box)}},  
labelBeginAbove/.style={postaction={decorate,decoration={markings,mark=at position 0 with {\node[inner sep= 0.6pt, above=1pt]{\tiny #1};}} } },
labelBeginBelow/.style={postaction={decorate,decoration={markings,mark=at position 0 with {\node[inner sep= 0.6pt, below=1pt]{\tiny #1};}}}},
labelEndAbove/.style={postaction={decorate,decoration={markings,mark=at position 1 with {\node[inner sep= 0.6pt, above=1pt]{\tiny #1};}}}},
labelEndBelow/.style={postaction={decorate,decoration={markings,mark=at position 1 with {\node[inner sep= 0.6pt, below=1pt]{\tiny #1};}}}},
labelEndRight/.style={postaction={decorate,decoration={markings,mark=at position 1 with {\node[inner sep= 0.6pt, right=1pt]{\tiny #1};}}}},
labelEndLeft/.style={postaction={decorate,decoration={markings,mark=at position 1 with {\node[inner sep= 0.6pt, left=1pt]{\tiny #1};}}}}
}
\newcommand{\commCF}[1]{\ifthenelse{\equal{\commentvar}{1}}{\textcolor{violet}{*CF: #1 *}}{}}
\newcommand{\toll}{\Rew{\llsym}}
\newcommand{\ltoll}{\lRew{\llsym}}
\renewcommand{\tostrat}{\toll}
\newcommand{\tonotll}{\Rew{\neg \llsym}}
\newcommand{\tonotstrat}{\tonotll}
\newcommand{\tostratind}[1]{\Rew{\beta:{#1}}}
\newcommand{\ltostratind}[1]{\lRew{\beta:{#1}}}
\newcommand{\partostratind}[1]{\Rightarrow_{\beta:{#1}}}
\newcommand{\partonotll}{\parRew{\lnot\llsym}}
\newcommand{\partonotstrat}{\partonotll}
\newcommand{\Deg}[1]{\llsym(#1)}
\renewcommand{\deg}[1]{\Deg{#1}}
\newcommand{\towb}{\rightarrow_{w\beta}}
\newcommand{\tonotwh}{\rightarrow_{\lnot wh}}
\newcommand{\tost}{\tostrat}
\newcommand{\partoat}{\partostratind}
\newcommand{\partonotst}{\partonotstrat}
\newcommand{\NoteProof}[1]{
	\marginnote{
		\scriptsize{Proof p.\,{\pageref{#1}}}}}
\newcommand{\NoteState}[1]{
	\marginnote{
		\scriptsize{See p.\,{\pageref{#1}}}}}
\author{Beniamino Accattoli\inst{1} 
	\and Claudia Faggian\inst{2}  
	\and Giulio Guerrieri\inst{3}}
\authorrunning{B. Accattoli et al.}
\institute{Inria \& LIX, École Polytechnique, UMR 7161, Palaiseau, France \email{\href{mailto:beniamino.accattoli@inria.fr}{beniamino.accattoli@inria.fr}}
	\and Université de Paris, IRIF, CNRS, F-75013 Paris, France  
\email{\href{mailto:faggian@irif.fr}{faggian@irif.fr}}
	\and University of Bath, Department of Computer Science, Bath, UK
\email{\href{mailto:g.guerrieri@bath.ac.uk}{g.guerrieri@bath.ac.uk}}
}
\title{Factorization and Normalization, Essentially}
\begin{document}

\maketitle

\begin{abstract}
$\l$-calculi come with no fixed evaluation strategy. Different strategies may then be considered, and it is important that they satisfy some abstract rewriting property, such as factorization or normalization theorems. 
 In this paper we provide simple proof techniques for these theorems. Our starting point is a revisitation of Takahashi's technique to prove factorization for  head reduction. Our technique is both simpler and more powerful, as it works in cases where Takahashi's does not. We then pair factorization with two other abstract properties, defining \emph{essential systems}, and show that normalization follows. Concretely, we apply the technique to four case studies, two classic ones, head and the leftmost-outermost reductions, and two less classic ones, non-deterministic weak call-by-value and least-level reductions.

\end{abstract}


\section{Introduction}
\label{sect:intro}

The $\l$-calculus is the model underlying functional programming languages and proof assistants. The gap between the model and its incarnations is huge. In particular, the $\l$-calculus does not come with a fixed reduction strategy, while concrete frameworks need one. 
A desirable property is that  the reduction which  is implemented terminates on all terms on which $\beta$ reduction has a reduction sequence to normal form. This is guaranteed by a \emph{normalization theorem}. 
 Two classic
 examples are  the \emph{leftmost-outermost} and \emph{head} normalization theorems (theorems 13.2.2 and  11.4.8 in Barendregt 
 \cite{Barendregt84}). The former states that if the term has a $\beta$-normal form, leftmost-outermost reduction is guaranteed to find it; the latter has a similar but 
 subtler statement, roughly head reduction computes a head normal form, if the term has any. 
 
 Another classic theorem for head reduction states that head \strategy approximates the $\beta$-normal form by computing an essential part of every evaluation sequence. The precise formulation is 
 a \emph{factorization theorem}:  a  sequence of $\beta$ steps $\tm\tob^*\tmtwo$ can always be re-arranged as a  sequence of head steps ($\toh$) followed by a sequence of non-head steps ($\tonoth$), that is, $\tm \toh^* \tmthree \tonoth^* \tmtwo$. 
 Both head and leftmost-outermost reductions play a key role in the theory of the $\l$-calculus as presented in 
 Barendregt \cite{Barendregt84} or Krivine \cite{krivine1993lambda}.

Variants of the $\l$-calculus abound and  are 
continuously introduced: weak, call-by-value, call-by-need, classical, with pattern matching, sharing, non-determinism, probabilistic choice, quantum features, differentiation, \textit{etc}. 
So, normalization and factorization theorems need to be studied in many variations. Concepts and techniques to prove these theorems do exist, but they do not have the essential, intuitive structure of other fundamental properties, such as~confluence.

 \paragraph{This paper.} Here we provide a presentation of factorization and 
 normalization revisiting a simple technique 
 due to Takahashi \cite{DBLP:journals/iandc/Takahashi95}, making it even \emph{simpler} and \emph{more widely applicable}.
 We separate the abstract reasoning from the concrete details of head reduction, 
 and apply the revisited proof method to several case studies.
 The presentation is novel and hopefully accessible to anyone familiar with the $\l$-calculus, without  a background in advanced  notions of rewriting theory.

We provide four case studies, 
all following the same method. 
Two are revisitations of the classic cases of head and leftmost-outermost (shortened to \lo) reductions. Two are folklore cases. The first is \emph{weak} (\ie out of $\l$-abstractions) \emph{call-by-value} (shortened to \cbv) \emph{reduction}
in its non-deterministic presentation. 
The second is \emph{least-level} (shortened to \ll) \emph{reduction}, a \strategy coming from the linear logic literature---sometimes called \emph{by levels}---and which is usually presented 
using proof nets (see de Carvalho, Pagani and Tortora de Falco \cite{DBLP:journals/tcs/CarvalhoPF11} or Pagani and Tranquilli \cite{DBLP:journals/mscs/PaganiT17}) or calculi related to proof nets (see Terui \cite{DBLP:journals/aml/Terui07} or Accattoli \cite{DBLP:conf/rta/Accattoli12}), rather than in the ordinary $\l$-calculus.
The \lo and \ll cases are \emph{\complete} reductions for $\beta$, 
\ie they have the same normal forms as $\beta$. 
The head and weak \cbv cases are not \complete, as they may not compute $\beta$  normal forms. 

\paragraph{Takahashi.} 
In \cite{DBLP:journals/iandc/Takahashi95}, Takahashi uses the natural inductive notion of \emph{parallel}\footnotemark
\footnotetext{ 
	The terminology at work in the literature on $\l$-calculus and the rewriting terminology often clash: the former calls \emph{parallel $\beta$ reduction} what the latter calls \emph{multi-step $\beta$ reduction}---parallel reduction in rewriting is something else.} 
$\beta$ reduction (which reduces simultaneously a number of $\beta$-redexes; it is also the key concept in Tait and Martin-L\"of's classic proof of confluence of the $\l$-calculus)
to introduce a simple proof technique 
for head factorization, from which head normalization follows. 
By iterating head factorization, she also obtains leftmost-outermost normalization, via a simple argument on the structure of~terms due to Mitschke \cite{Mitschke79}. 

Her technique has 
been employed 
for various $\l$-calculi because of its simplicity. 
Namely, for the $\l$-calculus with $\eta$ by Ishii \cite{DBLP:journals/tcs/Ishii18}, the call-by-value $\l$-calculus by Ronchi Della Rocca and Paolini \cite{DBLP:journals/iandc/PaoliniR04,RonchiPaolini}, the resource $\l$-calculus by Pagani and Tranquilli \cite{DBLP:conf/aplas/PaganiT09}, pattern calculi by Kesner, Lombardi and  \rios \cite{DBLP:journals/corr/abs-1102-3734}, the shuffling calculus by Guerrieri, Paolini and Ronchi Della Rocca \cite{GuerrieriPR15,Guerrieri15,DBLP:journals/lmcs/GuerrieriPR17}, and it has 
been formalized with proof assistants by McKinna and Pollack \cite{DBLP:journals/jar/McKinnaP99} and Crary \cite{Crary09standard}.

\paragraph{Takahashi revisited.} Despite its simplicity, Takahashi's proof \cite{DBLP:journals/iandc/Takahashi95} of factorization relies on substitutivity properties not satisfied by \complete reductions such as \lo and \ll. 
Our first contribution is a proof that is independent 
of the substitutivity properties of the factorizing reductions. 
It relies on a simpler fact, namely the substitutivity of an \emph{indexed} variant $\partobind n$ of parallel $\beta$ reduction $\partob$. 
The definition of $\partobind n$ simply decorates the definition of $\partob$ with a natural number $n$ that intuitively corresponds to the number of redexes reduced in parallel by a $\partob$ step.
 
We 
prove factorization theorems for all our four case studies following this simpler scheme. We also highlight  an interesting point: 
factorization for the two \complete reductions cannot be 
obtained directly following Takahashi's method\footnotemark
\footnotetext{\label{f:direct}It can be obtained indirectly, as a corollary of standardization, proved by Takahashi \cite{DBLP:journals/iandc/Takahashi95} using the concrete structure of terms. Thus the proof is not of an abstract nature.}. 

\paragraph{From factorization to essential normalization.} The second main  contribution of our  paper is the isolation of 
abstract properties  that together with  factorization imply normalization. First of all we abstract head reduction into a generic reduction $\toe$, called \emph{essential}, and non-head reduction $\tonoth$ into a \emph{non-essential} reduction $\toi$. The first additional property for normalization is \emph{persistence}: steps of the factoring reduction $\toe$ cannot be erased by the factored out $\toi$. The second one is a relaxed form of determinism for $\toe$. We show that in such \emph{essential} rewriting systems $\toe$ has a normalization theorem. The argument is abstract, that is, independent of the specific nature of terms. This is in contrast to how Takahashi \cite{DBLP:journals/iandc/Takahashi95} obtains normalization from factorization: her proof is based on an induction over the structure of terms, and cannot then be disentangled by the concrete nature of the rewriting system under study.

\paragraph{Normalizing reductions for $\beta$.} We  apply both our techniques to our case studies of  \complete reduction:  \lo and \ll, obtaining simple proofs that they are normalizing reductions for $\beta$. 
%
Let us point out that \lo is also---at present---the only known deterministic reduction to $\beta$ normal form whose number of steps is a reasonable cost model, as shown by 
Accattoli and Dal Lago \cite{DBLP:journals/corr/AccattoliL16}. Understanding its normalization is one of the motivations at the inception of this work.

\paragraph{Normalization with respect to different notions of results.}
As a further feature, our approach provides for free normalization theorems for reductions that are not \complete for the rewrite system in which they live. 
Typical examples are head and weak \cbv reductions, which do not compute $\beta$ and \cbv normal forms, respectively. 
These normalization theorems arise naturally in the theory of the $\l$-calculus. For instance, functional programming languages implement only weak notions of reduction, and head reduction (rather than \lo) is the key notion for the $\l$-definability of computable functions.

We obtain normalization theorems for head and weak \cbv reductions. 
Catching normalization for non-\complete reductions sets our work apart from the recent studies on normalization by Hirokawa, Middeldorp, and Moser \cite{DBLP:conf/rta/HirokawaMM15} and Van Oostrom and Toyama \cite{DBLP:conf/rta/OostromT16}, discussed below among related works.

\paragraph{Factorization, Normalization, Standardization.} In the literature of the $\l$-calculus, normalization for \LO reduction is often obtained as a corollary of the standardization theorem, which roughly states that every reduction sequence can be re-organized as to reduce redexes according to the left-to-right order (Terese \cite{Terese} following Klop \cite{phdklop} and Barendregt \cite{Barendregt84}, for instance). Standardization is a complex and technical result. 
Takahashi \cite{DBLP:journals/iandc/Takahashi95}, using Mitschke's argument \cite{Mitschke79} that iterates 
head factorization, obtains a simpler proof technique for \LO normalization---and for standardization as well. 
Our work refines that approach, abstracts from it and shows that factorization is a general technique for normalization.
%
%
%

\paragraph{Related work.} Factorization is studied in the abstract 
{in \cite{DBLP:conf/ctcs/Mellies97,DBLP:conf/rta/Accattoli12}}. 
\mellies axiomatic approach \cite{DBLP:conf/ctcs/Mellies97} builds on standardization, and encompasses a wide class of rewriting 
systems; 
in particular, like us, he can deal with non-\complete reductions. 
{Accattoli} \cite{DBLP:conf/rta/Accattoli12} relies crucially 
on terminating hypotheses, absent instead here. 

  Hirokawa, Middeldorp, and Moser \cite{DBLP:conf/rta/HirokawaMM15} and Van Oostrom and Toyama \cite{DBLP:conf/rta/OostromT16} study normalizing strategies  via a clean separation between abstract and term rewriting results. 
Our approach to normalization is similar to the one used in  \cite{DBLP:conf/rta/HirokawaMM15} to study \LO evaluation for first-order term rewriting systems. 
Our essential systems strictly generalize  their  conditions: uniform termination replaces determinism 
(two of the strategies we present here are not deterministic) and---crucially---persistence strictly generalizes the property in their Lemma 7. 
Conversely, they focus on hyper-normalization and on extending the method to systems in which left-normality is relaxed. We do not deal with these aspects. 
Van Oostrom and Toyama's study \cite{DBLP:conf/rta/OostromT16} of 
(hyper-)normalization is
based on an elegant and powerful method based on the random descent property and 
an ordered notion of commutative diagrams.
Their method and ours are incomparable: we do not rely on (and do not assume) the random descent property (for its definition and uses see van Oostrom \cite{DBLP:conf/rta/Oostrom07})---even if most strategies naturally have that property---and we do 
focus on factorization (which they explicitly 
avoid), 
since we see it as the crucial tool from  which normalization can be obtained.  

As already pointed out, a fundamental difference with respect to both works is that we consider a more general notion of normalization for reductions that are not \complete, that is not captured by either of those approaches.

In the literature, normalization is also proved from iterated head factorization (Takahashi \cite{DBLP:journals/iandc/Takahashi95} for \lo, and Terui \cite{DBLP:journals/aml/Terui07} or Accattoli \cite{DBLP:conf/rta/Accattoli12} for \ll on proof nets-like calculi, or Pagani and Tranquilli \cite{DBLP:journals/mscs/PaganiT17} for \ll on differential proof nets), or as a corollary of standardization (Terese \cite{Terese} following Klop \cite{phdklop} and Barendregt \cite{Barendregt84} for \lo), or using semantic principles such as intersection types (Krivine \cite{krivine1993lambda} for \lo and de Carvalho, Pagani and Tortora de Falco \cite{DBLP:journals/tcs/CarvalhoPF11} for \ll on proof nets). 
Last, Bonelli \emph{et al.} develop a sophisticated proof of normalization for a $\l$-calculus with powerful pattern matching in \cite{DBLP:journals/tcs/BonelliKLR17}.
Our technique differs from them all.

\paragraph{Proofs.} 
Most proofs are in the appendix. This paper is an extended version of~\cite{AccattoliFG19}.


\section{Factorization and Normalization, Abstractly}
\label{sect:abstract-rewriting}

In this section, we study factorization and normalization abstractly, that is,  independently of the specific structure of the objects to be rewritten. 

A \emph{rewriting system} (aka abstract reduction system, see Terese \cite[Ch.~1]{Terese})  $\rewsys$ is a pair $(\terms, \to)$ 
consisting of a  set $\terms$ and a binary 
relation $\to \, \subseteq \terms\times \terms$ called \emph{reduction}, whose pairs are written  $t \to s$ and called \emph{$\to$-steps}. 
A \emph{$\to$-\sequence} from $\tm$ is a sequence $\tm \to \tmtwo \to \dots$ of $\to$-steps; $\tm \to^k \tms$ denotes a sequence of $k$ $\to$-steps from $\tm$ to 
	$\tmtwo$. 
As usual, $\to^*$  (resp. $\to^=$) denotes the  transitive-reflexive (resp. reflexive)  closure of $\to$. Given two reductions $\toone$ and $\totwo$ we use $\toone\!\cdot\!\totwo$ for their composition, defined as $\tm \toone \!\cdot\! \totwo \tmtwo$ if $\tm \toone\tmthree\totwo \tmtwo$ for some $\tmthree$.

In this section we focus on a given sub-reduction $\toe$ of $\to$, called \emph{essential}, for which we study factorization and normalization with respect to $\to$. It comes with a second sub-reduction $\toi$, called \emph{inessential}, such that $\toe \cup \toi \,=\, \to$. Despite the notation, $\toe$ and $\toi$ are not required to be disjoint. In general, we write $(\terms,\{\Rew{a},\Rew{b}\})$ for the rewriting system 
$(\terms,\to)$ where $\to ~=~\Rew{a} \cup \Rew{b}$.

\subsection{Factorization.}

A rewriting system $(\terms,\{\toe,\toi\})$ 
satisfies \emph{$\toe$-factorization} (also called \emph{postponement of $\toi$ after $\toe$})
if $\tm \to^* \tmtwo$ implies that there exists $\tmthree$ such that $\tm\toe^* \tmthree \toi^* \tmtwo$.
Compactly, we write $\to^* \, \subseteq \, \toe^*\!\cdot\!\toi^*$. 
In diagrams, see \reffig{diagrams}.a.

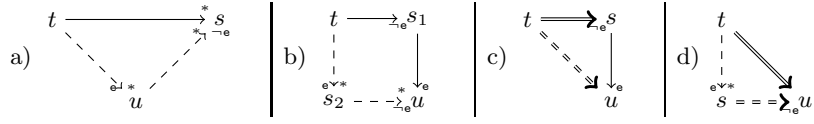
\begin{figure}[t]
\centering
\begin{tabular}{cc \colspace |  cc \colspace |  cc \colspace |  cc}
a)&
	\begin{tikzpicture}[ocenter]
	\node (t) {\small $\tm$};
	\node (dummy) [right of=t] {};
	\node (s1) [right of=dummy] {\small$\tmtwo$};
	\node (u) [below of=dummy] {\small$\tmthree$};
	\draw[->, labelEndAbove=*] (t) to (s1);
	\draw[->,dashed, labelEndLeft=$\extsym$, labelEndRight=*] (t) to (u);
	\draw[->, dashed, labelEndRight=$\intsym$, labelEndLeft=*] (u) to  (s1);
	\end{tikzpicture}
&
\ b)
&
	\begin{tikzpicture}[ocenter]
	\node (t) {\small $\tm$};
	\node (s1) [right of=t] {\small$\tmtwo_1$};
	\node (s2) [below of=t] {\small$\tmtwo_2$};
	\node (u) [right of=s2] {\small$\tmthree$};
	\draw[->, labelEndBelow=$\intsym$] (t) to (s1);
	\draw[->, dashed, labelEndLeft=$\extsym$, labelEndRight=*] (t) to (s2);
	\draw[->, dashed, labelEndAbove=*, labelEndBelow=$\intsym$] (s2) to  (u);
	\draw[->, labelEndRight=$\extsym$] (s1) to  (u);
	\end{tikzpicture}
&
\ c)
&
			\begin{tikzpicture}[ocenter]
			\node (t) {\small $\tm$};
			\node (s1) [right of=t] {\small$\tmtwo$};
			\node (u) [below of=s1] {\small$\tmthree$};
			\draw[->, double, labelEndBelow=$\intsym$] (t) to (s1);
			\draw[->, double, dashed, labelEndAbove] (t) to  (u);
			\draw[->, labelEndRight=$\extsym$] (s1) to  (u);
			\end{tikzpicture}  
&
\ d)
&
			\begin{tikzpicture}[ocenter]
			\node (t) {\small $\tm$};
			\node (s2) [below of=t] {\small$\tmtwo$};
			\node (u) [right of=s2] {\small$\tmthree$};
			\draw[->, double] (t) to (u);
			\draw[->, dashed, labelEndLeft=$\extsym$, labelEndRight=*] (t) to (s2);
			\draw[->, double, dashed, labelEndBelow=$\intsym$] (s2) to  (u);
			\end{tikzpicture}

	\end{tabular}
\caption{Diagrams: a) factorization, b) weak postponement, c) merge, d) split.}
\label{fig:diagrams}
\end{figure}

\paragraph{Proving factorization.}

Factorization is a non-trivial rewriting property, because it is \emph{global}, that is, quantified over all reduction sequences from a term. 
To be able to  prove factorization, we would like to reduce it  
to \emph{local} properties, \ie properties quantified only over one-step reductions from 
{a} term.
At first sight it may seem that a local diagram such as the one in \reffig{diagrams}.b would give factorization by a simple induction. Such a diagram however does not allow to infer factorization without further hypotheses---counterexamples can be found in 
Barendregt \cite{Barendregt84}.

The following abstract property is a special case for which 
a local condition implies factorization. 
It was first observed by Hindley \cite{HindleyPhD}. 

\begin{lemma}[Hindley, local postponement]
	\label{l:hindley}
	Let $(\terms,\{\toe,\toi\})$ be a rewriting system. If $\toi \!\cdot\! \toe \,\subseteq\, \toe^* \!\cdot\! \toi^=$ then  $\to^* \,\subseteq\, \toe^* \!\cdot\! \toi^*$. 	
\end{lemma}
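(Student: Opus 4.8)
The plan is to derive the global conclusion $\to^* \,\subseteq\, \toe^*\!\cdot\!\toi^*$ from the local hypothesis $\toi \!\cdot\! \toe \,\subseteq\, \toe^* \!\cdot\! \toi^=$ in two stages. Since $\to \,=\, \toe \cup \toi$, a sequence $\tm \to^* \tmtwo$ is an arbitrary interleaving of essential and inessential steps, and factorization just asks us to sort it so that every $\toe$-step precedes every $\toi$-step. The engine of this sorting is the ability to push a single inessential step rightwards past a whole \emph{block} of essential steps, so the first thing I would establish is the strengthened local property $\toi \!\cdot\! \toe^* \,\subseteq\, \toe^* \!\cdot\! \toi^=$; the second stage is then a routine induction on the length of an arbitrary $\to$-sequence.

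I would prove the strengthened property by induction on the number $n$ of steps in the block $\toe^n$. If $n = 0$ the conclusion is immediate, the single $\toi$-step being already of the shape $\toe^0\!\cdot\!\toi^=$. For the inductive step, consider $\tm \toi \tmtwo \toe^n v \toe \tmthree$. The induction hypothesis applied to the prefix $\tm \toi \tmtwo \toe^n v$ yields some $w$ with $\tm \toe^* w \toi^= v$, and I then distinguish whether the trailing $\toi^=$ is empty or a genuine step. If it is empty, then $w = v \toe \tmthree$, so $\tm \toe^* \tmthree$ and we conclude with an empty final $\toi^=$. If it is genuine, then $w \toi v \toe \tmthree$ is an instance of $\toi \!\cdot\! \toe$, so the \emph{hypothesis of the statement} gives $w \toe^* w' \toi^= \tmthree$; prepending $\tm \toe^* w$ yields $\tm \toe^* w' \toi^= \tmthree$, as required.

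Finally I would prove $\to^* \,\subseteq\, \toe^*\!\cdot\!\toi^*$ by induction on the length $k$ of $\tm \to^k \tmtwo$. The cases $k \le 1$ are trivial. For the step, write $\tm \to \tmthree \to^k \tmtwo$ and apply the induction hypothesis to the suffix, getting $\tmthree \toe^* v \toi^* \tmtwo$. If the first step is essential, prepending it preserves the desired shape. If the first step is inessential, we have $\tm \toi \tmthree \toe^* v \toi^* \tmtwo$, and the strengthened property applied to the initial segment $\tm \toi \tmthree \toe^* v$ pushes that single $\toi$ past the block, giving $\tm \toe^* w \toi^= v \toi^* \tmtwo$ and hence $\tm \toe^* w \toi^* \tmtwo$.

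I expect the main obstacle to be the inductive step of the strengthened property, and precisely the role of the reflexive closure $\toi^=$ in the hypothesis. The subtlety is that commuting one inessential step past one essential step may \emph{regenerate} an inessential step, so a priori pushing it across a long block could cascade into many trailing inessential steps that we could not keep collected at the right end. The fact that the hypothesis produces at most one trailing inessential step ($\toi^=$ rather than $\toi^*$) is exactly what tames this cascade and makes the induction close; a hypothesis of the weaker form $\toi \!\cdot\! \toe \,\subseteq\, \toe^* \!\cdot\! \toi^*$ would not suffice, which is the abstract reason why a naive one-step induction from the weak-postponement diagram of \reffig{diagrams}.b does not yield factorization.
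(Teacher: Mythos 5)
Your proof is correct and takes essentially the same route as the paper's: you first strengthen the local hypothesis to $\toi \!\cdot\! \toe^* \,\subseteq\, \toe^* \!\cdot\! \toi^=$ by induction on the length of the essential block, and then conclude by induction on the length of an arbitrary $\to$-sequence, peeling off the first step and pushing it past the essential prefix of the factorized suffix when that step is inessential. The only differences are presentational---you spell out the case analysis on the trailing $\toi^=$ that the paper dismisses as ``immediate,'' and your closing observation on why $\toi^=$ rather than $\toi^*$ makes the induction close matches the paper's remark that the weak-postponement diagram alone does not imply factorization.
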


\begin{proof}
	The assumption   $\toi \!\cdot\! \toe \,\subseteq\, \toe^* \!\cdot\! \toi^=$ implies (\#) $\toi \!\cdot\! \toe^* \,\subseteq\, \toe^* \!\cdot\! \toi^=$ (indeed, it is immediate to prove that $\toi \!\cdot\! \toe^k \,\subseteq\, \toe^* \!\cdot\! \toi^=$ by induction on $k$).
		
		We then prove that  $\to^k \,\subseteq\, \toe^* \!\cdot\! \toi^*$, by induction on  $k$.  The case $k=0$ is trivial. Assume $ \to \!\cdot\! \to^{k-1}$. By \ih, $ \to \!\cdot\! \toe^* \!\cdot\! \toi^*$. 
		If the first step is $\toe$, the claim is proved. Otherwise, 
		by $(\#)$, from  $(\toi \!\cdot\! \toe^*) \cdot\! \toi^*$ we obtain  
		$(\toe^* \!\cdot\! \toi^=) \cdot\! \toi^*$.
\end{proof}

Hindley's local condition 
is a strong hypothesis for factorization that in general does not hold in $\l$-calculi---not even in the simple case of head \strategy. 
However, the property can be applied in combination with another standard technique: switching to \emph{macro} steps that compress $\toe^* $ or $\toi^*$ into just one step, at the price of some light overhead. 
This idea is the essence of both Tait--Martin-L\"of's and Takahashi's techniques,  based on \emph{parallel steps}.
The role of parallel steps in 
Takahashi \cite{DBLP:journals/iandc/Takahashi95} 
is here captured abstractly  by the notion of  \macrostep\ system.

\begin{definition}[Macro-step  system] 
\label{def:macrostep}
A rewriting system $\rewsys=(\terms, \{\toe, \toi\})$ is a \emph{\macrostep\ system} if there are two 
reductions $\parto$ and $\partoint$ (called \emph{\macrostep s} and \emph{inessential \macrostep s}, respectively)  such that
	\begin{itemize}
		\item \emph{Macro}: 
		$\toi ~\subseteq~ \partoint~ \subseteq ~\toi^*$.
		
		\item \emph{Merge}: if $\tm \partoint \!\cdot\!  \toe \tmu $ then $\tm \parto \tmu$. That is, the diagram in \reffig{diagrams}.c holds.
		
		\item \emph{Split}:  if $\tm \parto \tmu$ then  $\tm \toe^* \!\cdot\! \partoint  \tmu$. That is,  the diagram in \reffig{diagrams}.d holds.
	\end{itemize}
\end{definition}
Note that $\parto$ just plays the role of a ``bridge'' between the hypothesis of the merge condition and the conclusion of the split condition---it shall play a crucial role in the concrete proofs in the next sections. 
In this paper, concrete instances of $\parto$ and $\partoint$ shall be parallel $\beta$ reduction and some of its variants. 

\begin{proposition}[Factorization]
	\label{prop:parallelsplit-gives-fact}
	Every \macrostep\ system $(\terms, \{\toe, \toi\})$ satisfies  $\toe$-factorization.
\end{proposition}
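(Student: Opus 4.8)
The plan is to reduce the statement to Hindley's local postponement (Lemma~\ref{l:hindley}), but applied to the \emph{macro-level} pair $(\toe,\partoint)$ rather than to $(\toe,\toi)$ directly. The reason for passing through $\partoint$ is exactly the point made right after Lemma~\ref{l:hindley}: the naive local condition $\toi\!\cdot\!\toe\subseteq\toe^*\!\cdot\!\toi^=$ generally fails in $\l$-calculi, whereas its analogue for the macro-steps $\partoint$ does hold, and this is precisely what the Merge and Split axioms are designed to guarantee.

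First I would establish local postponement of $\partoint$ after $\toe$, \ie $\partoint\!\cdot\!\toe\subseteq\toe^*\!\cdot\!\partoint$. This is immediate from the axioms: if $\tm\partoint\!\cdot\!\toe\,\tmu$, then Merge gives $\tm\parto\tmu$, and Split then gives $\tm\toe^*\!\cdot\!\partoint\,\tmu$. Since $\toe^*\!\cdot\!\partoint\subseteq\toe^*\!\cdot\!\partoint^=$, this is exactly the hypothesis of Lemma~\ref{l:hindley} instantiated on the rewriting system $(\terms,\{\toe,\partoint\})$. Writing $\to'\,:=\,\toe\cup\partoint$ for the reduction of this system, Lemma~\ref{l:hindley} then yields $(\to')^{*}\subseteq\toe^*\!\cdot\!\partoint^{*}$.

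It remains to transfer this conclusion back to the original system, which is done through the two-sided Macro inclusion $\toi\subseteq\partoint\subseteq\toi^*$. The left inclusion gives $\to=\toe\cup\toi\subseteq\toe\cup\partoint=\to'$, hence $\to^*\subseteq(\to')^{*}$; the right inclusion gives $\partoint^{*}\subseteq\toi^*$. Chaining these with the previous step produces
\[
\to^{*}\ \subseteq\ (\to')^{*}\ \subseteq\ \toe^*\!\cdot\!\partoint^{*}\ \subseteq\ \toe^*\!\cdot\!\toi^*,
\]
which is $\toe$-factorization. I expect no genuine obstacle here: the only real idea is recognizing that Merge followed by Split is nothing but Hindley's condition read at the macro level, with the Macro axiom serving the double duty of enlarging $\to$ into $\to'$ and of collapsing $\partoint^{*}$ back to $\toi^*$. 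All the actual difficulty is deferred to the concrete sections, where one must verify that parallel $\beta$ reduction and its indexed variants really do satisfy Merge and Split.
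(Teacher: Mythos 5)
Your proposal is correct and coincides with the paper's own proof: both use Merge followed by Split to obtain $\partoint\!\cdot\!\toe\,\subseteq\,\toe^*\!\cdot\!\partoint^=$, apply Hindley's lemma (Lemma~\ref{l:hindley}) to the pair $(\toe,\partoint)$ instead of $(\toe,\toi)$, and then transfer the conclusion back via the two Macro inclusions $\toi\subseteq\partoint\subseteq\toi^*$. There is no substantive difference in the argument.
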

\begin{proof}
	By  Merge 
	and Split, 
	$\partoint \!\cdot\! \toe \,\subseteq\,  \parto \,\subseteq\, \toe^* \!\cdot\! \partoint \,\subseteq\, \toe^* \!\cdot\! \partoint^=$.  
	By Hindley's lemma (\reflemma{hindley}) applied to 
	$\toe$ and $\partoint$ (rather than $\toe$ and $\toi$), 
	we 
	obtain 
	$(\toe \cup \partoint)^* \,\subseteq\, \toe^* \!\cdot\!  \partoint^*$.
	Since $\toi \,\subseteq\, \partoint$, we have $ (\toe \cup \toi)^*  \subseteq\, (\toe \cup \partoint)^* \subseteq\,  \toe^* \!\cdot\! \partoint^* $. 
	As $\partoint \,\subseteq\, \toi^*$, we have  $\toe^* \!\cdot\! \partoint^* \, \subseteq \, \toe^* \!\cdot\! \toi^*$. 
	Therefore, $\to^* = (\toe \cup \toi)^* \subseteq\, \toe^* \!\cdot\! \toi^*$.
\end{proof}


\subsection{Normalization for \complete \strategies}
The interest of  factorization comes from the fact that the essential 
reduction $\toe$ on which factorization pivots has some good properties. 
Here we pinpoint the abstract  properties which make factorization a privileged method to prove normalization; we collect them into the definition of \emph{\external system} (\refdef{extsys}).

\paragraph{Normal forms and normalization.} Let us recall what normalization is about. 
In general, a term may or may not reduce to a normal form. 
And if it does, not all reduction sequences necessarily lead to normal form. 
A term is 
\emph{weakly} or \emph{strongly normalizing}, depending on if it  may or must reduce to normal form.
If a term $\tm$ is strongly normalizing, any choice of steps will eventually lead to a normal form. 
However, if $\tm$ is weakly normalizing, how do we compute a normal form? This is the problem tackled by \emph{normalization}:  by repeatedly performing \emph{only specific  steps},  a normal form will be computed, provided that $\tm $ can reduce to~any.

Recall the statement of the \LO normalization theorem: if $\tm \tob^* \tmu$ with $\tmu$ $\beta$-normal, then $\tm$ \LO-reduces to $\tmu$.
Observe  a subtlety: 
such a formulation relies on  the determinism of \LO reduction. 
We give  a more general formulation of normalizing reduction, valid also for non-deterministic reductions.\\

Formally, given a rewriting system $(\terms, \to)$, \mbox{a term $\tm \in \terms$ is:}
\begin{itemize}
	\item \emph{$\to$-normal} (or in \emph{$\to$-normal form})  if $\tm \not\to$, \ie there are no $\tms$ such that $\tm\to \tms$;
	\item \emph{weakly $\to$-normalizing} if \emph{there exists} a \sequence  $\tm \to^*\tmtwo$ with $\tmtwo$  $\to$-normal; 
	\item \emph{strongly $\to$-normalizing}  if there are no infinite $\to$-\sequence{s} from $\tm$, or 
	equivalently, if \emph{all} maximal  $\to$-\sequence{s} from $\tm$  are finite. 
\end{itemize}
We call \emph{$\strategy$ for} $\to$ any $\toe \,\subseteq\, \to$.
It is \emph{\complete}  if $ \toe$ and $\to$ have the same normal forms.\footnotemark
\footnotetext{In  rewriting theory, a \emph{\complete \strategy for $\to$} is called a \emph{reduction strategy for} $\to$. We prefer not to use the term strategy because it has different meaning  in the $\l$-calculus, where it is a  \emph{deterministic, not necessarily \complete}, reduction for $\to$.} 

\begin{definition}[Normalizing \strategy]
	A  \complete \strategy $\toe$ for $\to$  is  \emph{normalizing} (for $\to$) if, for every term $\tm$, $\tm$ is strongly $\toe$-normalizing whenever it is weakly $\to$-normalizing.
\end{definition}
%
Note that, since 
the normalizing \strategy $\toe$ is \complete, if  $\tm$ is strongly $\toe$-normalizing then \emph{every} maximal $\toe$-\sequence from $\tm$ ends in a $\to$-normal form.

\begin{definition}[\External system]
	\label{def:extsys} 
	A rewriting system   $(\terms,\{\toe,\toi\})$ is \emph{\external} if the following conditions hold:
	\begin{enumerate}
		\item \emph{Persistence:} if $\tm \toe \tmtwo$ and $\tm \toi \tmthree$, then $\tmthree \toe \tmfour$ for some $\tmfour$.
		\item  \emph{Uniform termination}: if $\tm$ is weakly $\toe$-normalizing, then it is  strongly \mbox{$\toe$-normalizing}.		
		\item  \emph{Terminal factorization}:  if $\tm\to^* \tmu$ and $\tmu$ is $\toe$-normal, then $ \tm\toe^* \!\cdot\! \toi^*\tmu$.
	\end{enumerate}
	It is moreover \emph{\complete} if $\toe$ is a \complete \strategy for $\to$.
\end{definition}
Comments on the definition:
\begin{itemize}
	\item \emph{Persistence}: it 
	means that essential steps are out of reach for inessential steps, that cannot erase them. 
	The only way of getting rid of essential steps is by reducing them, and so in that sense they are \emph{essential} to normalization.
	

	\item \emph{From determinism to uniform termination}: as we already said, in general $\toe$ is not deterministic. For normalization, then,
		 it is not enough that there is a \sequence $\tm \toe^*\tmu$ with $\tmu$ $\to$-normal (as in the statement of \LO-normalization). We need  to be sure that there are no 
		infinite $\toe$-\sequence{s} from $\tm$.
		This is exactly what is ensured by the uniform termination property. 
		Note that if $\toe$ is deterministic (or has the diamond or random descent properties) then it is  uniformly terminating.

	\item \emph{Terminal factorization}: there are two subtleties. First, we need only a weak form of factorization, namely 
	factorization is only required for $\to$-\sequence{s} ending in a $\toe$-normal form\footnote{The difference between factorization and its terminal case is relevant for normalization: 
		van Oostrom and Toyama \cite[footnote 8]{DBLP:conf/rta/OostromT16} give an example of normalizing \complete \strategy for a rewriting system in  which factorization fails 
		but terminal factorization holds.}.
	Second, the reader  may 
	expect terminal factorization to be required with respect to $\to$-normal rather than $\toe$-normal forms.
	The two notions coincide if $\toe$ is \complete, and for the time being we only discuss \complete essential systems. 
	We discuss the more general case in \refsect{non-complete}.
\end{itemize}

\begin{example} In the $\l$-calculus with $\beta$ reduction, head reduction $\toh$  and its associated $\tonoth$ reduction (defined in \refsect{head}) form an \external system. Similarly, leftmost-outermost $\tolo$ reduction and its associated $\tonotlo$ reduction (\refsect{leftmost}) form a \complete \external system. Two more examples are in \refsect{weak-cbv} and \refsect{stratified-cbn}.
\end{example}

\begin{theorem}[Essential \complete normalization]
	\label{thm:complete-normalization}
	Let 
	$(\terms,\{\toe,\toi\})$ be a \complete \external system. Then	$\toe$ is a  normalizing \strategy for $\to$.	
\end{theorem}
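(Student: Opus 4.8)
I want to show that if a term $\tm$ is weakly $\to$-normalizing, then it is strongly $\toe$-normalizing. Since the system is \complete, the $\toe$-normal forms coincide with the $\to$-normal forms, so it suffices to rule out infinite $\toe$-sequences from $\tm$. By the uniform termination hypothesis, strong $\toe$-normalization follows as soon as I establish weak $\toe$-normalization, i.e.\ the \emph{existence} of a finite maximal $\toe$-sequence from $\tm$ ending in a $\toe$-normal form. So the real goal reduces to: \emph{if $\tm$ is weakly $\to$-normalizing, then some $\toe$-sequence from $\tm$ reaches a $\toe$-normal form.}

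\smallskip

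\textbf{Key steps.} First I would fix a normalizing $\to$-sequence $\tm \to^* \tmu$ with $\tmu$ a $\to$-normal (hence $\toe$-normal, by completeness) form. Applying \emph{terminal factorization}, I obtain an intermediate $\tmthree$ with $\tm \toe^* \tmthree \toi^* \tmu$. The crux is to argue that the factored-out $\toi$-phase cannot create new essential work: I claim $\tmthree$ is already $\toe$-normal. Suppose not, i.e.\ $\tmthree \toe \tmfour$ for some $\tmfour$. Then I would push this essential step along the $\toi^*$-tail $\tmthree \toi^* \tmu$ using \emph{persistence}: persistence says that an inessential step cannot destroy an available essential step, so by a straightforward induction on the length of $\tmthree \toi^* \tmu$, the essential redex survives all the way to $\tmu$, giving $\tmu \toe \tmfive$ for some $\tmfive$. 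This contradicts $\tmu$ being $\toe$-normal. Hence $\tmthree$ is $\toe$-normal, and $\tm \toe^* \tmthree$ witnesses weak $\toe$-normalization of $\tm$. Finally, uniform termination upgrades this to strong $\toe$-normalization, which is exactly what the definition of normalizing \strategy requires.

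\smallskip

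\textbf{Main obstacle.} The delicate point is the inductive propagation of the essential redex through the inessential tail. Persistence is stated as a purely local, one-step property (if $\tm \toe \tmtwo$ and $\tm \toi \tmthree$ then $\tmthree \toe \tmfour$ for some $\tmfour$), so I must verify that iterating it along $\tmthree \toi^* \tmu$ is sound: at each step I only know that \emph{some} essential step remains available after one inessential step, and I feed that renewed availability into the next application of persistence. The induction is on the number of $\toi$-steps, and each step merely records ``$\toe$-activity is preserved,'' never tracking the identity of the redex, so the argument goes through cleanly without any residual theory. The only subtlety to state carefully is that completeness is used precisely to identify $\toe$-normal forms with $\to$-normal forms, so that the terminal form $\tmu$ of the chosen $\to$-sequence is genuinely $\toe$-normal and the contradiction lands.
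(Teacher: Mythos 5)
Your proposal is correct and takes essentially the same route as the paper's own proof: terminal factorization, then an induction on the length of the inessential tail iterating persistence to show that the intermediate term is $\toe$-normal, then uniform termination to upgrade weak $\toe$-normalization to strong $\toe$-normalization. One minor point: the fact that the $\to$-normal form is also $\toe$-normal needs only the inclusion $\toe \,\subseteq\, \to$, not completeness; indeed the paper observes that completeness is never used in this argument, which is precisely why the same proof yields the more general result for non-\complete systems (\refthm{par-normalization}).
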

\begin{proof}
	Let $\tm$ be a weakly $\to$-normalizing term, \ie $\tm \to^* \tmu$ for some term $\tmu$ in $\to$-normal form (and so in $\toe$-normal form, since $\toe \,\subseteq\, \to$). 
	\begin{enumerate}
		
		\item  \emph{Terminal factorization} implies  $\tm \toe^*\tms \toi^* \tmu$ for some $\tmtwo$, since $\tmthree$ is $\toe$-normal.
		\item Let us show that $\tms$ is $\toe$-normal: if not, then $\tms\toe\tmfour$ for some $\tmfour$, and a straightforward induction on the length of $\tms \toi^*\tmu$ iterating \emph{persistence} gives that $\tmu \toe \tmfive$ for some $\tmfive$, against the hypothesis that $\tmu$ is $\toe$-normal. Absurd.

		\item By the previous point, $\tm$ is weakly $\toe$-normalizing. By \emph{uniform termination}, $\tm$ is strongly $\toe$-normalizing. 
		\qedhere
	\end{enumerate}
\end{proof}
%
%
%

\subsection{A more general notion of normalizing \strategy.}
\label{sect:non-complete}
Essential systems actually encompass also important notions of normalization for 
reductions that are \emph{not \complete}, 
such as \emph{head normalization}. 
These cases arise naturally in the $\l$-calculus literature, where partial notions of result such as head normal forms or values are of common use. 
Normalization for non-\complete reductions is instead not so common in the rewriting literature outside the $\l$-calculus. This is why, to guide the reader, we presented first the natural case of \complete reductions.

Let us first discuss head reduction:  $\toh$  is deterministic and not \complete with respect to $\tob$, as its normal forms may not be $\tob$-normal forms. The well-known property of interest is \emph{head normalization} (Cor.~11.4.8 in Barendregt's book \cite{Barendregt84}):
 \begin{center}	
			 If $\tm \tob^* \tmtwo$ and $\tmtwo$ is head normal\footnotemark
			 \footnotetext{``$\tm$ has a head normal form'' is the usual formulation for ``$\tm \tob^* \tmtwo$ for some $\tmtwo$ that is head normal''. We prefer the latter to avoid the ambiguity of the former about the reduction leading from $\tm$ to one of its head normal forms ($\tob^*$ or $\toh^*$?).} 
			 then $\toh$ terminates on $\tm$.
	\end{center}
\commCF{The above is Barendregt statement. Takahashi statement is:
 \begin{center}
 	 \begin{center}
 			if $\tm$ has a head normal form, then $\tm \toh^* \tmu$ for some $\tmu$ in head normal form. 
 	\end{center}
	if $\tm$ has a head normal form, iff the $\toh$-reduction from $\tm$ terminates.
\end{center}
}
 The statement has two subtleties. First, 
 $\tm$ may $\tob$-reduce to a term  in $\toh$-normal form in many different ways, 
 possibly without using $\toh$, so that 
 the hypotheses may not imply that $\toh$ terminates. 
 Second, 
 the conclusion is ``\emph{$\toh$ terminates on $\tm$}'' and not $\tm\toh^*\tmtwo$, because in general 
 the maximal $\toh$-sequence from $\tm$ may end in a term $\tmthree\neq\tmtwo$. 
 For instance, let $I=\la \vartwo \vartwo$: then 
	 $I(\var(II))\tob I(\var I)\tob \var I$ is a $\tob$-\sequence to head normal form, and yet 
	 the maximal $\toh$-sequence $I(\var(II))\toh \var(II)$ ends in a different term.

Now, let us abstract from head normalization, taking into account that in general the essential \strategy $\toe$---unlike 
head \strategy---may not be deterministic, and so we 
ask for strong $\toe$-normalization rather than for $\toe$-termination.

\begin{theorem}[Essential normalization]
	\label{thm:par-normalization}   
	Let 
	$(\terms,\{\toe,\toi\})$ be an  \external system. If 
	$\tm \to^* \tmthree$ and $\tmthree$ is $\toe$-normal, then $\tm$ is strongly $\toe$-normalizing. 	
\end{theorem}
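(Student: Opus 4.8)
The plan is to reuse, essentially verbatim, the three-step argument already carried out for the \complete case in the proof of Theorem~\ref{thm:complete-normalization}. Inspecting that proof reveals that completeness is never used to derive strong $\toe$-normalization: it intervenes only in the closing remark, to conclude that the maximal $\toe$-sequences end in $\to$-normal forms. The sole place where the weakly $\to$-normalizing hypothesis of the \complete theorem was exploited is its very first line, where it produced a $\toe$-normal reduct of $\tm$. Here such a reduct, namely $\tmthree$, is handed to us directly, so the same reasoning goes through without the completeness assumption.

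Concretely, first I would apply \emph{terminal factorization} to the hypothesis $\tm \to^* \tmthree$ (legitimate since $\tmthree$ is $\toe$-normal), obtaining an intermediate term $\tms$ with $\tm \toe^* \tms \toi^* \tmthree$. Second, I would prove that $\tms$ is itself $\toe$-normal. Assume, for contradiction, that $\tms \toe \tmfour$ for some $\tmfour$. I would then propagate this essential step along the inessential tail $\tms \toi^* \tmthree$ by induction on its length: the base case is immediate, and each inductive step is exactly an instance of \emph{persistence}, which guarantees that if a term admits an essential step then so does any of its one-step $\toi$-reducts. Iterating, $\tmthree$ would admit an essential step, contradicting its $\toe$-normality; hence $\tms$ is $\toe$-normal.

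Third, the sequence $\tm \toe^* \tms$ with $\tms$ $\toe$-normal witnesses that $\tm$ is weakly $\toe$-normalizing; \emph{uniform termination} then upgrades this to strong $\toe$-normalization of $\tm$, which is the desired conclusion.

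As for difficulty, there is essentially no obstacle, since the overall structure is inherited from the \complete case. The only point requiring a small amount of care is the second step, where the one-step \emph{persistence} property must be lifted to the multi-step inessential tail $\tms \toi^* \tmthree$ by an induction on its length; this is precisely the ``straightforward induction iterating persistence'' already invoked before, and it is the only genuinely computational ingredient of the argument.
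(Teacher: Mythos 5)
Your proposal is correct and coincides with the paper's own argument: the paper proves this theorem by observing, exactly as you do, that the three-step proof of Theorem~\ref{thm:complete-normalization} (terminal factorization, then iterated persistence to show the intermediate term is $\toe$-normal, then uniform termination) never uses \completeness, and that the $\toe$-normal reduct it needs is supplied directly by the hypothesis. No gap to report.
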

\begin{proof} Exactly as for \refthm{complete-normalization}, \completeness is not used in that proof.
\end{proof}
 
In the next section we shall apply \refthm{par-normalization} to head reduction and obtain the head normalization theorem we started with. Another example of a normalization theorem for a non-\complete reduction is 
in \refsect{weak-cbv}.
Note that the \complete variant of the theorem (\refthm{complete-normalization}) is in fact an instance of the general one (\refthm{par-normalization}). 

\section{\texorpdfstring{The $\l$-Calculus}{The lambda-Calculus}}
\label{sect:lambda}

This short section recalls basic definitions and properties of the $\l$-calculus and introduces the indexed variant of parallel $\beta$.

The set $\lterms$ of \emph{terms} of the $\l$-calculus is  given by the following grammar:
\[ \begin{array}{c@{\hspace{.5cm}}rll}
	    	    \textsc{Terms} & \tm,\tmtwo,\tmthree,\tmfour &\grameq& \var \mid  \la\var\tm \mid \tm\tmtwo
    \end{array}\]
We use the usual notions of free and bound variables, $\tm \isub\var\tmtwo$ for the meta-level capture-avoiding substitution of $\tmtwo$ for the free occurrences of $\var$ in $\tm$, and $\sizep\tm\var$ for the number of free occurrences of $\var$ in $\tm$.
The definition of $\beta$ reduction $\tob$ is:
\[
\begin{array}{c@{\ \sep}c@{\ \sep}c@{\ \sep}c}
 \multicolumn{4}{c}{\textsc{$\beta$ reduction} }
 \\[5pt]
	\AxiomC{}
	\UnaryInfC{$(\la\var\tm) \tmtwo \tob \tm \isub\var{\tmtwo}$}
	\DisplayProof
	&
	\AxiomC{$\tm \tob \tm'$}	
	\UnaryInfC{$\tm  \tmtwo \tob \tm'  \tmtwo$}
	\DisplayProof 
 &
	
	\AxiomC{$\tm \tob \tm'$}
	\UnaryInfC{$\la\var\tm\ \tob \la\var\tm' $}
	\DisplayProof 	
	&
	\AxiomC{$\tm \tob \tm'$}	
	\UnaryInfC{$\tmtwo  \tm \tob \tmtwo  \tm'$}
	\DisplayProof 
\end{array}\]
Let us recall two basic substitutivity properties of $\beta$ reduction.
\begin{enumerate}  
    \item \label{p:tob-subs-one} 
    \emph{Left substitutivity of $\tob$}:
    if $\tm \tob \tm'$ then $\tm\isub\var\tmtwo \tob \tm'\isub\var\tmtwo$.
    \item \label{p:tob-subs-two}
    \emph{Right substitutivity of $\tob$}:
    if $\tmtwo \tob \tmtwo'$ then $\tm\isub\var\tmtwo \tob^* \tm\isub\var{\tmtwo'}$. It is possible to spell out the number of $\tob$-steps, which is exactly the number of free occurrences of $\var$ in $\tm$, that is, $\tm\isub\var\tmtwo \tob^{\sizep\tm\var} \tm\isub\var{\tmtwo'}$.
\end{enumerate}
\paragraph{Parallel $\beta$ reduction.} Parallel $\beta$-reduction $\partob$ is defined by:
\begin{center}
		\def\ScoreOverhang{1pt}
$\begin{array}{c@{\ \sep}c@{\ \sep}c@{\ \sep}c}
 \multicolumn{4}{c}{\textsc{Parallel $\beta$ reduction}}
 \\[5pt]
\AxiomC{}
\UnaryInfC{$\var \partob \var$}
\DisplayProof
&
\AxiomC{$\tm \partob \tm'$}
\UnaryInfC{$\la \var\tm \partob  \la \var \tm'$}
\DisplayProof
&
\AxiomC{$\tm \partob \tm'$}
\AxiomC{$\tmtwo \partob \tmtwo'$}
\BinaryInfC{$\tm \tmtwo \partob  \tm' \tmtwo'$}
\DisplayProof
&
	\AxiomC{$\tm \partob  \tm'$}
	\AxiomC{$\tmtwo \partob \tmtwo'$}
\BinaryInfC{$(\la\var \tm)\tmtwo \partob \tm'\isub \var {\tmtwo'}$}
\DisplayProof
\end{array}$
	\def\ScoreOverhang{4pt}
\end{center}
Tait--Martin-L\"of's proof of the confluence of $\tob$ relies on the 
diamond property of $\partob$\footnotemark,
\footnotetext{Namely, if $\tmtwo_1 \lpartob \tm \partob \tmtwo_2$ 
	then there exists $\tmthree$ such that $\tmtwo_1 \partob \tmthree \lpartob \tmtwo_2$.} in turn based on the 
following property (see Takahashi~\cite[p.~1]{DBLP:journals/iandc/Takahashi95})
\begin{center}
	 \emph{Substitutivity of $\partob$}:
	if  $\tm \partob \tm'$ and  $\tmtwo \partob \tmtwo'$ then $\tm\isub\var\tmtwo \partob \tm'\isub\var{\tmtwo'}$. 
\end{center}
While the diamond property of $\partob$ does not play a role for factorization, one of the contributions of this work is a new proof technique for factorization relying on the substitutivity property of an indexed refinement of $\partob$.

\paragraph{Indexed parallel $\beta$ reduction.}
The new \emph{indexed} version $\partobind n$ of parallel $\beta$ reduction $\partob$ is equipped 
  with  a natural number $n$ which is, roughly, the number of redexes reduced in parallel by a $\partob$; more precisely, $n$ is the length of a particular way of sequentializing the redexes reduced by $\partob$.
  The definition of $\partobind{n}$ is as follows (note that erasing the index one obtains exactly $\partob$, so that $\partob \,=\, \bigcup_{n \in \nat} \partobind{n}$):
\begin{center}
		\def\ScoreOverhang{1pt}
$\begin{array}{c@{\ \sep}c@{\ \sep}c@{\ \sep}c}
 \multicolumn{4}{c}{\textsc{Indexed parallel $\beta$ reduction}}
 \\[5pt]
\AxiomC{}
\UnaryInfC{$\var \partobind 0 \var$}
\DisplayProof
&
\AxiomC{$\tm \partobind n \tm'$}
\UnaryInfC{$\la \var\tm \partobind n  \la \var \tm'$}
\DisplayProof
&
\AxiomC{$\tm \partobind n \tm'$}
\AxiomC{$\tmtwo \partobind m \tmtwo'$}
\BinaryInfC{$\tm \tmtwo \partobindlong {n + m} \tm' \tmtwo'$}
\DisplayProof
&
	\AxiomC{$\tm \partobind n \tm'$}
	\AxiomC{$\tmtwo \partobind m \tmtwo'$}
\BinaryInfC{$(\la\var \tm)\tmtwo \partobindlong {n + \sizep{\tm'}\var \cdot m +1} \tm'\isub \var {\tmtwo'}$}
\DisplayProof
\end{array}$
	\def\ScoreOverhang{4pt}
\end{center}

The intuition behind the last clause 
is: $(\la\var \tm)\tmtwo$ reduces to $\tm'\isub \var {\tmtwo'}$ by 
\begin{enumerate}
\item first reducing $(\la\var \tm)\tmtwo$ to $\tm\isub \var {\tmtwo}$ ($1$ step);
\item then reducing in $\tm\isub \var {\tmtwo}$ the $n$ steps corresponding to the \sequence $\tm \partobind n \tm'$, obtaining $\tm'\isub \var {\tmtwo}$;
\item then reducing $\tmtwo$ to $\tmtwo'$ for every occurrence of $\var$ in $\tm'$ replaced by $\tmtwo$, that is, $m$ steps $\sizep{\tm'}\var$ times, obtaining $\tm'\isub \var {\tmtwo'}$.
\end{enumerate}
Points 2 and 3 hold because of the substitutivity properties of $\beta$ reduction.

It is easily seen that $\partobind 0$ is 
the identity relation on terms.
Moreover, $\tob \ =\ \partobind 1$, and $\partobind n \ \subseteq \ \tob^n$, as expected.  
The substitutivity of $\partobind n$ is proved by simply indexing the proof of substitutivity of $\partob$. 

\begin{lemma}[Substitutivity of $\partobind n$]
\label{l:partobind-subs} 
If $\tm \partobind n \tm'$ and $\tmtwo \partobind m \tmtwo'$, then $\tm \isub\var\tmtwo \partobind{k} \tm' \isub\var{\tmtwo'}$ where $k = n + \sizep{\tm'}\var\cdot m$.
\end{lemma}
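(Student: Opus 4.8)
The plan is to proceed by induction on the derivation of $\tm \partobind n \tm'$ (equivalently, on the structure of $\tm$), exactly mirroring the classic proof of substitutivity of $\partob$ but keeping track of the indices. Throughout I would adopt Barendregt's variable convention, so that in the abstraction and redex cases the bound variable $\vartwo$ may be assumed distinct from $\var$ and not free in $\tmtwo$ nor in $\tmtwo'$; this is what lets substitutions commute in the usual way. Besides the inductive hypothesis, the proof needs two elementary facts about substitution: the \emph{substitution lemma} $(\tm_1\isub\vartwo{\tm_2})\isub\var\tmtwo' = (\tm_1\isub\var\tmtwo')\isub\vartwo{\tm_2\isub\var\tmtwo'}$ (valid since $\vartwo \neq \var$ and $\vartwo$ is not free in $\tmtwo'$), and the occurrence-counting identity $\sizep{\tm_1\isub\vartwo{\tm_2}}{\var} = \sizep{\tm_1}{\var} + \sizep{\tm_1}{\vartwo}\cdot\sizep{\tm_2}{\var}$, together with its special case $\sizep{\tm_1\isub\var\tmtwo'}{\vartwo} = \sizep{\tm_1}{\vartwo}$. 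These are the only ingredients beyond pure bookkeeping.

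The three non-redex cases are direct. If $\tm = \var$ then $n = 0$ and $k = \sizep{\var}{\var}\cdot m = m$, and indeed $\tm\isub\var\tmtwo = \tmtwo \partobind m \tmtwo' = \tm'\isub\var\tmtwo'$; if $\tm$ is a variable other than $\var$ then both sides equal that variable, $k = 0$, and $\partobind 0$ is the identity. For $\tm = \la\vartwo \tm_1$ with $\tm_1 \partobind n \tm_1'$, the induction hypothesis gives $\tm_1\isub\var\tmtwo \partobind{k} \tm_1'\isub\var\tmtwo'$ with the prescribed index (note $\sizep{\tm'}{\var} = \sizep{\tm_1'}{\var}$ since $\vartwo \neq \var$), and closing under the abstraction rule leaves the index unchanged. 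For an application $\tm = \tm_1\tm_2$ with $\tm_1\partobind{n_1}\tm_1'$ and $\tm_2\partobind{n_2}\tm_2'$, two uses of the induction hypothesis plus the application rule give index $k_1 + k_2 = (n_1 + \sizep{\tm_1'}{\var}\cdot m) + (n_2 + \sizep{\tm_2'}{\var}\cdot m)$, which equals $n + \sizep{\tm'}{\var}\cdot m$ because $\sizep{\tm'}{\var} = \sizep{\tm_1'}{\var} + \sizep{\tm_2'}{\var}$.

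The real work is the redex case $\tm = (\la\vartwo \tm_1)\tm_2 \partobind n \tm_1'\isub\vartwo{\tm_2'} = \tm'$, with $\tm_1\partobind{n_1}\tm_1'$, $\tm_2\partobind{n_2}\tm_2'$ and $n = n_1 + \sizep{\tm_1'}{\vartwo}\cdot n_2 + 1$. Here I would first push $\isub\var\tmtwo$ inside, obtaining $\tm\isub\var\tmtwo = (\la\vartwo(\tm_1\isub\var\tmtwo))(\tm_2\isub\var\tmtwo)$, and apply the induction hypothesis to the two premises to get indices $k_1 = n_1 + \sizep{\tm_1'}{\var}\cdot m$ and $k_2 = n_2 + \sizep{\tm_2'}{\var}\cdot m$. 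Firing the redex rule then yields $\tm\isub\var\tmtwo \partobind{j} (\tm_1'\isub\var\tmtwo')\isub\vartwo{\tm_2'\isub\var\tmtwo'}$ with $j = k_1 + \sizep{\tm_1'\isub\var\tmtwo'}{\vartwo}\cdot k_2 + 1$, and the substitution lemma rewrites the target as $\tm'\isub\var\tmtwo'$, exactly the term we want. The main obstacle, and the heart of the lemma, is checking that $j$ equals the prescribed $k = n + \sizep{\tm'}{\var}\cdot m$. Writing $p := \sizep{\tm_1'}{\vartwo}$ and using the special case $\sizep{\tm_1'\isub\var\tmtwo'}{\vartwo} = p$, I get $j = k_1 + p\,k_2 + 1 = n_1 + \sizep{\tm_1'}{\var}\,m + p\,n_2 + p\,\sizep{\tm_2'}{\var}\,m + 1$; on the other hand the counting identity gives $\sizep{\tm'}{\var} = \sizep{\tm_1'}{\var} + p\,\sizep{\tm_2'}{\var}$, whence $k = (n_1 + p\,n_2 + 1) + (\sizep{\tm_1'}{\var} + p\,\sizep{\tm_2'}{\var})\,m$. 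The two expressions are term-by-term equal, which closes the case and the induction. I expect all the difficulty to be concentrated in getting these two occurrence-counting identities right and in correctly tracking how the factor $\sizep{\tm_1'}{\vartwo}$ distributes over the substituted index $k_2$.
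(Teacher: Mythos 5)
Your proof is correct and follows essentially the same route as the paper's: induction on the derivation of $\tm \partobind n \tm'$, with the variable convention, the occurrence-counting identities $\sizep{\tm_1\isub\vartwo{\tm_2}}{\var} = \sizep{\tm_1}{\var} + \sizep{\tm_1}{\vartwo}\cdot\sizep{\tm_2}{\var}$ and $\sizep{\tm_1\isub{\var}{\tmtwo'}}{\vartwo} = \sizep{\tm_1}{\vartwo}$, the substitution lemma, and the same index arithmetic in the application and $\beta$-redex cases. Nothing is missing.
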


\begin{proof}
	By induction on the derivation of $\tm \partobind n \tm'$. 
	Consider its last rule.
	Cases:
	\begin{itemize}
		\item \emph{Variable}: two sub-cases
		\begin{itemize}
			\item $\tm = \var$: then $\tm = \var \partobind 0 \var = \tm'$ then $\tm \isub\var\tmtwo = \var\isub\var\tmtwo = \tmtwo \partobind m \tmtwo' = \var \isub\var{\tmtwo'} = \tm' \isub\var{\tmtwo'}$ that satisfies the statement because $n + \sizep{\tm'}\var\cdot m = 0+1\cdot m = m$. 
			\item $\tm = \vartwo$: then $\tm = \vartwo \partobind 0 \vartwo = \tm'$ and $\tm \isub\var\tmtwo = \vartwo\isub\var\tmtwo = \vartwo \partobind 0 \vartwo = \vartwo \isub\var{\tmtwo'} = \tm' \isub\var{\tmtwo'}$ that satisfies the statement because $n + \sizep{\tm'}\var\cdot m = 
			0 + 0 \cdot m = 0$.
		\end{itemize}
		
		\item \emph{Abstraction}, \ie $\tm = \la{\vartwo}{\tmthree} \partobind{n} \la{\vartwo}{\tmthree'} = \tm'$ because $\tmthree \partobind{n} \tmthree'$; 
		we can suppose without loss of generality that $\vartwo \neq \var$ and $\vartwo$ is not free in $\tmtwo$ (and hence in $\tmtwo'$), so $\sizep{\tmthree'}{\var} = \sizep{\tm'}{\var}$ and $\tm\isub{\var}{\tmtwo} = \la{\vartwo}(\tmthree\isub{\var}{\tmtwo})$ and $\tm'\isub{\var}{\tmtwo'} = \la{\vartwo}(\tmthree'\isub{\var}{\tmtwo'})$.
		By \ih, $\tmthree\isub{\var}{\tmtwo} \partobindlong{n +  \sizep{\tmthree'}{\var} \cdot m} \tmthree'\isub{\var}{\tmtwo'}$, thus
		\begin{align*}
		\AxiomC{$\tmthree\isub{\var}{\tmtwo} \partobindlong{n +  \sizep{\tmthree'}{\var} \cdot m} \tmthree'\isub{\var}{\tmtwo'}$}
		\UnaryInfC{$\tm\isub{\var}{\tmtwo} = \la {\vartwo}\tmthree\isub{\var}{\tmtwo} \partobindlong{n +  \sizep{\tm'}{\var} \cdot m} \la{\vartwo}\tmthree'\isub{\var}{\tmtwo'} = \tm'\isub{\var}{\tmtwo'}$}
		\DisplayProof\,.
		\end{align*}

		\item \emph{Application}, \ie	
		$\tm = \tmthree \tmfour \partobind{n} \tmthree' \tmfour'= \tm'$ with $\tmthree \partobind {n_\tmthree} \tmthree'$, $\tmfour \partobind {n_\tmfour} \tmfour'$ and $n = n_\tmthree + n_\tmfour$.
		By \ih, $\tmthree\isub\var\tmtwo \partobindlong {n_\tmthree +\sizep{\tmthree'}\var \cdot m} \tmthree'\isub\var{\tmtwo'}$ and $\tmfour\isub\var\tmtwo \partobindlong {n_\tmfour +\sizep{\tmfour'}\var \cdot m} \tmfour'\isub\var{\tmtwo'}$. Then 
		\[
			\def\ScoreOverhang{1pt}
			\def\defaultHypSeparation{\hskip .5in}
			\AxiomC{$\tmthree\isub\var\tmtwo \partobindlong {n_\tmthree +\sizep{\tmthree'}\var \cdot m} \tmthree'\isub\var{\tmtwo'}$}
		\AxiomC{$\tmfour\isub\var\tmtwo \partobindlong {n_\tmfour +\sizep{\tmfour'}\var \cdot m} \tmfour'\isub\var{\tmtwo'}$}
		\BinaryInfC{$\tm\isub\var\tmtwo = \tmthree\isub\var\tmtwo \tmfour\isub\var\tmtwo \partobind{k} \tmthree'\isub\var{\tmtwo'} \tmfour'\isub\var{\tmtwo'} = \tm'\isub\var{\tmtwo'}$}
		\DisplayProof
						\def\defaultHypSeparation{\hskip .2in}
			\def\ScoreOverhang{4pt}
			\]
	where $k = n_\tmthree + \sizep{\tmthree'}\var \!\cdot m + n_\tmfour + \sizep{\tmfour'}\var \!\cdot m = n + (\sizep{\tmthree'}\var + \sizep{\tmfour'}\var)\cdot m = n + \sizep{\tm'}\var \!\cdot m$.
		
		\item \emph{$\beta$-step}, \ie\
		$\tm = (\la\vartwo \tmthree)\tmfour \partobind{n} \tmthree'\isub \vartwo {\tmfour'} = \tm'$
		with $\tmthree \partobind {n_\tmthree} \tmthree'$, $\tmfour \partobind {n_\tmfour} \tmfour'$ and $n = n_\tmthree + \sizep{\tmthree'}\vartwo \cdot n_\tmfour +1$. 
		We can assume without loss of generality that $\vartwo\neq \var$ and $\vartwo$ is not free in $\tmtwo$ (and so in $\tmtwo'$),
		hence $\sizep{\tm'}{\var} = \sizep{\tmthree'\isub\vartwo {\tmfour'}}\var = \sizep{\tmthree'}{\var} + \sizep{\tmthree'}{\vartwo}\cdot\sizep{\tmfour'}{\var}$ and $\sizep{\tmthree'\isub{\var}{\tmtwo'}}{\vartwo} = \sizep{\tmthree'}{\vartwo}$ and $\tm\isub\var\tmtwo = (\la\vartwo\tmthree\isub\var\tmtwo) (\tmfour\isub\var\tmtwo)$ and $\tm'\isub\var{\tmtwo'} = \tmthree'\isub\var{\tmtwo'} \isub\vartwo{\tmfour'\isub\var{\tmtwo'}}$.
		
		By \ih, $\tmthree\isub\var\tmtwo \partobindlong {n_\tmthree +\sizep{\tmthree'}\var \cdot m} \tmthree'\isub\var{\tmtwo'}$ and $\tmfour\isub\var\tmtwo \partobindlong {n_\tmfour +\sizep{\tmfour'}\var \cdot m} \tmfour'\isub\var{\tmtwo'}$. 
		Then 
		\[
					\def\defaultHypSeparation{\hskip .5in}
		\AxiomC{$\tmthree\isub\var\tmtwo \partobindlong {n_\tmthree +\sizep{\tmthree'}\var \cdot m} \tmthree'\isub\var{\tmtwo'}$}
		\AxiomC{$\tmfour\isub\var\tmtwo \partobindlong {n_\tmfour +\sizep{\tmfour'}\var \cdot m} \tmfour'\isub\var{\tmtwo'}$}
		\BinaryInfC{$\tm\isub\var\tmtwo = (\la\vartwo\tmthree\isub\var\tmtwo) (\tmfour\isub\var\tmtwo) 
		\partobind{k}
		 \tmthree'\isub\var{\tmtwo'} \isub\vartwo{\tmfour'\isub\var{\tmtwo'}}= \tm'\isub\var{\tmtwo'}$}
		\DisplayProof
					\def\defaultHypSeparation{\hskip .2in}
		\]
		
		where
		$k = n_\tmthree +\sizep{\tmthree'}\var \cdot m +\sizep{\tmthree'}\vartwo\cdot(n_\tmfour +\sizep{\tmfour'}\var \cdot m)+1 
		=
		n_\tmthree + \sizep{\tmthree'}\vartwo\cdot n_\tmfour + 1 + \sizep{\tmthree'}\var \cdot m + \sizep{\tmthree'}\vartwo\cdot \sizep{\tmfour'}\var \cdot m = n + \sizep{\tm'}\var\cdot m.$
		\qedhere
	\end{itemize}
\end{proof}

\section{Head Reduction, Essentially}
\label{sect:head}
We here revisit Takahashi's study \cite{DBLP:journals/iandc/Takahashi95} of head reduction. 
We apply the abstract schema for essential reductions developed in \refsect{abstract-rewriting}, which is the same schema used by Takahashi, but we provide a simpler proof technique for one of the required properties (split). 
First of all, head reduction $\toh$ 
(our essential reduction here)
and its associated inessential reduction $\tonoth$ are defined by:
\[\begin{array}{c}
\begin{array}{c\colspace c\colspace cccc}
 \multicolumn{3}{c}{\textsc{Head reduction}}
 \\[5pt]
	\AxiomC{}
	\UnaryInfC{$(\la\var \tm) \tmtwo \toh \tm\isub\var\tmtwo$}
	\DisplayProof
	&
	\AxiomC{$\tm \toh \tmtwo$}
	\AxiomC{$\tm\neq\la\var\tm'$}
	\BinaryInfC{$\tm \tmthree \toh \tmtwo \tmthree$}
	\DisplayProof 
	&	
	\AxiomC{$\tm \toh \tmtwo$}
	\UnaryInfC{$\la\var\tm \toh \la\var\tmtwo$}
	\DisplayProof 
\end{array}
\\[18pt]
\begin{array}{c\colspace c\colspace c\colspace c}
\multicolumn{4}{c}{\textsc{$\neg$Head reduction}}
\\[5pt]
\AxiomC{$\tm \tob \tm'$}		
		\UnaryInfC{$(\la\var \tm) \tmtwo \tonoth (\la\var \tm') \tmtwo$}
		\DisplayProof
		&
	\AxiomC{$\tm \tob \tm'$}
	\UnaryInfC{$\tmtwo \tm  \tonoth  \tmtwo \tm'$}
	\DisplayProof 
	&	
	\AxiomC{$\tm \tonoth \tm'$}
	\UnaryInfC{$\la\var\tm \tonoth \la\var\tm'$}
	\DisplayProof 
	&
		\AxiomC{$\tm \tonoth \tm'$}
	\UnaryInfC{$\tm \tmtwo \tonoth \tm' \tmtwo $}
	\DisplayProof \,.
\end{array}
\end{array}\]
Note that $\tob \, = \, \toh \cup \tonoth$ but $\toh$ and $\tonoth$ are not disjoint: $I (II) \toh II$ and $I(II) \tonoth II$ with $I = \la{\varthree}{\varthree}$.
Indeed,  $I(II)$ contains two distinct redexes, one is $I(II)$ and is fired by $\toh$, the other one is $II$ and is fired by $\tonoth$; 
coincidentally, the two reductions lead to the same term.

 As for Takahashi, a parallel $\neg$head step 
$\tm \partonoth \tmtwo$ is a parallel step $\tm \partob \tmtwo$ such that $\tm \tonoth^* \tmtwo$. We give explicitly the inference rules for $\partonoth$:
\begin{center}
\textsc{Parallel $\neg$head reduction}
\def\ScoreOverhang{1pt}
\def\defaultHypSeparation{\hskip .1in}
\begin{align*}
		\AxiomC{}
		\UnaryInfC{$\var \partonoth \var$}
		\DisplayProof
		\qquad
		\AxiomC{$\tm \partob \tm'$}
		\AxiomC{$\tmtwo \partob \tmtwo'$}
		\BinaryInfC{$(\la\var \tm) \tmtwo \partonoth (\la\var \tm') \tmtwo'$}
		\DisplayProof
		\qquad
		\AxiomC{$\tm \partonoth \tm'$}	
		\UnaryInfC{$\la\var\tm \partonoth \la\var\tm'$}
		\DisplayProof 
		\qquad
		\AxiomC{$\tm \partonoth \tm'$}
		\AxiomC{$\tmtwo \partob \tmtwo'$}
		\BinaryInfC{$\tm \tmtwo \partonoth \tm' \tmtwo'$}
		\DisplayProof 
	\end{align*}
	\def\ScoreOverhang{4pt}
	\def\defaultHypSeparation{\hskip .2in}
\end{center}
Easy inductions show that $\tonoth \,\subseteq\, \partonoth \,\subseteq\, \tonoth^*$.  It is immediate that $\toh$-normal terms are head normal forms 
in the sense of Barendregt \cite[Def. 2.2.11]{Barendregt84}. We do not describe the shape of head normal forms. 
Our proofs never use it, 
unlike Takahashi's ones. This fact stresses the abstract nature of our proof method.

\paragraph{Head factorization.} We show that $\toh$ induces a \macrostep\ system, with respect to $\tonoth$, $\partob$, and $\partonoth$, to obtain $\toh$-factorization by \refprop{parallelsplit-gives-fact}. 

Therefore, we need to prove merge and split. 
Merge is  easily verified by  induction on $\tm \partonoth\tmtwo$. 
The interesting part is the proof of the split property, that in the concrete case of head reduction becomes: if $\tm \partob \tmtwo$ then $\tm \toh^* \!\cdot\! \partonoth \tmtwo$. 
This is obtained as a consequence of the following easy \emph{indexed split} property based on the indexed variant of parallel $\beta$. The original argument of Takahashi \cite{DBLP:journals/iandc/Takahashi95} is more involved, we discuss it after the new proof.

\newcounter{prop:macro-head}
\addtocounter{prop:macro-head}{\value{proposition}}
\begin{proposition}[Head macro-step system]
\label{prop:macro-head}
\hfill
\begin{enumerate}
	\item\label{p:macro-head-merge} \emph{Merge}: if $\tm \partonoth \!\cdot\! \toh \tmthree$ then $\tm \partob \tmthree$.
 	
 	\item\label{p:macro-head-indexed-split} \emph{Indexed split}: if $\tm \partobind n \tmtwo$ then $\tm \partonoth\tmtwo$, or $n>0$ and $\tm \toh \!\cdot\! \partobind {n-1} \tmtwo$.

 	\item\label{p:macro-head-split} \emph{Split}: if $\tm \partob \tmtwo$ then $\tm \toh^* \!\cdot\! \partonoth\tmtwo$.
\end{enumerate}
That is, $(\Lambda, \{ \toh, \tonoth \})$ is a \macrostep\ system with respect to $\partob$ and $\partonoth$.
\end{proposition}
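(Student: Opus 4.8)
The plan is to prove the three items in the order stated---\emph{Merge}, \emph{Indexed split}, \emph{Split}---and then simply read off the \macrostep\ structure. The \emph{Macro} inclusion $\tonoth \subseteq \partonoth \subseteq \tonoth^*$ has already been recorded, while \emph{Merge} and \emph{Split} are the first and third items; hence once all three are established, $(\Lambda,\{\toh,\tonoth\})$ meets the conditions of \refdef{macrostep}, and $\toh$-factorization follows from \refprop{parallelsplit-gives-fact}. The heart of the argument is \emph{Indexed split} (second item): it is the statement in which the index $n$ carried by $\partobind n$ performs exactly the bookkeeping that lets \reflemma{partobind-subs} take the place of Takahashi's more delicate substitution reasoning. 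Throughout I will freely use that $\partonoth \subseteq \partob$, which is immediate from the definition of $\partonoth$.

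For \emph{Merge} I would argue by induction on the derivation of the parallel non-head step in $\tm \partonoth \tmtwo \toh \tmthree$. The variable case is vacuous (a variable has no head redex) and the abstraction case is just the induction hypothesis. The two informative cases are the two rules that can expose the redex fired by $\toh$. If $\tm \partonoth \tmtwo$ is the redex-preserving rule, i.e. $(\la\var\tmfour)\tmfive \partonoth (\la\var\tmfour')\tmfive'$ with $\tmfour\partob\tmfour'$ and $\tmfive\partob\tmfive'$, then $\toh$ fires this very redex and I close with the $\beta$ rule of $\partob$. In the application case $\tmfour\tmfive \partonoth \tmfour'\tmfive'$ I distinguish whether $\tmfour'$ is an abstraction: if it is not, $\toh$ reduces inside $\tmfour'$ and the induction hypothesis (on $\tmfour$) together with the application rule of $\partob$ conclude; if it is, then $\tmfour$ is itself an abstraction, $\toh$ fires the exposed redex, and the $\beta$ rule of $\partob$---fed the body relation extracted from $\tmfour\partonoth\tmfour'$ via $\partonoth\subseteq\partob$, together with $\tmfive\partob\tmfive'$---reconstitutes $\tm \partob \tmthree$.

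For \emph{Indexed split}, the crux, I would induct on the derivation of $\tm \partobind n \tmtwo$. The variable case forces $n=0$ and gives the first disjunct; the abstraction case is the induction hypothesis pushed under the abstraction, preserving the index. In the application case $\tmfour\tmfive \partobind n \tmfour'\tmfive'$ I split on whether $\tmfour$ is an abstraction: if it is, then $\tmfour\tmfive$ is a redex and the redex-preserving non-head rule of $\partonoth$ yields the first disjunct $\tmfour\tmfive \partonoth \tmfour'\tmfive'$ outright; if it is not, I apply the induction hypothesis to $\tmfour$ and either propagate a $\partonoth$ step (first disjunct) or prepend a $\toh$ step inside $\tmfour$ (licit since $\tmfour$ is not an abstraction), the index dropping by exactly one. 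The decisive case is the $\beta$-step rule, where $\tm = (\la\var\tmfour)\tmfive$ with $\tmfour\partobind{n_0}\tmfour'$, $\tmfive\partobind{m_0}\tmfive'$ and total index $n = n_0 + \sizep{\tmfour'}\var\cdot m_0 + 1 > 0$: a single head step fires the redex, $(\la\var\tmfour)\tmfive \toh \tmfour\isub\var\tmfive$, and \reflemma{partobind-subs} gives precisely $\tmfour\isub\var\tmfive \partobind{n-1} \tmfour'\isub\var{\tmfive'} = \tmtwo$, since $n_0 + \sizep{\tmfour'}\var\cdot m_0 = n-1$, which is the second disjunct. Lining up these indices is the main obstacle: it is exactly what makes the substitutivity lemma directly applicable and the precise point where passing from $\partob$ to its indexed refinement $\partobind n$ earns its keep.

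Finally, \emph{Split} is a short induction on the index $n$ of a derivation $\tm \partobind n \tmtwo$ (which exists since $\partob = \bigcup_{n} \partobind{n}$): by \emph{Indexed split}, either $\tm \partonoth \tmtwo$ already, giving $\tm \toh^* \cdot \partonoth \tmtwo$ with zero head steps, or $n>0$ and $\tm \toh \tmthree \partobind{n-1}\tmtwo$, in which case the induction hypothesis applied to $\tmthree \partobind{n-1}\tmtwo$ gives $\tmthree \toh^* \cdot \partonoth \tmtwo$, and prefixing the head step yields $\tm \toh^* \cdot \partonoth \tmtwo$. With \emph{Macro}, \emph{Merge} and \emph{Split} in hand, $(\Lambda,\{\toh,\tonoth\})$ is a \macrostep\ system, and $\toh$-factorization is then immediate from \refprop{parallelsplit-gives-fact}.
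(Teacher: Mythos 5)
Your proof is correct and takes essentially the same route as the paper's: Merge by induction on the derivation of $\tm \partonoth \tmtwo$, Indexed split by induction on the derivation of $\tm \partobind{n} \tmtwo$ with substitutivity of $\partobind{n}$ (\reflemma{partobind-subs}) discharging the $\beta$-step case, and Split by induction on the index $n$ using Indexed split. The only cosmetic difference is in the application case of Indexed split, where you case-split on whether the left subterm is an abstraction while the paper case-splits on whether $\tmfour\tmfive \partonoth \tmfour'\tmfive'$ already holds; the two analyses are interchangeable.
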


\begin{proof}
\begin{enumerate}
 	\item Easy induction on $\tm \partonoth\tmtwo$. 
 	Details are in \SLV{\cite{long}.}{the Appendix, p.~\pageref{propappendix:macro-head}. }

\item By induction on $\tm \partobind n \tmtwo$. 
We freely use the fact that if $\tm \partobind n \tmtwo$ then $\tm \partob \tmtwo$. Cases:
\begin{itemize}
	\item \emph{Variable}: $\tm = \var \partobind 0 \var = \tmtwo$. Then $\tm = \var \partonoth\var = \tmtwo$.
	
	\item \emph{Abstraction}: $\tm = \la \var\tm' \partobind n  \la \var \tmtwo' = \tmtwo$ 
	with $\tm' \partobind n \tmtwo'$. It follows from the~\ih
	
	\item \emph{Application}:
 $\tm = \tmfour \tmfive \partobind{n} \tmfour' \tmfive' = \tmtwo$ with $\tmfour \partobind {n_1} \tmfour'$, $\tmfive \partobind {n_2} \tmfive'$ and $n = n_1 + n_2$.
There are only two subcases:
		\begin{itemize}
			\item either $\tmfour \tmfive \partonoth \tmfour' \tmfive'$, and then the claim holds;
			\item or $\tmfour \tmfive \not\partonoth \tmfour' \tmfive'$, and then neither $\tmfour \partonoth \tmfour'$ nor $\tmfour$ is an abstraction (otherwise $\tmfour \tmfive \partonoth \tmfour' \tmfive'$). 
			By \ih applied to $\tmfour \partobind{n_1} \tmfour'$,
			$n_1>0$ and there exists $\tmfour''$ such that $\tmfour \toh \tmfour'' \partobind {n_1-1} \tmfour'$. 
			Thus, $\tm = \tmfour \tmfive \toh \tmfour'' \tmfive$ and 
			\[\AxiomC{$\tmfour'' \partobindlong {n_1-1} \tmfour'$}			
			\AxiomC{$\tmfive \partobind {n_2} \tmfive'$}	
			\BinaryInfC{$ \tmfour'' \tmfive \partobindlong {n_1-1 + n_2} \tmfour' \tmfive' = \tmtwo$}
			\DisplayProof
			\ .
			\]
		\end{itemize}
	
	\item \emph{$\beta$-step}:
	$\tm = (\la\var \tmthree)\tmfour \partobind{n} \tmthree'\isub \var {\tmfour'} = \tmtwo$ with $\tmthree \partobind {n_1} \tmthree'$, $\tmfour \partobind {n_2} \tmfour''$ and $n = n_1 + \sizep{\tmthree'}\var \cdot n_2 +1 > 0$.
	We have $\tm = (\la\var \tmthree)\tmfour \toh \tmthree\isub\var\tmfour$ and by substitutivity of $\partobind n$ (\reflemma{partobind-subs}) $\tmthree\isub\var\tmfour \partobindlong{n_1 + \sizep{\tmthree'}\var \cdot n_2} \tmthree' \isub\var{\tmfour'} = \tmtwo$.

\end{itemize}
 	\item If $\tm \partob \tmtwo$ then $\tm \partobind n \tmtwo$ for some $n$. We prove the statement by induction $n$. 
 	By \emph{indexed split} (\refpoint{macro-head-indexed-split}), there are only two cases:
	\begin{itemize}
	  \item \emph{$\tm \partonoth\tmtwo$}. This is an instance of the statement (since $\toh^*$ is reflexive).
	  
	  \item $n>0$ and there exists $\tmfour$ such that $\tm \toh \tmfour \partobind {n-1} \tmtwo$. 
	  By \ih applied to $\tmfour \partobind{n-1} \tmtwo$, there is $\tmthree$ such that $\tmfour \toh^* \tmthree \partonoth \tmtwo$, and so $\tm \toh^* \tmthree \partonoth\tmtwo$.
	  \qedhere
	\end{itemize}
\end{enumerate}
\end{proof}

\begin{theorem}[Head factorization]
\label{thm:head-fact}
 If $\tm \tob^* \tmu$ then $\tm\toh^* \!\cdot\! \tonoth^* \tmu$.
\end{theorem}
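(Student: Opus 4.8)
The plan is to obtain the statement as a direct corollary of the abstract factorization result, now that the concrete macro-step structure has been set up. First I would observe that $\tob = \toh \cup \tonoth$, so that the theorem is precisely the assertion that the rewriting system $(\Lambda,\{\toh,\tonoth\})$, with $\toh$ as essential reduction and $\tonoth$ as inessential reduction, satisfies $\toh$-factorization, i.e.\ $\tob^* \subseteq \toh^* \!\cdot\! \tonoth^*$.

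Second, I would invoke \refprop{macro-head}, whose very conclusion is that $(\Lambda,\{\toh,\tonoth\})$ is a \macrostep\ system with respect to $\partob$ and $\partonoth$. Concretely, its three clauses supply exactly the conditions of \refdef{macrostep}: item~1 is \emph{merge}, item~3 is \emph{split}, and the \emph{macro} condition $\tonoth \subseteq \partonoth \subseteq \tonoth^*$ is the routine inclusion already recorded just before the proposition.

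Finally, I would apply \refprop{parallelsplit-gives-fact}, which states that every \macrostep\ system enjoys $\toe$-factorization. Instantiating $\toe := \toh$ and $\toi := \tonoth$ yields $\tob^* = (\toh \cup \tonoth)^* \subseteq \toh^* \!\cdot\! \tonoth^*$, which is exactly the claim. Thus the final step is a one-line combination of two results already proved.

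The genuine work lies not in this assembly but upstream, in the \emph{split} clause of \refprop{macro-head}. The crux there is the \emph{indexed split} property, established by induction on the index $n$ of $\partobind n$ and crucially relying on the substitutivity of indexed parallel reduction (\reflemma{partobind-subs}). This indexing is the device that makes the induction well-founded and lets the argument go through while sidestepping the more delicate substitutivity reasoning on which Takahashi's original proof depends.
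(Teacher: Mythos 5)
Your proposal is correct and is exactly the paper's own argument: the theorem is obtained by combining \refprop{macro-head} (which establishes that $(\Lambda,\{\toh,\tonoth\})$ is a \macrostep\ system, with the real work in the indexed split via \reflemma{partobind-subs}) with the abstract factorization result \refprop{parallelsplit-gives-fact}, using the inclusions $\tob = \toh \cup \tonoth$ and $\tonoth \subseteq \partonoth \subseteq \tonoth^*$. Your identification of where the genuine difficulty lies---the indexed split and the substitutivity of $\partobind{n}$, bypassing Takahashi's left-substitutivity requirements---also matches the paper's own discussion.
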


\paragraph{Head normalization.} We  
{show} that $(\Lambda, \{ \toh, \tonoth \})$ is an \external system (\refdef{extsys}); 
{thus} the essential normalization theorem (\refthm{par-normalization}) provides normalization. 
We already proved \emph{factorization} (\refthm{head-fact}, hence terminal factorization). 
We {verify} 
persistence and determinism (which implies uniform termination) of $\toh$. 

\newcounter{prop:essential-head}
\addtocounter{prop:essential-head}{\value{proposition}}
\begin{proposition}[Head essential system]
\label{prop:essential-head}
\NoteProof{propappendix:essential-head}
\begin{enumerate}
 \item\label{p:essential-head-persistence} \emph{Persistence}: if $\tm \toh \tmtwo$ and $\tm \tonoth \tmthree$ then $\tmthree \toh \tmfour$ for some $\tmfour$.
 \item\label{p:essential-head-determinism} \emph{Determinism}: if $\tm \toh \tmtwo_1$ and $\tm\toh \tmtwo_2$ then $\tmtwo_1 = \tmtwo_2$.
\end{enumerate}
Then, $(\Lambda, \set{ \toh, \tonoth})$ is an essential system.
\end{proposition}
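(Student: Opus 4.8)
The plan is to prove the two local properties separately and then assemble them into the three clauses of \refdef{extsys}.

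For \emph{determinism} (\refpoint{essential-head-determinism}), I would argue by induction on the structure of $\tm$, exploiting the fact that the three defining rules of $\toh$ have mutually exclusive applicability conditions, so that $\tm$ admits at most one head redex. If $\tm$ is a variable it is $\toh$-normal. If $\tm = \la\var\tmthree$, only the rule under abstraction can fire, and the step is fixed by the (unique, by \ih) head step of $\tmthree$. If $\tm = \tmthree\tmfour$, then either $\tmthree = \la\var\tmfive$ is an abstraction, in which case only the top-level $\beta$-rule applies and produces the single contractum $\tmfive\isub\var\tmfour$, or $\tmthree$ is not an abstraction, in which case only the left rule applies and the step is determined by the head step of $\tmthree$, again unique by \ih. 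Hence the head step is unique.

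For \emph{persistence} (\refpoint{essential-head-persistence}), I would proceed by induction on the derivation of $\tm \toh \tmtwo$, analysing which $\tonoth$-rule produced $\tmthree$. The guiding intuition is that a $\tonoth$-step fires a redex \emph{disjoint} from the head redex and therefore cannot destroy it. If $\tm = (\la\var\tmfour)\tmfive$ is contracted at top level, then every $\tonoth$-step leaves $\tmthree$ an application whose function part is still an abstraction, hence still a head redex, so $\tmthree \toh$. If $\tm = \la\var\tmfour$ with $\tmfour \toh$, the only available $\tonoth$-step acts under the abstraction, and \ih applied to $\tmfour$ supplies a head redex in the body. If $\tm = \tmfour\tmfive$ with $\tmfour$ not an abstraction and $\tmfour \toh$, then the $\tonoth$-step either lies in the argument $\tmfive$, leaving $\tmfour$ untouched so that $\tmfour\tmfive' \toh$ via the left rule, or lies inside $\tmfour$, where \ih gives a head redex in the residual $\tmfour''$ of $\tmfour$.

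The main obstacle is exactly this last subcase: to reapply the left head rule to $\tmfour''\tmfive$ I must know that $\tmfour''$ is \emph{still not} an abstraction. I would isolate this as an auxiliary observation, proved by inspection of the rules: a $\tonoth$-step never turns a non-abstraction into an abstraction, since every $\tonoth$-rule acting on an application returns an application, and no $\tonoth$-rule produces an abstraction out of an application. With this, $\tmfour''$ remains a non-abstraction, the left rule applies, and persistence follows. Finally, I would conclude the essential-system claim by checking the three conditions of \refdef{extsys}: \emph{persistence} is \refpoint{essential-head-persistence}; \emph{uniform termination} is immediate from \emph{determinism} (\refpoint{essential-head-determinism}), since a deterministic reduction admits a unique maximal $\toh$-sequence from each term, making weak and strong $\toh$-normalization coincide; and \emph{terminal factorization} is the special case of head factorization (\refthm{head-fact}) restricted to $\to$-sequences ending in a $\toh$-normal form. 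Together these yield that $(\Lambda, \{\toh,\tonoth\})$ is an essential system.
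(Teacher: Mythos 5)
Your proof is correct and follows essentially the same route as the paper: both local properties are established by induction on the relevant derivation (with the needed auxiliary observation that $\tonoth$ cannot turn a non-abstraction into an abstraction, which justifies re-applying the left head rule), and the essential-system clauses are then assembled exactly as the paper does---persistence from point 1, uniform termination as an immediate consequence of determinism, and terminal factorization as a special case of the already-proved head factorization theorem (\refthm{head-fact}).
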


\begin{theorem}[Head normalization]
		If $\tm \tob^* \tmtwo$ and $\tmtwo$ is a $\toh$-normal form, then $\toh$ terminates on $\tm$. 
\end{theorem}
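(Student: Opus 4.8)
The plan is to derive this statement as an immediate instance of the abstract essential normalization theorem (\refthm{par-normalization}), reusing the work already done to fit head reduction into the essential framework. First I would appeal to \refprop{essential-head}, which establishes that $(\Lambda, \{\toh, \tonoth\})$ is an essential system: its three ingredients are terminal factorization (which follows from head factorization, \refthm{head-fact}), persistence, and determinism of $\toh$ (which in particular yields uniform termination). With the system in place, the hypotheses of the abstract theorem are met verbatim: recalling that $\tob \,=\, \toh \cup \tonoth \,=\, \to$, the assumption $\tm \tob^* \tmtwo$ is exactly $\tm \to^* \tmtwo$, and $\tmtwo$ being a $\toh$-normal form is exactly $\tmtwo$ being $\toe$-normal.

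Applying \refthm{par-normalization} then gives that $\tm$ is strongly $\toh$-normalizing, \ie there is no infinite $\toh$-sequence starting from $\tm$. The last step is to reconcile this with the statement's conclusion ``$\toh$ terminates on $\tm$''. Here I would use that $\toh$ is deterministic (\refprop{essential-head}, \refpoint{essential-head-determinism}): determinism means there is a single maximal $\toh$-sequence from $\tm$, so strong $\toh$-normalization collapses to the assertion that this unique sequence is finite, which is precisely the termination of $\toh$ on $\tm$.

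I do not expect any genuine obstacle: all the real content has been factored into \refprop{essential-head} and \refthm{par-normalization}, and the argument is merely a bookkeeping composition of the two. The only point deserving a line of care is the passage from ``strongly $\toh$-normalizing'' to ``$\toh$ terminates'', which is where determinism is silently used; without it one would only be able to conclude that \emph{every} maximal $\toh$-sequence is finite, and the statement would have to be phrased in that weaker, non-deterministic form, exactly as in the general \refthm{par-normalization}.
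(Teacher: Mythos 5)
Your proposal is correct and follows exactly the paper's own route: the paper likewise establishes that $(\Lambda, \{\toh, \tonoth\})$ is an essential system via \refprop{essential-head} (persistence, determinism, and terminal factorization from \refthm{head-fact}) and then obtains the statement as a direct instance of \refthm{par-normalization}. Your closing remark on using determinism to pass from strong $\toh$-normalization to ``$\toh$ terminates on $\tm$'' is the same bookkeeping the paper leaves implicit.
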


\subsection{Comparison with Takahashi's proof of the split property.}\label{sect:Takahashi} Our technique differs from Takahashi's in that it is built on simpler properties: 
it exploits directly the substitutivity of $\partob$, which is instead not used by Takahashi. 
Takahashi's original argument \cite{DBLP:journals/iandc/Takahashi95} for the split property (\emph{if $\tm \partob \tmtwo$ then $\tm\toh^* \cdot \partonoth$}, what she calls the \emph{main lemma}) is by induction on the (concrete) definition of $\partob$ and relies on two substitutivity properties of $\toh$ and $\partonoth$. Looking at them as the reductions $\toe$ and $\toi$ of an essential system, these properties are:
  \begin{itemize}
    \item \emph{Left substitutivity of $\toe$}: if $\tmthree \toe \tmsix$ then $\tmthree \isub\var\tmfour \toe \tmsix \isub\var\tmfour$;
    \item \emph{Left substitutivity of $\partoint$}: if $\tmthree \partoint \tmsix$ then $\tmthree \isub\var\tmfour \partoint \tmsix \isub\var\tmfour$.
  \end{itemize}
From them, left substitutivity of the composed reduction $\toe^* \!\cdot\! \partoint$ easily follows. That is, Takahashi's proof of the split property  is by induction on $\tm \parto \tmtwo$ using left substitutivity of $\toe^*\cdot\partoint$ for the inductive case.

We 
exploit the substitutivity of $\partoind n$ instead of left substitutivity of $\toe$ and $\partoint$. 
It holds for a larger number of essential systems  because $\partoind{n}$ is simply a decoration of $\parto$, which is substitutive \emph{by design}. 
There are important systems where Takahashi's hypotheses do not hold. 
One such case is \lo reduction (\refsect{leftmost})---\emph{the} normalizing
	reduction of the $\l$-calculus---we discuss the failure of left substitutivity for \lo on p.~\pageref{thm:LO-fact-norm};
	another notable case is \ll reduction  (\refsect{stratified-cbn}); both  are \complete reductions for~$\beta$.

Let us 
point out where the idea behind our approach stems from. 
In a sense, Takahashi's proof works by chance: the 
split hypothesis is about a \emph{parallel} step $\partob$ but then the key fact used in the proof, left substitutivity of $\toh^* \!\cdot\! \partonoth$, does no longer stay in the borders of the parallel step, since
the prefix $\toh^*$ is an arbitrary long \sequence that may reduce \emph{created} steps. Our proof scheme instead only focuses on the (expected)
 substitutivity of $\partoind n$, independently of creations.

\section{\texorpdfstring{The Call-by-Value $\l$-Calculus}{The Call-by-Value lambda-Calculus}}
\label{sect:cbv-calculus}

In this short section, we introduce Plotkin's  call-by-value 
$\l$-calculus \cite{DBLP:journals/tcs/Plotkin75},
where
 $\beta$ reduction fires only when the argument is a value. In the next section we define \emph{weak} reduction and prove factorization and normalization theorems using the essential technique, exactly as done in the previous section for head reduction.
 
The set $\Lambda$ of terms is  the same as in \refsect{lambda}. Values,  call-by-value (\cbv) $\beta$-reduction $\tobv$, and 
\cbv indexed parallel reduction $\partobvind{n}$ are defined as follows:
 \begin{center}
 $\begin{array}{c@{\hspace{.5cm}}rll}
\textsc{Values} & \val &\grameq& \var \mid  \la\var\tm 
\end{array}$
\\[6pt]
$\begin{array}{c@{\ \sep}c@{\ \sep}c@{\ \sep}c}
 \multicolumn{4}{c}{\textsc{$\betav$ reduction}} 
 \\[5pt]
	\AxiomC{$\val$ value}
 	\UnaryInfC{$(\la\var\tm) \val \tobv \tm \isub\var{\val}$}
 	\DisplayProof
 	&
 		\AxiomC{$\tm \tobv \tm'$}
 	\UnaryInfC{$\la\var\tm\ \tobv \la\var\tm' $}
 	\DisplayProof 	
 	&
 	\AxiomC{$\tm \tobv \tm'$}	
 \UnaryInfC{$\tm  \tmtwo \tobv \tm'  \tmtwo$}
 \DisplayProof 
 	&
 	\AxiomC{$\tm \tobv \tm'$}	
 	\UnaryInfC{$\tmtwo  \tm \tobv \tmtwo  \tm'$}
 	\DisplayProof 
\end{array}$
\end{center}
\begin{center}
	
		\textsc{Indexed parallel $\betav$ reduction}\\[5pt]
	\small
	\def\ScoreOverhang{1pt}
	$\begin{array}{c@{\ \sep}c@{\ \sep}c@{\ \sep}c}
		\AxiomC{}
		\UnaryInfC{$\var \partobvind 0 \var$}
		\DisplayProof
		&
		\AxiomC{$\tm \partobvind n \tm'$}
		\UnaryInfC{$\la \var\tm \partobvind n  \la \var \tm'$}
		\DisplayProof
        &
		\AxiomC{$\tm \partobvind n \tm'$}
		\AxiomC{$\tmtwo \partobvind m \tmtwo'$}
		\BinaryInfC{$\tm \tmtwo \partobvindlong {n + m} \tm' \tmtwo'$}
		\DisplayProof
		&
		\AxiomC{$\tm \partobvind n \tm'$}		
		\AxiomC{$\val \partobvind m \val'$}
		\BinaryInfC{$(\la\var \tm)\val \partobvindlong {n + \sizep{\tm'}{\var} \cdot m +1} \tm'\isub \var {\val'}$} 
		\DisplayProof
	\end{array}$
	\def\ScoreOverhang{4pt}
\end{center}
The only difference with the usual parallel $\beta$ (defined in \refsect{lambda}) is the requirement that the argument is a value in the last rule. 
As before, the non-indexed parallel reduction $\partobv$ is simply obtained by  erasing the index, so that $\partobv \,=\, \bigcup_{n \in \nat} \partobvind{n}$. Similarly, it is easily seen that $\partobvind{0}$ is the identity relation on terms, $\tobv \,=\, \partobvind 1$ and $\partobvind n \,\subseteq\, \tobv^n$. 
Substitutivity of $\partobvind n$ is proved exactly as for $\partobind n$ (\reflemma{partobind-subs}).

\newcounter{l:tobv-subs}
\addtocounter{l:tobv-subs}{\value{lemma}}
\begin{lemma}[Substitutivity of $\partobvind n$]
\label{l:tobv-subs} 
\NoteProof{lappendix:tobv-subs}	
\label{p:tobv-subs-ind-par}  If $\tm \partobvind n \tm'$ and $\val \partobvind m \val'$, then $\tm \isub\var\val \allowbreak\partobvind{k} \tm' \isub\var{\val'}$ where $k = n + \sizep{\tm'}\var\cdot m$.
\end{lemma}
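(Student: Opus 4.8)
The plan is to prove the statement by induction on the derivation of $\tm \partobvind n \tm'$, inspecting its last rule, following verbatim the structure of the proof of \reflemma{partobind-subs}. There are four cases according to the last rule used --- \emph{variable}, \emph{abstraction}, \emph{application}, and \emph{$\betav$-step} --- and since the index annotations of $\partobvind{}$ obey exactly the same arithmetic as those of $\partobind{}$, the only genuinely new ingredient is checking that the value restriction carried by the $\betav$-step rule is respected after substitution. So I expect essentially all of the work to be a transcription of the call-by-name case, with one small value-bookkeeping check added.

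First I would dispatch the base cases. If $\tm = \var$, then $n = 0$ and $\tm' = \var$, so $\tm\isub\var\val = \val \partobvind m \val' = \tm'\isub\var{\val'}$; since $\sizep{\var}\var = 1$ this matches the required index $n + \sizep{\tm'}\var \cdot m = m$. If instead $\tm = \vartwo \neq \var$, then again $n = 0$, $\tm\isub\var\val = \vartwo \partobvind 0 \vartwo = \tm'\isub\var{\val'}$, and $\sizep{\vartwo}\var = 0$ gives index $0$. The \emph{abstraction} and \emph{application} cases are then settled by a direct appeal to the induction hypothesis on the immediate subderivations, combined with the additivity of the index and the identities $\sizep{\la\vartwo\tmthree'}\var = \sizep{\tmthree'}\var$ and $\sizep{\tmthree'\tmfour'}\var = \sizep{\tmthree'}\var + \sizep{\tmfour'}\var$; both are verbatim copies of the corresponding cases for $\partobind n$, as neither rule mentions values.

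The $\betav$-step is where the call-by-value constraint enters. Here $\tm = (\la\vartwo\tmthree)\tmfour$ with $\tmfour$ a value, $\tmthree \partobvind{n_\tmthree}\tmthree'$, $\tmfour \partobvind{n_\tmfour}\tmfour'$, and $n = n_\tmthree + \sizep{\tmthree'}\vartwo \cdot n_\tmfour + 1$; as usual I assume $\vartwo \neq \var$ and $\vartwo$ not free in $\val$ (hence not in $\val'$), so that $\tm\isub\var\val = (\la\vartwo\, \tmthree\isub\var\val)(\tmfour\isub\var\val)$ and $\tm'\isub\var{\val'} = \tmthree'\isub\var{\val'}\isub\vartwo{\tmfour'\isub\var{\val'}}$. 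Applying the induction hypothesis to the two subderivations and reassembling with the $\betav$-step rule yields the claim, the index computation being exactly the one carried out in \reflemma{partobind-subs}, using $\sizep{\tmthree'\isub\var{\val'}}\vartwo = \sizep{\tmthree'}\vartwo$ and $\sizep{\tm'}\var = \sizep{\tmthree'}\var + \sizep{\tmthree'}\vartwo \cdot \sizep{\tmfour'}\var$.

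The one point requiring care --- and the only real obstacle, though a minor one --- is that the reassembly actually instantiates a legal $\betav$-step rule, i.e.\ that the argument subject of the second premise, $\tmfour\isub\var\val$, is still a value. This is guaranteed by the easy observation that substituting a value for a variable in a value again yields a value: a variable maps to $\val$ or to itself, and an abstraction stays an abstraction. (Routinely, one also notes that $\partobvind{}$ maps values to values, so $\tmfour'$ is a value as well.) With this observation the $\betav$-redex $(\la\vartwo\, \tmthree\isub\var\val)(\tmfour\isub\var\val)$ is legal, the rule applies, and the induction closes with no change to the arithmetic.
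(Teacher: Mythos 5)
Your proof is correct and follows exactly the paper's approach: the paper proves this lemma by the same induction as for $\partobind n$ (\reflemma{partobind-subs}), with the same index arithmetic, and the only call-by-value-specific point is precisely the one you isolate, namely that values are closed under substitution of values (and preserved by $\partobvind{}$), so the $\betav$-step rule can be reassembled after substitution.
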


\section{Weak Call-by-Value Reduction, Essentially}
\label{sect:weak-cbv}
The \external step  we study for the \cbv $\l$-calculus is weak \cbv reduction $\tow$, which does not evaluate function bodies (the scope of $\l$-abstractions). Weak  \cbv reduction has practical importance, because it is the base of the ML/CAML
family of functional programming languages. We choose it also because it admits the natural and more general \emph{non-deterministic} presentation that follows, even if most of the literature rather presents it in a deterministic way.  
\begin{center}
\textsc{Weak \cbv reduction} \\[5pt]
$\begin{array}{c@{\ \ \sep}c@{\ \ \sep}c}  
 \AxiomC{}
 \UnaryInfC{$(\la\var \tm) \val \tow \tm\isub\var\val$}
 \DisplayProof 
 & 
  \AxiomC{$\tm \tow \tm'$}	
 \UnaryInfC{$\tm  \tmtwo \tow \tm'  \tmtwo$}
 \DisplayProof 
 &
 \AxiomC{$\tm \tow \tm'$}	
 \UnaryInfC{$\tmtwo  \tm \tow \tmtwo  \tm'$}
 \DisplayProof 
\end{array}$
\end{center}
Note that in the case of an
application there is no fixed order in the $\tow$-reduction of the left and right subterms. 
Such a non-determinism is harmless as $\tow$ satisfies a diamond-like property implying confluence, see 
\refpropp{essential-weak}{diamond} below. 
It is 
well-known that the diamond property implies uniform termination, because it implies 
that  all  maximal \sequence{s}  from a term 
have the same length. Such a further property is  known as \emph{random descent}, 
a special form of uniform termination already considered by Newman \cite{Newman} in 1942, see also van~Oostrom~\cite{DBLP:conf/rta/Oostrom07}.

 The inessential reduction $\tonotw$ and its parallel version $\partonotw$ are defined by:
\begin{center}
	 $\begin{array}{c\colspace c\colspace c\colspace ccc}
 \multicolumn{3}{c}{\textsc{$\neg$Weak reduction}}
 \\[5pt]
 	\AxiomC{$\tm \tobv \tmtwo$}
 	\UnaryInfC{$\la\var \tm \tonotw \la\var \tmtwo $}
 	\DisplayProof
 	&
 		\AxiomC{$\tm \tonotw \tm'$}	
 	\UnaryInfC{$\tm  \tmtwo \tonotw \tm'  \tmtwo$}
 	\DisplayProof 
 	&
 	\AxiomC{$\tm \tonotw \tm'$}	
 	\UnaryInfC{$\tmtwo  \tm \tonotw \tmtwo  \tm'$}
 	\DisplayProof 
 \end{array}$
\end{center}
\begin{center}
	\textsc{Parallel $\neg$weak reduction}
		\def\ScoreOverhang{1pt}
	\begin{align*}
  \AxiomC{}
    \UnaryInfC{$\var \partonotw \var$}
  \DisplayProof
	\qquad
	\AxiomC{$\tm \partobv \tm'$}	
	\UnaryInfC{$\la\var\tm \partonotw \la\var\tm'$}
	\DisplayProof 
	\qquad
	\AxiomC{$\tm \partonotw \tm'$}
	\AxiomC{$\tmtwo \partonotw \tmtwo'$}
	\BinaryInfC{$\tm \tmtwo \partonotw \tm' \tmtwo'$}
	\DisplayProof 
\end{align*}
	\def\ScoreOverhang{4pt}
\end{center}
It  is  immediate to check that $\tobv \, = \, \tow \cup \tonotw$ and
  $\tonotw \,\subseteq\, \partonotw \,\subseteq\, \tonotw^*$.

\paragraph{Weak \cbv factorization.}
We show that  $(\Lambda, \{ \tow, \tonotw \})$ is a \macrostep\ system, with $\partobv,\partonotw$ as \macrostep{s}. Merge and split are proved exactly as in \refsect{head}.

\newcounter{prop:macro-cbv}
\addtocounter{prop:macro-cbv}{\value{proposition}} 
\begin{proposition}[Weak \cbv macro-step system]
\label{prop:macro-cbv}
\NoteProof{propappendix:macro-cbv}
\begin{enumerate}
 \item\label{p:macro-cbv-merge} \emph{Merge}: if $\tm \partonotw\cdot \tow \tmthree$ then $\tm \partobv \tmthree$.
 \item\label{p:macro-cbv-indexed-split} \emph{Indexed split}: if $\tm \partobvind n \tmtwo$ then $\tm \partonotw\tmtwo$, or $n>0$ and $\tm \tow \!\cdot\! \partobvind {n-1} \tmtwo$.
 \item\label{p:macro-cbv-split} \emph{Split}: if $\tm \partobv \tmtwo$ then $\tm \tow^* \!\cdot\! \partonotw\tmtwo$.
\end{enumerate}
That is, $(\Lambda, \set{\tow, \tonotw})$ is a macro-step system with respect to $\partobv$ and $\partonotw$.
\end{proposition}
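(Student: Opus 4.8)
The plan is to instantiate the abstract macro-step framework exactly as was done for head reduction in \refprop{macro-head}, since the paper's \emph{Macro} condition is already the noted inclusion $\tonotw \subseteq \partonotw \subseteq \tonotw^*$. It therefore remains only to establish \emph{Merge} (\refpoint{macro-cbv-merge}) and \emph{Split} (\refpoint{macro-cbv-split}), after which the conclusion — that $(\Lambda,\set{\tow,\tonotw})$ is a macro-step system with respect to $\partobv$ and $\partonotw$ (\refdef{macrostep}) — is immediate, and factorization follows from \refprop{parallelsplit-gives-fact}. Two trivial preliminaries streamline the argument: that every parallel $\neg$weak step is a parallel step, i.e.\ $\partonotw \subseteq \partobv$; and that $\partonotw$ preserves the top-level shape of a term, so that a variable, an abstraction, or an application $\partonotw$-reduces only to a term of the same kind. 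Both follow by inspection of the $\partonotw$-rules.

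For \emph{Merge} I would argue by induction on the derivation of $\tm \partonotw \tmtwo$, analysing the ensuing step $\tmtwo \tow \tmthree$. The variable and abstraction cases are vacuous, since no $\tow$-step departs from a variable or from an abstraction. In the application case $\tm = \tmfour\tmfive \partonotw \tmfour'\tmfive'$ (so $\tmfour\partonotw\tmfour'$ and $\tmfive\partonotw\tmfive'$), I split on the weak step: if it reduces the left or the right subterm, I feed it to the induction hypothesis on that side and reassemble with the application rule of $\partobv$; if instead it fires a top redex (so $\tmfour'$ is an abstraction and $\tmfive'$ a value), the shape-preservation preliminary forces $\tmfour$ to be an abstraction and $\tmfive$ a value, and together with $\partonotw \subseteq \partobv$ this lets me recombine $\tm$ into a single $\partobv$-step via the $\betav$-rule. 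The value restriction is exactly what makes this last subcase go through.

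For \emph{Indexed split} (\refpoint{macro-cbv-indexed-split}) I would proceed by induction on the derivation of $\tm \partobvind n \tmtwo$, mirroring the head proof. The variable case and, crucially, the abstraction case land directly in the first disjunct $\tm\partonotw\tmtwo$: unlike for head reduction, the $\partonotw$-rule for abstractions already permits a full $\partobv$-step in the body, so no essential step under a $\lambda$ is ever required. In the application case $\tm=\tmfour\tmfive\partobvind n\tmfour'\tmfive'$ with $\tmfour\partobvind{n_1}\tmfour'$, $\tmfive\partobvind{n_2}\tmfive'$ and $n=n_1+n_2$, either both sides are $\partonotw$-steps and we are again in the first disjunct, or one side is not; applying the induction hypothesis to that side yields a leading $\tow$-step which, propagated through the context rules of $\tow$, gives $\tm \tow \!\cdot\! \partobvind{n-1}\tmtwo$. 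Finally, the $\betav$-step case $\tm=(\la\var\tmthree)\val\partobvind n \tmthree'\isub\var{\val'}$ with $\tmthree\partobvind{n_1}\tmthree'$ and $\val\partobvind{n_2}\val'$ fires the top redex, $\tm \tow \tmthree\isub\var\val$ (legitimate because the argument is a value), and closes with substitutivity of $\partobvind n$ (\reflemma{tobv-subs}): this gives $\tmthree\isub\var\val\partobvind{n_1+\sizep{\tmthree'}\var\cdot n_2}\tmthree'\isub\var{\val'}$, whose index is precisely $n-1$. \emph{Split} then follows by a plain induction on $n$ from indexed split, exactly as in the head case.

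The main obstacle is the application case of indexed split, where failure to be a $\partonotw$-step must be converted into an available essential step of the correct index. Here the non-determinism of $\tow$ is an asset rather than a complication: the context rules $\tm\tmtwo\tow\tm'\tmtwo$ and $\tmtwo\tm\tow\tmtwo\tm'$ carry \emph{no} side condition, so whichever subterm is responsible for the failure can be reduced directly, and I need neither commit to a left-to-right order nor check whether the left subterm is an abstraction — the two subtleties that complicate the corresponding step for head reduction. The only genuinely \cbv-specific care is to keep the various inversions compatible with values, which is immediate from the shape of the $\partonotw$- and $\partobvind n$-rules.
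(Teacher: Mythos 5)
Your proof is correct and takes essentially the same route as the paper's: merge by induction on the parallel $\neg$weak step, indexed split by induction on $\tm \partobvind{n} \tmtwo$ closing the $\betav$-case with substitutivity of $\partobvind{n}$ (\reflemma{tobv-subs}), and split by induction on the index --- exactly the head-reduction scheme that the paper says it transfers verbatim to weak \cbv. Your side observations (shape preservation under $\partonotw$, the abstraction case needing no induction because weak reduction does not enter $\lambda$-bodies, and the unconditioned context rules of $\tow$ making the application case simpler than for head) are precisely the adaptations the paper's appendix proof makes as well.
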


\begin{theorem}[Weak \cbv factorization]
\label{thm:weak-fact}
 If $\tm \tobv^* \tmtwo$ then $\tm\tow^* \!\cdot\! \tonotw^* \tmtwo$.
\end{theorem}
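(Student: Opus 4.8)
The plan is to obtain Weak \cbv factorization as an immediate corollary of the abstract factorization machinery developed in \refsect{abstract-rewriting}. The statement claims precisely $\tobv^* \,\subseteq\, \tow^* \!\cdot\! \tonotw^*$, which is exactly $\tow$-factorization for the rewriting system $(\Lambda, \{\tow, \tonotw\})$. Since \refprop{parallelsplit-gives-fact} (Factorization) guarantees that every \macrostep\ system satisfies $\toe$-factorization, it suffices to exhibit $(\Lambda, \{\tow, \tonotw\})$ as a \macrostep\ system. This is exactly what \refprop{macro-cbv} (Weak \cbv macro-step system) asserts: it provides $\partobv$ and $\partonotw$ as the witnessing \macrostep s, together with the Merge and Split properties required by \refdef{macrostep}.

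Concretely, first I would recall that the Macro condition $\tonotw \,\subseteq\, \partonotw \,\subseteq\, \tonotw^*$ was already observed when $\partonotw$ was defined, and that Merge and Split are precisely items \refpoint{macro-cbv-merge} and \refpoint{macro-cbv-split} of \refprop{macro-cbv}. Hence $(\Lambda, \{\tow, \tonotw\})$ is a \macrostep\ system with respect to $\partobv$ and $\partonotw$. Applying \refprop{parallelsplit-gives-fact} with $\toe \,=\, \tow$, $\toi \,=\, \tonotw$, $\parto \,=\, \partobv$, and $\partoint \,=\, \partonotw$ yields $\tobv$-factorization, namely $\tobv^* \,\subseteq\, \tow^* \!\cdot\! \tonotw^*$, which is the statement.

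The only genuine work is thus already packaged inside \refprop{macro-cbv}, and within it the split property via the \emph{indexed split} (item \refpoint{macro-cbv-indexed-split}). As the excerpt notes, Merge and Split are proved exactly as in \refsect{head}: Merge by a straightforward induction on $\tm \partonotw \tmtwo$, and Split by deriving it from the indexed split, which in turn is proved by induction on the derivation of $\tm \partobvind n \tmtwo$, using the Substitutivity of $\partobvind n$ (\reflemma{tobv-subs}) in the $\beta_v$-step case. So the main obstacle, were one to unfold the full argument, is the indexed split lemma for the \cbv setting, exactly mirroring \refpropp{macro-head}{macro-head-indexed-split}; the key subtlety is that the fired redex has a value argument, which is consistent with the value-restricted last rule of $\partobvind n$, so the head reduction argument transfers verbatim. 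But at the level of this theorem itself, there is no obstacle: it is a direct instantiation of \refprop{parallelsplit-gives-fact} once \refprop{macro-cbv} is in hand.
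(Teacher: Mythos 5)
Your proposal is correct and follows exactly the paper's route: the paper also obtains this theorem as a direct instantiation of \refprop{parallelsplit-gives-fact}, once \refprop{macro-cbv} (merge, indexed split via substitutivity of $\partobvind{n}$, and split) establishes that $(\Lambda,\{\tow,\tonotw\})$ is a \macrostep\ system with respect to $\partobv$ and $\partonotw$, together with the already-observed facts $\tobv = \tow \cup \tonotw$ and $\tonotw \subseteq \partonotw \subseteq \tonotw^*$. The only nitpick is terminological: what you call ``$\tobv$-factorization'' at the end should be called $\tow$-factorization, since factorization is named after the essential reduction.
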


\paragraph{Plotkin's left reduction.}
 The same argument at work in this section adapts easily to factorization with respect to leftmost weak reduction (used by Plotkin \cite{DBLP:journals/tcs/Plotkin75}), or to rightmost weak reduction, the two natural deterministic variants of $\tow$.

\paragraph{Weak \cbv normalization.} 
To obtain a normalization theorem for $\tow$ via the essential normalization theorem (\refthm{par-normalization}), we need 
persistence and uniform termination. 
The latter 
{follows from} the well-known diamond property of $\tow$.

\newcounter{essential-weak}
\addtocounter{essential-weak}{\value{proposition}}
\begin{proposition}[Weak \cbv essential system]
	\label{prop:essential-weak}
	\NoteProof{propappendix:essential-weak}
	\begin{enumerate}
	 \item\label{p:essential-weak-persistence} \emph{Persistence}: if $\tm \tow \tmtwo$ and $\tm \tonotw \tmthree$ then $\tmthree \tow \tmfour$ for some $\tmfour$.
	 \item\label{p:essential-weak-diamond} \emph{Diamond}: if $\tmtwo\ltow \!\cdot\! \tow\tmthree$ with $\tmtwo \neq \tmthree$ then $\tmtwo \tow \!\cdot\! \ltow \tmthree$.
	\end{enumerate}
Then, $(\Lambda, \set{ \tow, \tonotw})$ is an essential system.
\end{proposition}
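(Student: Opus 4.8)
My plan is to prove the two stated properties, \emph{persistence} and the \emph{diamond}, and then to read off the three conditions of \refdef{extsys} almost for free. Persistence is the first condition verbatim. Uniform termination (the second condition) follows from the diamond: as recalled in the text, the diamond forces all maximal $\tow$-sequences issued from a term to have the same length (\emph{random descent}), so a weakly $\tow$-normalizing term is strongly $\tow$-normalizing. Terminal factorization (the third condition) is the instance of weak \cbv factorization (\refthm{weak-fact}) restricted to reduction sequences ending in a $\tow$-normal form; since full factorization is already available, this is immediate. Thus the only genuine work lies in the two properties themselves.

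For persistence I would proceed by induction on the derivation of $\tm \tow \tmtwo$, analysing the shape of the step $\tm \tonotw \tmthree$ in each case. The inductive cases, where $\tm = \tm_1\tm_2$ and the weak step lies in $\tm_1$ or in $\tm_2$, are routine: either the $\tonotw$-step hits the same subterm, and the induction hypothesis yields a surviving weak redex there, or it hits the other subterm, and the original weak redex is untouched; in both situations the weak step is replayed by the relevant application rule. The delicate case is the root step, $\tm = (\la\var\tm_1)\val$ with $\val$ a value. Since $\tonotw$ never fires at the root of an application, $\tm \tonotw \tmthree$ must occur either inside $\la\var\tm_1$, giving $\tmthree = (\la\var\tm_1')\val$, or inside $\val$, giving $\tmthree = (\la\var\tm_1)\val'$. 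In the first case $\tmthree$ is plainly still a weak redex. In the second, the point is that reducing inside a value keeps it a value---$\var$ has no redex, and $\la\vartwo\tmfive \tonotw \la\vartwo\tmfive'$ is again an abstraction---so $\val'$ is a value and $\tmthree$ is again a weak redex. Either way $\tmthree \tow \tmfour$ for some $\tmfour$.

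For the diamond I would use structural induction on $\tm$. As $\tm$ has a $\tow$-step, it is neither a variable nor an abstraction, hence $\tm = \tm_1\tm_2$. The crucial observation is that a root redex carries a \emph{unique} weak redex: if $\tm = (\la\var\tm_0)\val$, the function $\la\var\tm_0$ is an abstraction and the argument $\val$ is a value, and neither admits a $\tow$-step, so the only weak redex of $\tm$ is the root. Hence, under the assumption $\tmtwo \neq \tmthree$, no root step can be involved, and both $\tm \tow \tmtwo$ and $\tm \tow \tmthree$ take place in $\tm_1$ or in $\tm_2$. If both are in the same subterm, that subterm is a proper subterm of $\tm$ and the induction hypothesis closes the local fork, which I lift back through the application rule. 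If one is in $\tm_1$ and the other in $\tm_2$, the two redexes are disjoint and the fork closes directly, as $\tm_1'\tm_2$ and $\tm_1\tm_2'$ both reduce in one step to $\tm_1'\tm_2'$.

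I expect the main obstacle to lie precisely in these two root analyses, where the value restriction of \cbv does the work: for persistence, that a $\tonotw$-step cannot destroy a root weak redex, which relies on values being stable under reduction under $\lambda$; and for the diamond, that a root redex has a unique weak redex, which is what excludes root steps whenever $\tmtwo \neq \tmthree$. Everything else is a mechanical case analysis driven by the inference rules of $\tow$ and $\tonotw$, mirroring the argument already used for head reduction in \refsect{head}.
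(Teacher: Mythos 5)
Your proposal is correct and follows essentially the same route as the paper: persistence and the diamond are proved by induction with the case analysis hinging on the value restriction (a $\tonotw$-step cannot destroy a root weak redex, and a root redex admits no other weak step), while uniform termination is read off from the diamond via random descent and terminal factorization is inherited from \refthm{weak-fact}. No gaps to report.
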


\begin{theorem}[Weak \cbv normalization]
	\label{thm:cbv-normalization}
		If $\tm \tobv^* \tmtwo$ and $\tmtwo$ is a $\tow$-normal form, then $\tm$ is strongly $\tow$-normalizing.
\end{theorem}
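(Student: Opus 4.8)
The plan is to recognize this statement as a direct instance of the abstract essential normalization theorem (\refthm{par-normalization}), instantiated on the rewriting system $(\Lambda,\{\tow,\tonotw\})$ with essential reduction $\toe = \tow$, inessential reduction $\toi = \tonotw$, and full reduction $\to \,=\, \tobv \,=\, \tow \cup \tonotw$. Once we know that this system is an \external system, the theorem follows with no further work: the hypotheses $\tm \tobv^* \tmtwo$ and $\tmtwo$ being a $\tow$-normal form are exactly the hypotheses of \refthm{par-normalization}, whose conclusion is precisely that $\tm$ is strongly $\tow$-normalizing.

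So the only thing to discharge is that $(\Lambda,\{\tow,\tonotw\})$ meets the three conditions of \refdef{extsys}. First I would invoke \refprop{essential-weak}, which establishes \emph{persistence} directly (\refpropp{essential-weak}{persistence}) and the \emph{diamond} property of $\tow$ (\refpropp{essential-weak}{diamond}). The diamond property yields \emph{uniform termination} for free: as recalled in the discussion preceding that proposition, diamond forces all maximal $\tow$-\sequence{s} from a term to have the same length (random descent), so weak and strong $\tow$-normalization coincide. For \emph{terminal factorization} I would simply specialize the already-proved weak \cbv factorization theorem (\refthm{weak-fact}): since every $\tobv$-\sequence factors as $\tow^* \!\cdot\! \tonotw^*$, in particular so does every $\tobv$-\sequence ending in a $\tow$-normal form, which is exactly terminal factorization. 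With these three conditions in hand, $(\Lambda,\{\tow,\tonotw\})$ is an \external system.

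Then I apply \refthm{par-normalization}, and it is worth recalling how its proof specializes here, to see where each property intervenes: terminal factorization gives $\tm \tow^* \tms \tonotw^* \tmtwo$; the factor $\tms$ must itself be $\tow$-normal, for otherwise iterating persistence along the tail $\tms \tonotw^* \tmtwo$ would produce a $\tow$-step out of $\tmtwo$, contradicting that $\tmtwo$ is $\tow$-normal; hence $\tm$ is weakly $\tow$-normalizing, and uniform termination upgrades this to strong $\tow$-normalization.

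The honest assessment is that there is no real obstacle left at this stage, since all the mathematical content has already been discharged upstream. The genuinely delicate ingredients are the two parts of \refprop{essential-weak} --- establishing the diamond property of $\tow$ (which must account for the non-deterministic choice of reducing the left or the right subterm of an application) and persistence of $\tow$ under $\tonotw$-steps --- together with the factorization theorem \refthm{weak-fact}. Given those, the normalization statement is a corollary of the abstract machinery, and the main thing to get right is the bookkeeping of matching $\tow$ with $\toe$, $\tonotw$ with $\toi$, and $\tobv$ with $\to$.
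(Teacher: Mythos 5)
Your proposal is correct and follows exactly the paper's own route: the paper derives this theorem by noting that persistence and the diamond property (Proposition~\ref{prop:essential-weak}) together with the factorization theorem (Theorem~\ref{thm:weak-fact}, which in particular gives terminal factorization) make $(\Lambda,\{\tow,\tonotw\})$ an \external\ system, and then applies the essential normalization theorem (Theorem~\ref{thm:par-normalization}). Your instantiation of $\toe$, $\toi$, $\to$ and the observation that diamond yields uniform termination via random descent match the paper's argument precisely.
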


\cbv is often considered with respect to \emph{closed} terms only. 
In such a case the $\tow$-normal forms are exactly the (closed) values.
Then weak \cbv normalization (\refthm{cbv-normalization}) implies the following, analogous to Corollary 1 in Plotkin \cite{DBLP:journals/tcs/Plotkin75} (the  result is there obtained from standardization). 

\begin{corollary} Let $\tm$ be a closed term. 
	If $\tm \tobv^* \val$ for some value $\val$, then every maximal $\tow$-\sequence from $\tm$ is finite and ends in a value.
\end{corollary}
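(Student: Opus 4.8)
The plan is to derive the corollary almost immediately from the weak \cbv normalization theorem (\refthm{cbv-normalization}), once I have pinned down the interaction between closedness and $\tow$-normal forms. The two things to establish are that every maximal $\tow$-\sequence from $\tm$ is \emph{finite} and that each such \sequence \emph{ends in a value}; both will follow from strong $\tow$-normalization of $\tm$ together with the characterization, recalled just before the statement, of closed $\tow$-normal forms as (closed) values.

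First I would record the routine fact that closedness is preserved along reduction: since $\tow \,\subseteq\, \tobv$ and $\tobv$ creates no free variables, every $\tmtwo$ with $\tm \tobv^* \tmtwo$ (in particular every $\tmtwo$ with $\tm \tow^* \tmtwo$) is closed whenever $\tm$ is. Consequently the value $\val$ with $\tm \tobv^* \val$ is a \emph{closed} value, hence a $\tow$-normal form. I can therefore apply \refthm{cbv-normalization} to the \sequence $\tm \tobv^* \val$, taking $\val$ as the required $\tow$-normal endpoint, and conclude that $\tm$ is strongly $\tow$-normalizing. This already yields finiteness: no infinite $\tow$-\sequence issues from $\tm$, so every maximal $\tow$-\sequence from $\tm$ is finite.

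It then remains to check that a maximal (hence finite) $\tow$-\sequence $\tm \tow^* \tmthree$ ends in a value. By maximality $\tmthree$ is $\tow$-normal, and by preservation of closedness $\tmthree$ is closed; invoking once more the characterization of closed $\tow$-normal forms gives that $\tmthree$ is a value, as desired. If one wished to say more, the diamond property \refpropp{essential-weak}{diamond} even forces all these maximal \sequence{s} to have the same length, but this random-descent information is not needed for the statement.

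The main obstacle is not in the abstract machinery---the heavy lifting is already performed by \refthm{cbv-normalization}---but in correctly discharging its hypotheses: one must verify that $\val$ genuinely qualifies as a $\tow$-normal form, and that the endpoint of an arbitrary maximal \sequence is a value. Both points hinge on the same two term-specific ingredients, namely the stability of closedness under $\tobv$ and the fact that for \emph{closed} terms the $\tow$-normal forms coincide exactly with the values. Once these are in hand, the corollary is a direct instantiation of the normalization theorem.
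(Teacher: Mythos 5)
Your proof is correct and follows exactly the route the paper intends: the corollary is obtained by instantiating the weak \cbv normalization theorem, using the fact (stated just before the corollary) that for closed terms the $\tow$-normal forms are exactly the closed values. The paper leaves this argument implicit, and your write-up merely makes explicit the routine ingredients it relies on (preservation of closedness under reduction, values being $\tow$-normal), so there is no substantive difference.
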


\section{Leftmost-Outermost Reduction, Essentially}
\label{sect:leftmost}
Here we apply our technique to leftmost-outermost (shortened to \emph{\LO}) reduction $\tolo$, the first example of \emph{\complete} 
reduction for $\tob$. 
The technical development is slightly different from the ones in the previous sections, as factorization relies on persistence. 
The same shall happen for the \complete \ll reduction of the next section. 
It seems to be a feature of \complete reductions~for~$\tob$.

\paragraph{\LO and $\neg$\LO reductions.} The definition of \lo reduction relies on two mutually recursive predicates defining normal and neutral terms (neutral = normal and not an abstraction):
\begin{center}
		\def\ScoreOverhang{2pt}
\textsc{Normal and neutral terms}\\[5pt]
{\small$\begin{array}{c@{\qquad}c@{\qquad}c@{\qquad}c@{\qquad}c}      
  \AxiomC{}
  \UnaryInfC{$\var$ is neutral}
  \DisplayProof      
  &
  \AxiomC{$\tm$ is neutral}
  \AxiomC{$\tm$ is normal}
  \BinaryInfC{$\tm\tmtwo$ is neutral}  
  \DisplayProof      
  &
  \AxiomC{$\tm$ is neutral}
  \UnaryInfC{$\tm$ is normal}
  \DisplayProof
  &
  \AxiomC{$\tm$ is normal}
  \UnaryInfC{$\la\var\tm$ is normal}
  \DisplayProof
\end{array}$ }
	\def\ScoreOverhang{4pt}
\end{center}
Dually, a term is not neutral if it is an abstraction or it is not normal. It is standard that these predicates correctly capture $\beta$ normal forms and neutrality.

The reductions of the \lo macro-step system are:
 \begin{center}
 $\begin{array}{c\colspace \colspace ccccc}
 \multicolumn{2}{c}{\textsc{\LO reduction}}
 \\[5pt]
	\AxiomC{}
	\UnaryInfC{$(\la\var \tm) \tmtwo \tolo \tm\isub\var\tmtwo$}
	\DisplayProof
	&
	\AxiomC{$\tm \tolo \tmtwo$}
	\AxiomC{$\tm\neq\la\var\tm'$}
	\BinaryInfC{$\tm \tmthree \tolo \tmtwo \tmthree$}
	\DisplayProof \\\\
	
	\AxiomC{$\tm \tolo \tmtwo$}
	\UnaryInfC{$\la\var\tm \tolo \la\var\tmtwo$}
	\DisplayProof 
	&
	  \AxiomC{$\tmthree$ is neutral}
	\AxiomC{$\tm \tolo \tmtwo$}
	\BinaryInfC{$\tmthree \tm \tolo \tmthree \tmtwo$}
	\DisplayProof 
\end{array}$
\end{center}
\begin{center}
 \begin{tabular}{c}
$\begin{array}{c\colspace \colspace ccccc}
\multicolumn{2}{c}{\textsc{$\neg$\LO reduction}}
\\[5pt]
	\AxiomC{$\tm \tob \tm'$}
	\UnaryInfC{$(\la\var \tm) \tmtwo \tonotlo (\la\var \tm') \tmtwo$}
	\DisplayProof
	&
	\AxiomC{$\tm$ is not neutral}
	\AxiomC{$\tmtwo \tob \tmtwo'$}
	\BinaryInfC{$\tm \tmtwo \tonotlo \tm \tmtwo '$}
	\DisplayProof
	\end{array}$
	\\\\
$	\begin{array}{c\colspace \colspace c\colspace \colspace cccc}
	\AxiomC{$\tm \tonotlo \tm'$}
	\UnaryInfC{$\tm \tmtwo \tonotlo \tm' \tmtwo $}
	\DisplayProof 
	&
	\AxiomC{$\tm \tonotlo \tm'$}
	\UnaryInfC{$\tmtwo \tm  \tonotlo  \tmtwo \tm'$}
	\DisplayProof 
	&	
	\AxiomC{$\tm \tonotlo \tm'$}
	\UnaryInfC{$\la\var\tm \tonotlo \la\var\tm'$}
	\DisplayProof 
\end{array}$
\end{tabular}
\end{center}
\begin{center}
\begin{tabular}{c}
	\def\ScoreOverhang{1pt}
$\begin{array}{c\colspace \colspace ccccc}
\multicolumn{2}{c}{\textsc{Parallel $\neg$\LO reduction}}
\\[5pt]
  \AxiomC{}
    \UnaryInfC{$\var \partonotlo \var$}
  \DisplayProof
	&
	\AxiomC{$\tm$ is not neutral}
	\AxiomC{$\tm \partonotlo \tm'$}
	\AxiomC{$\tmtwo \partob \tmtwo'$}	
	\TrinaryInfC{$\tm \tmtwo \partonotlo \tm' \tmtwo '$}
	\DisplayProof
\end{array}$
\\\\
\def\ScoreOverhang{1pt}
$\begin{array}{c\colspace \colspace c\colspace \colspace cccc}
	\AxiomC{$\tm \partob \tm'$}
	\AxiomC{$\tmtwo \partob \tmtwo''$}
	\BinaryInfC{$(\la\var \tm) \tmtwo \partonotlo (\la\var \tm') \tmtwo'$}
	\DisplayProof
	&
	\AxiomC{$\tm \partonotlo \tm'$}	
	\UnaryInfC{$\la\var\tm \partonotlo \la\var\tm'$}
	\DisplayProof 
	&
	\AxiomC{$\tm$ neutral}
	\AxiomC{$\tmtwo \partonotlo \tmtwo'$}
	\BinaryInfC{$\tm \tmtwo \partonotlo \tm \tmtwo'$}
	\DisplayProof 
\end{array}$
\def\ScoreOverhang{4pt}
\end{tabular}

\end{center}
As 
usual, easy inductions show that $\tob \,=\, \tolo \cup \tonotlo$ and $\tonotlo \subseteq \partonotlo \subseteq \tonotlo^*$.

Factorization depends on persistence, which is why for \lo reduction most essential properties are proved before factorization. The proofs are easy inductions.

\newcounter{prop:essential-left}
\addtocounter{prop:essential-left}{\value{proposition}}
\begin{proposition}[\lo essential properties]
\label{prop:essential-left}
\NoteProof{propappendix:essential-left}
\begin{enumerate}
\item \label{p:essential-left-completeness}
\emph{\Completeness}: if $\tm \tob \tmtwo$ then there exists $\tmthree$ such that $\tm \tolo \tmthree$.
\item\label{p:essential-left-determinism} \emph{Determinism}: if $\tm \tolo \tmtwo_1$ and $\tm\tolo \tmtwo_2$ then $\tmtwo_1 = \tmtwo_2$.
\item\label{p:essential-left-persistence} \emph{Persistence}: if $\tm \tolo \tmtwo_1$ and $\tm \tonotlo \tmtwo_2$ then $\tmtwo_2 \tolo \tmthree$ for some $\tmthree$.
\end{enumerate}
\end{proposition}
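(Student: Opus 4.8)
The plan is to prove all three items by structural induction on the term $\tm$, relying throughout on the standard identification (recalled just above the statement) of the \emph{normal}/\emph{neutral} predicates with $\tob$-normal/neutral terms; in particular a neutral term has no $\tob$-redex, and hence admits neither a $\tolo$- nor a $\tonotlo$-step. For \emph{completeness} I would actually prove the slightly strengthened statement: if $\tm$ is not normal then $\tm \tolo \tmtwo$ for some $\tmtwo$ (the hypothesis $\tm \tob \tmtwo$ says exactly that $\tm$ is not normal). The induction on $\tm$ is immediate: a non-normal abstraction $\la\var\tmthree$ has $\tmthree$ non-normal, so the abstraction rule applies via the \ih; a non-normal application $\tm_1\tm_2$ splits into three cases according to $\tm_1$ --- if $\tm_1$ is an abstraction the root rule fires; if $\tm_1$ is not normal the left rule fires via the \ih (its side condition that $\tm_1$ is not an abstraction being met); and if $\tm_1$ is neutral then the redex must lie in $\tm_2$, which is thus non-normal, so the neutral-argument rule fires via the \ih.

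For \emph{determinism}, the same case analysis shows that at each term exactly one $\tolo$-rule is applicable and the reduced subterm is unique. The mutual exclusivity is guaranteed by the side conditions: on an application $\tm_1\tm_2$, the root rule requires $\tm_1$ to be an abstraction, the left rule requires $\tm_1$ reducible and not an abstraction, and the neutral-argument rule requires $\tm_1$ neutral --- but a neutral term is normal, hence (by completeness) not $\tolo$-reducible, so the left and neutral-argument rules can never both apply. Uniqueness of the reduct then follows from the \ih applied to the unique reducing subterm.

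\emph{Persistence} is the delicate point and also goes by induction on $\tm$; since $\tm$ both $\tolo$- and $\tonotlo$-reduces, it is not a variable. If $\tm=\la\var\tmthree$, both steps act under the abstraction, and the \ih on $\tmthree$ yields a $\tolo$-redex in the $\tonotlo$-reduct. The core is the application case $\tm=\tm_1\tm_2$, split on $\tm_1$: (i) if $\tm_1$ is an abstraction, the $\tolo$-step is the root redex, while \emph{any} $\tonotlo$-step keeps an abstraction in function position (reductions in the body, in the argument, or a non-touching propagation all preserve this shape), so the root redex survives in $\tmtwo_2$; (ii) if $\tm_1$ is reducible but not an abstraction, the $\tolo$-step is the left rule, and a $\tonotlo$-step either leaves $\tm_1$ untouched (reducing the argument, so the left $\tolo$-redex survives) or reduces $\tm_1$ itself, in which case the \ih on $\tm_1$ gives a $\tolo$-redex inside the reduced $\tm_1$, still a non-abstraction; (iii) if $\tm_1$ is neutral, the $\tolo$-step is the neutral-argument rule (so $\tm_2$ is $\tolo$-reducible), and the only $\tonotlo$-rule that can fire reduces $\tm_2$ --- the body rule is blocked because $\tm_1$ is not an abstraction, the argument rule because $\tm_1$ is neutral, and the function-propagation rule because $\tm_1$ has no redex --- so the \ih on $\tm_2$ yields a $\tolo$-redex in the reduced $\tm_2$ while $\tm_1$ stays neutral.

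I expect the main obstacle to be the bookkeeping in cases (ii) and (iii): one has to check the auxiliary fact that $\tonotlo$ never turns a non-abstraction into an abstraction (a one-line separate induction, needed to keep the left and neutral-argument $\tolo$-rules applicable after the $\tonotlo$-step), and one has to use neutrality of $\tm_1$ to simultaneously block three of the $\tonotlo$-rules, leaving only the argument to be reduced. Everything else reduces to matching the inductive hypotheses against the side conditions of the rules.
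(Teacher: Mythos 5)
Your proposal is correct and takes essentially the same route as the paper's own proof: easy structural inductions, arguing by inversion on the rules and exploiting the mutual exclusivity of their side conditions (abstraction / reducible non-abstraction / neutral function position), plus the auxiliary observation that a $\tonotlo$-step never turns an application into an abstraction. One small mislabel: in the determinism argument, a neutral term admits no $\tolo$-step because $\tolo \,\subseteq\, \tob$ (i.e.\ soundness, from $\tob = \tolo \cup \tonotlo$), not ``by completeness''---the fact itself is the one you correctly stated at the outset, so nothing breaks.
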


\newcounter{prop:macro-left}
\addtocounter{prop:macro-left}{\value{proposition}}
\begin{proposition}[\lo macro-step system]
\label{prop:macro-left}
\NoteProof{propappendix:macro-left}\hfill
\begin{enumerate}
	\item \emph{Merge}: \label{p:macro-left-merge}
if $\tm \partonotlo \cdot \tolo \tmthree$ then $\tm \partob \tmthree$.
	\item \label{p:macro-left-indexed-split}
	\emph{Indexed split}: if $\tm \partobind n \tmtwo$ then $\tm \partonotlo \tmtwo$, or $n>0$ and  $\tm \tolo \!\cdot\! \partobind {n-1} \tmtwo$.
	\item\label{p:macro-step-split} \emph{Split}: if $\tm \partob \tmtwo$ then $\tm \tolo^* \!\cdot\! \partonotlo \tmtwo$.
	\end{enumerate}
	That is, $(\Lambda, \set{\tolo, \tonotlo})$ is a macro-step system with respect to $\partob$ and $\partonotlo$.
\end{proposition}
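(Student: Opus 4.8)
The plan is to establish the three properties in the order \emph{merge}, \emph{indexed split}, \emph{split}, mirroring the head case (\refprop{macro-head}) but now exploiting the essential properties of \refprop{essential-left}---precisely \emph{persistence} and \emph{\Completeness}, which the analogous head argument does not need. Throughout I would use the easy facts $\tonotlo\subseteq\partonotlo\subseteq\tonotlo^*$ (the \emph{Macro} condition, already noted in the text) and $\partonotlo\subseteq\partob$, both immediate inductions on the defining rules; the latter lets me turn $\partonotlo$-premises into $\partob$-premises.

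For \emph{merge} (\refpoint{macro-left-merge}) I would argue by induction on the derivation of $\tm\partonotlo\tmtwo$ and then inspect the shape of the step $\tmtwo\tolo\tmthree$. The variable case is vacuous and the abstraction case follows from the \ih; the case $(\la\var\tmfour)\tmfive\partonotlo(\la\var\tmfour')\tmfive'$ is closed by the $\beta$-rule of $\partob$, since the \lo-step of $(\la\var\tmfour')\tmfive'$ necessarily fires the top redex. The delicate case is an application $\tmfour\tmfive\partonotlo\tmfour'\tmfive'$ from the rule with $\tmfour$ not neutral, $\tmfour\partonotlo\tmfour'$ and $\tmfive\partob\tmfive'$: here I split on how $\tmfour'\tmfive'\tolo\tmthree$ is derived. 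If it fires the top redex (forcing $\tmfour$ to be an abstraction) or reduces inside $\tmfour'$, then the $\beta$-rule of $\partob$, respectively the \ih on $\tmfour\partonotlo\tmfour'$ combined with the application rule, yields $\tm\partob\tmthree$. The remaining possibility---the \lo-step descending into $\tmfive'$ because $\tmfour'$ is neutral---is the one I expect to be the \textbf{main obstacle}, and it is exactly where the essential properties enter: as $\tmfour$ is not neutral it is either an abstraction (whence $\tmfour'$ is an abstraction, hence not neutral) or non-normal, and in the latter case \Completeness (\refpropp{essential-left}{completeness}) provides a \lo-step from $\tmfour$, which iterated \emph{persistence} (\refpropp{essential-left}{persistence}) propagates along $\tmfour\,\tonotlo^*\,\tmfour'$, so that $\tmfour'$ is non-normal and therefore not neutral. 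Thus this subcase cannot occur. The symmetric rule with $\tmfour$ neutral and $\tmfive\partonotlo\tmfive'$ is routine, since the \lo-step is then forced into $\tmfive'$ and the \ih applies.

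For \emph{indexed split} (\refpoint{macro-left-indexed-split}) I would copy the structure of the head proof (\refpoint{macro-head-indexed-split}): induction on $\tm\partobind n\tmtwo$, using $\partobind n\subseteq\partob$. The variable and abstraction cases are immediate or follow from the \ih; the $\beta$-case reduces the top redex $(\la\var\tmthree)\tmfour\tolo\tmthree\isub\var\tmfour$ and then invokes \emph{substitutivity of $\partobind n$} (\reflemma{partobind-subs}) to get $\tmthree\isub\var\tmfour\partobind{n-1}\tmthree'\isub\var\tmfour'$. For an application $\tmfour\tmfive\partobind n\tmfour'\tmfive'$ with $\tmfour\partobind{n_1}\tmfour'$ and $\tmfive\partobind{n_2}\tmfive'$ I would distinguish: if $\tmfour$ is an abstraction then $\tm\partonotlo\tmtwo$ directly; otherwise I test whether $\tmfour\tmfive\partonotlo\tmfour'\tmfive'$, and if it fails I apply the \ih to whichever of $\tmfour,\tmfive$ carries the \lo-redex (the left one when $\tmfour$ is non-neutral, the right one when $\tmfour$ is neutral, in which case $\tmfour'=\tmfour$ and $n_1=0$), firing one $\tolo$-step and recombining the residual $\partobind{n-1}$-step via the application rule. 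Notably, this part requires no essential property---it is a pure decoration-of-indices argument exactly as for head reduction.

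Finally, \emph{split} (\refpoint{macro-step-split}) follows from indexed split by induction on $n$, identically to the head case: if $\tm\partob\tmtwo$ then $\tm\partobind n\tmtwo$ for some $n$, and indexed split either gives $\tm\partonotlo\tmtwo$ at once or a step $\tm\tolo\cdot\partobind{n-1}\tmtwo$ to which the \ih applies. Together with the inclusions $\tonotlo\subseteq\partonotlo\subseteq\tonotlo^*$, these three properties establish that $(\Lambda,\{\tolo,\tonotlo\})$ is a macro-step system with respect to $\partob$ and $\partonotlo$, as required.
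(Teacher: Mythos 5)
Your proposal is correct and follows essentially the same route as the paper: merge by induction on $\tm \partonotlo \tmtwo$, with \completeness and iterated persistence handling the application case whose left subterm is not neutral; indexed split by induction using substitutivity of $\partobind{n}$ (\reflemma{partobind-subs}) with no essential properties needed; and split by induction on the index. The only cosmetic difference is that you use persistence to rule out by contradiction the subcase where the \lo-step descends into the right subterm, whereas the paper uses persistence (plus determinism of $\tolo$) to identify the given \lo-step as one descending into the left subterm---the ingredients and overall structure are identical.
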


\begin{proof}
We only show the merge property, and only the case that requires persistence---the rest of the proof is in \SLV{the Appendix of \cite{long}}{the Appendix}. 
The proof of the merge property is by induction on $\tm \partonotlo\tmtwo$. Consider the rule
	\[\AxiomC{$\tmfour$ not neutral}
	\AxiomC{$\tmfour \partonotlo \tmfour'$}
		\AxiomC{$\tmfive \partob \tmfive'$}
	\TrinaryInfC{$\tm = \tmfour \tmfive \partonotlo \tmfour' \tmfive ' = \tmtwo$}
	\DisplayProof\ .\]
	Since $\tmfour$ is not neutral, it is an abstraction or it is not normal. 
	If $\tmfour$ is an abstraction this case continues as the easy case of $\partonotlo$ for $\beta$-redexes (see \SLV{the Appendix of \cite{long}}{the Appendix}). 
	Otherwise, $\tmfour$ is not normal, \ie $\tmfour \tob \tmsix$. By \completeness $\tmfour \tolo \tmsix'$ for some $\tmsix'$, and by persistence (\refpropp{essential-left}{persistence}) $\tmfour'\tolo \tmfour''$ for 
	some $\tmfour''$. The hypothesis becomes $\tm = \tmfour \tmfive \partonotlo \tmfour' \tmfive ' \tolo \tmfour'' \tmfive' = \tmthree$ with $\tmfour \partonotlo \tmfour'  \tolo \tmfour''$.
	By \ih, $\tmfour \partob \tmfour''$. Then,
	\[\AxiomC{$\tmfour \partob \tmfour''$}
	\AxiomC{$\tmfive\partob \tmfive'$}
	\BinaryInfC{$\tm = \tmfour \tmfive \partob \tmfour'' \tmfive' = \tmthree$}
	\DisplayProof
	\ .
	\qedhere\]
\end{proof}

\paragraph{\lo split.} As pointed out in \refsect{Takahashi}, Takahashi's proof \cite{DBLP:journals/iandc/Takahashi95} of the split property relies on left substitutivity of head reduction, that is, if $\tm \toh \tmtwo$ then $\tm \isub\var\tmthree \toh \tmtwo \isub\var\tmthree$ for all terms $\tmthree$. Such a property does not hold for \lo reduction. For instance, $\tm = \var (I \vartwo) \tolo \var \vartwo = \tm'$ but 
$\tm\isub\var {\la\varthree \varthree\varthree} = (\la\varthree \varthree\varthree)(I\vartwo) \not\tolo (\la\varthree \varthree\varthree) \vartwo = \tm'\isub\var {\la\varthree \varthree\varthree}$.
Therefore her proof technique for factorization cannot prove the factorization theorem for \lo reduction (see also footnote \ref{f:direct}, page \pageref{f:direct}).\medskip

From \refprop{macro-left} it follows the factorization theorem for \lo reduction, that together with \refprop{essential-left} proves that $(\Lambda, \set{\tolo, \tonotlo})$ is an essential system, giving normalization of $\tolo$ for $\tob$.
\begin{theorem}
\label{thm:LO-fact-norm}
\hfill
\begin{enumerate}
\item  \emph{\lo factorization}: if $\tm \tob^* \tmu$ then $\tm\tolo^* \!\cdot\!~ \tonotlo^* \tmu$.

\item \emph{\lo normalization}: $\tolo$ is a normalizing reduction for $\tob$.
	\end{enumerate}
\end{theorem}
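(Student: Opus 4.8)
The plan is to obtain both statements purely by assembling the abstract machinery of \refsect{abstract-rewriting} with the two concrete propositions already established for \lo reduction, namely \refprop{essential-left} and \refprop{macro-left}. No further computation on the structure of terms is needed: all the work has been localized in those propositions, so what remains is to check that the hypotheses of the relevant abstract theorems are met.

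For part (1), \lo factorization, I would simply invoke \refprop{macro-left}, which states that $(\Lambda, \set{\tolo, \tonotlo})$ is a \macrostep\ system with respect to $\partob$ and $\partonotlo$. By \refprop{parallelsplit-gives-fact}, every \macrostep\ system satisfies $\toe$-factorization; instantiating $\toe := \tolo$ and $\toi := \tonotlo$ yields exactly $\tob^* \, \subseteq \, \tolo^* \!\cdot\! \tonotlo^*$, which is the claim.

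For part (2), \lo normalization, I would show that $(\Lambda, \set{\tolo, \tonotlo})$ is a \emph{\complete} \external system in the sense of \refdef{extsys}, and then conclude by \refthm{complete-normalization}. The four ingredients are all at hand. \emph{Completeness of the strategy}: since $\tolo \,\subseteq\, \tob$, every $\tob$-normal form is $\tolo$-normal, while \refpropp{essential-left}{completeness} gives the converse (if $\tm \tob \tmtwo$ then $\tm \tolo \tmthree$ for some $\tmthree$, so a $\tolo$-normal term is $\tob$-normal); hence $\tolo$ and $\tob$ have the same normal forms. \emph{Persistence} is \refpropp{essential-left}{persistence}. \emph{Uniform termination} follows from determinism (\refpropp{essential-left}{determinism}): a deterministic reduction admits a single maximal \sequence\ from each term, so weak $\tolo$-normalization forces that \sequence\ to be finite, i.e.\ strong $\tolo$-normalization. \emph{Terminal factorization} is a special case of the factorization just proved in part (1), obtained by restricting to those $\tmu$ that are $\tolo$-normal.

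Having verified these conditions, \refthm{complete-normalization} applies directly and gives that $\tolo$ is a normalizing \strategy for $\tob$, which is part (2). The only genuinely delicate point in this assembly is the one flagged in \refsect{Takahashi} and reiterated for \lo on this page: unlike in the head case, part (1) cannot be proved by a naive adaptation of Takahashi's argument, because left substitutivity fails for $\tolo$ (the counterexample $\var(I\vartwo) \tolo \var\vartwo$). That obstacle, however, has already been absorbed into the proof of the merge property inside \refprop{macro-left}, which routes around substitutivity by using \emph{persistence} together with the substitutivity of the \emph{indexed} parallel reduction $\partobind n$ (\reflemma{partobind-subs}); so from the vantage point of this final theorem the argument is a clean, term-independent combination of the abstract results.
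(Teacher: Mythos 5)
Your proposal is correct and follows exactly the paper's own route: part (1) is obtained by instantiating \refprop{parallelsplit-gives-fact} with the macro-step structure of \refprop{macro-left}, and part (2) by combining that factorization (which subsumes terminal factorization) with \refprop{essential-left} (completeness, persistence, and determinism giving uniform termination) to conclude that $(\Lambda, \set{\tolo, \tonotlo})$ is a \complete\ \external\ system, whence \refthm{complete-normalization} applies. The paper's proof is precisely this assembly, stated more tersely in the sentence preceding the theorem, so there is nothing to add.
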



\section{Least-Level Reduction, Essentially}
\label{sect:stratified-cbn}
\newcommand{\sth}{\text{ s.t. }}
In this section we study another  normalizing \complete reduction for $\tob$, namely \emph{least-level} (shortened to \ll) \emph{reduction} $\tostrat$, which is non-deterministic. 
The intuition is that \ll reduction 
 fires a $\beta$-redex of minimal level, where the \emph{level} of a $\beta$-redex is the number of arguments containing it. 

The definition of $\tostrat$ relies on an indexed variant $\tostratind{k}$ of $\tob$, where $k \in \nat$ is the level of the fired $\beta$-redex (do not mix it up with the index of $\partobind n$). 
We also define a parallel version $\partostratind{n}$ (with $n \in \nat \cup \{\infty\}$) of $\tostratind{k}$, obtained as a decoration of $\partob$, where $n$ is the minimal level of the $\beta$-redexes fired by a $\partob$ step ($\partostratind{\infty}$ does not reduce any $\beta$-redex).
From now on, $\nat \cup \{\infty\}$ is considered with its usual order and arithmetic, that is, $\infty + 1 = \infty$.
\begin{center}
	\textsc{ $\beta$ reduction of level $k$}
	\def\ScoreOverhang{1pt}
	\begin{align*}
	\AxiomC{}
	\UnaryInfC{$(\la{\var}{\tm})\tmtwo \tostratind{0} \tm \isub{\var}{\tmtwo}$}
	\DisplayProof
	\qquad
	\AxiomC{$\tm \tostratind{k} \tm'$}	
	\UnaryInfC{$\la\var\tm \tostratind{k} \la\var\tm'$}
	\DisplayProof 
	\qquad
	\AxiomC{$\tm \tostratind{k} \tm'$}
	\UnaryInfC{$\tm \tmtwo \tostratind{k} \tm' \tmtwo$}
	\DisplayProof 
	\qquad
	\AxiomC{$\tm \tostratind{k} \tm'$}
	\UnaryInfC{$\tmtwo\tm \tostratind{k\!+\!1} \tmtwo\tm'$}
	\DisplayProof 
	\end{align*}
	\def\ScoreOverhang{4pt}
\end{center}
\begin{center}
	\textsc{Parallel $\beta$ reduction of least level $n$}
	\def\ScoreOverhang{1pt}
	\begin{align*}
	\AxiomC{$\tm \partostratind{k} \tm'$}
	\AxiomC{$\tmtwo \partostratind{h} \tmtwo'$}
	\BinaryInfC{$(\la{\var}{\tm})\tmtwo \partostratind{0} \tm' \isub{\var}{\tmtwo'}$}
	\DisplayProof
	\quad
	\AxiomC{$\tm \partostratind{k} \tm'$}	
	\UnaryInfC{$\la\var\tm \partostratind{k} \la\var\tm'$}
	\DisplayProof 
	\quad
	\AxiomC{$\tm \partostratind{k} \tm'$}	
	\AxiomC{$\tmtwo \partostratind{h} \tmtwo'$}
	\BinaryInfC{$\tm \tmtwo \partostratind{\min\{k,h\!+\!1\}} \tm' \tmtwo'$}
	\DisplayProof 
	\quad
	\AxiomC{}
	\UnaryInfC{$\var \partostratind{\infty} \var$}
	\DisplayProof 
	\end{align*}
	\def\ScoreOverhang{4pt}
\end{center}
Note  that $\tm \tob \tmtwo$ if and only if $\tm \tostratind{k} \tmtwo$ for some $k \in \nat$.

The \emph{least (reduction) level} $\Deg{\tm}\in \nat \cup \{\infty\}$ of a term $\tm$ is defined as follows:
\begin{align*}
\Deg{\var} &= \infty
&
\Deg{\la{\var}{\tm}} &= \Deg{\tm}
&
\Deg{\tm\tmtwo} &= 
\begin{cases}
0 &\textup{if } \tm = \la{\var}{\tmthree} 
\\
\min \{\Deg{\tm}, \Deg{\tmtwo} \!+\! 1 \} &\textup{otherwise.}
\end{cases}
\end{align*}  

The definitions of \emph{\ll}, \emph{$\neg$\ll}, and \emph{parallel $\neg$\ll reductions} are:
\begin{center}
\begin{tabular}{r\colspace c\colspace l}
 \textsc{\ll reduction} & $\tm \tostrat \tmtwo$  & if $\tm \tostratind{k} \tmtwo$ with $\deg\tm = k \in \nat$;
 \\
 \textsc{$\neg$\ll reduction} & $\tm \tonotstrat \tmtwo$ & if $\tm \tostratind{k} \tmtwo$ with $\deg\tm< k \in \nat$;
 \\
 \textsc{Parallel $\neg$\ll reduction} &$\tm \partonotstrat \tmtwo$ & if $\tm \partostratind{k} \tmtwo$ with $k \!=\! \infty$ or $k \!>\! \Deg{\tm}$.
 \end{tabular}
\end{center}
As 
usual, easy inductions show that $\tob = \toll \cup \tonotll$ and $\tonotll \subseteq \partonotll \subseteq \tonotll^*$.



\newcounter{prop:deg}
\addtocounter{prop:deg}{\value{proposition}}
\begin{proposition}[Least level properties]
\label{prop:deg}
\NoteProof{propappendix:deg}
Let $\tm$ be a term.
\begin{enumerate}
	\item\label{p:deg-finite}\emph{Computational meaning of $\llsym$}: $\Deg{\tm} = \inf\{ k \in \nat \mid \tm \tostratind{k} \tmthree \text{ for some term } \tmthree\}$.
	\item\label{p:deg-invariance-leq} \emph{Monotonicity}: if $\tm \tob \tmtwo$ then $\Deg{\tmtwo} \geq \Deg{\tm}$. 
	\item\label{p:deg-invariance-equal} \emph{Invariance by $\tonotstrat$}: if $\tm \tonotstrat \tmtwo$ then $\Deg{\tmtwo} = \Deg{\tm}$. 
\end{enumerate}
\end{proposition}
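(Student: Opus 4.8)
The plan is to establish the three points in the order they are stated, each by a dedicated induction, using the earlier points as lemmas for the later ones. For the \emph{computational meaning of $\llsym$} (point~1) I would proceed by structural induction on $\tm$, matching the clauses defining $\Deg{\cdot}$ with the inference rules of $\tostratind{k}$. If $\tm=\var$ there is no redex, so the set $\{k \mid \var \tostratind{k} \tmthree\}$ is empty and both sides are $\infty$. If $\tm=\la\var\tmthree$, the abstraction rule puts the steps of $\tmthree$ in level-preserving bijection with those of $\tm$, so the set is unchanged and the claim follows from the induction hypothesis together with $\Deg{\la\var\tmthree}=\Deg{\tmthree}$. If $\tm=\tmthree\tmfour$ with $\tmthree$ an abstraction, the root redex witnesses level $0$, matching $\Deg{\tm}=0$. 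If $\tm=\tmthree\tmfour$ with $\tmthree$ not an abstraction, every step of $\tm$ is either a step of $\tmthree$ at the same level or a step of $\tmfour$ with level increased by one, so the infimum of the levels is $\min\{\Deg{\tmthree},\Deg{\tmfour}+1\}$ by the induction hypothesis (with $\infty+1=\infty$), which is exactly $\Deg{\tm}$.

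For \emph{monotonicity} (point~2) I would induct on the derivation of $\tm\tostratind{k}\tmtwo$, recalling that $\tm\tob\tmtwo$ iff $\tm\tostratind{k}\tmtwo$ for some $k$. The crucial observation is that the base case is immediate: a root redex $(\la\var\tmthree)\tmfour\tostratind{0}\tmthree\isub\var\tmfour$ has $\Deg{(\la\var\tmthree)\tmfour}=0$, so $\Deg{\tmthree\isub\var\tmfour}\geq 0=\Deg{\tm}$ trivially, and no substitution lemma is needed. The abstraction case is direct from $\Deg{\la\var\cdot}=\Deg{\cdot}$. The two application cases follow from monotonicity of $\min$ and of $n\mapsto n+1$ applied to the induction hypothesis, after a case split on whether the reduced side becomes an abstraction: when a non-abstraction $\tmthree$ reduces to an abstraction $\tmthree'$ it must contract its own root redex, forcing $\Deg{\tmthree}=0$ and hence $\Deg{\tmthree\tmfour}=0$, which is already minimal; otherwise the application structure is preserved and the $\min$ formula applies on both sides.

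For \emph{invariance by $\tonotstrat$} (point~3) the inequality $\Deg{\tmtwo}\geq\Deg{\tm}$ comes for free from monotonicity. For the reverse inequality I would again induct on the derivation of $\tm\tostratind{k}\tmtwo$, now carrying the hypothesis $\Deg{\tm}<k$. The base case is vacuous, since a root step has $k=0$ and $\Deg{\tm}<0$ is impossible. In each application case I split on the level of the subterm being reduced, using the computational meaning of $\llsym$ (point~1) to read $\Deg{\tmthree}\leq k$ off the existence of a level-$k$ step $\tmthree\tostratind{k}\tmthree'$, and monotonicity (point~2) to bound $\Deg{\tmthree'}$ from below. The hypothesis $\Deg{\tm}<k$ forces the contracted redex to lie strictly above some least-level redex, so a least-level redex is never the one fired and is preserved: whenever the step stays at the same level on a subterm with $\Deg{\cdot}<k$ the induction hypothesis applies and the $\min$ expression is unchanged, while whenever the reduced subterm sits in argument position at level $k$ the $+1$ shift keeps it above the side realizing $\Deg{\tm}$, so again the value of $\Deg{\cdot}$ is unchanged.

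I expect the main obstacle to be the bookkeeping in the application cases of points~2 and~3, specifically the dichotomy between the two clauses of $\Deg{\tm\tmtwo}$ (the value $0$ when the left component is an abstraction versus the $\min$ expression otherwise). The delicate interaction is that contracting a root redex is the only way a subterm can switch from non-abstraction to abstraction, and this always happens at level $0$: in point~2 this makes the left-hand $\Deg{\cdot}$ already equal to $0$, and in point~3 it is excluded outright by the constraint $\Deg{\tm}<k$. Keeping track of which clause applies on each side of every step, and of the level constraint that rules out the degenerate configurations, is the part requiring care; the remaining arithmetic on $\min$ and successors is routine.
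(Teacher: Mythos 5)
Your proposal is correct and takes essentially the same approach as the paper: point 1 by structural induction on $\tm$, points 2 and 3 by induction on the derivation of $\tm \tostratind{k} \tmtwo$, with the root case trivial (level $0$, so no substitution lemma), the non-abstraction-to-abstraction switch handled via the forced root step, and invariance obtained from point 1 (giving $\Deg{\cdot} \leq k$ for the reduced subterm), the induction hypothesis when that degree is strictly below $k$, and monotonicity plus $\min$-arithmetic when it equals $k$.
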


\refpoint{deg-finite} captures the meaning of the least level, and gives \completeness of $\toll$. In particular, $\Deg{\tm} = \infty$ if and only if $\tm$ is $\tob$-normal, since $\inf \, \emptyset = \infty$. 
Monotonicity states that $\beta$ steps cannot decrease the least level. 
Invariance by $\tonotstrat$ 
says that $\tonotstrat$ cannot change the least level. Essentially, this is persistence. 
		
\newcounter{prop:essential-strat}
\addtocounter{prop:essential-strat}{\value{proposition}}
\begin{proposition}[\ll essential properties]
\label{prop:essential-strat}
\NoteProof{propappendix:essential-strat}\hfill
\begin{enumerate}
\item \label{p:essential-strat-completeness}
\emph{\Completeness}: if $\tm \tob \tmtwo$ then $\tm \toll \tmthree$ for some $\tmthree$.
\item\label{p:essential-strat-persistence} \emph{Persistence}: if $\tm \toll \tmtwo_1$ and $\tm \tonotll \tmtwo_2$ then $\tmtwo_2 \toll \tmthree$ for some $\tmthree$.
\item\label{essential-strat-diamond} \emph{Diamond}: if $\tmtwo\ltoll \cdot \toll\tmthree$ with $\tmtwo \neq \tmthree$ then $\tmtwo \toll \cdot \ltoll \tmthree$.
\end{enumerate}
\end{proposition}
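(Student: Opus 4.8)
The plan is to get the first two items almost for free from the \emph{Least level properties} (\refprop{deg}), and to concentrate the effort on the diamond of the third. For \emph{completeness} (item~1): if $\tm \tob \tmtwo$ then $\tm \tostratind{k} \tmfour$ for some $k \in \nat$ and some $\tmfour$, so the set $\{k \in \nat \mid \tm \tostratind{k} \tmthree \text{ for some }\tmthree\}$ is nonempty. By the computational meaning of $\llsym$ (\refpropp{deg}{finite}), $\Deg\tm$ is the infimum of this set, hence a genuine natural number $j$ attained as a minimum; thus $\tm \tostratind{j}\tmthree$ for some $\tmthree$ with $j = \Deg\tm$, which is exactly $\tm \toll \tmthree$. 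For \emph{persistence} (item~2): from $\tm \toll \tmtwo_1$ we read $\Deg\tm \in \nat$, and from $\tm \tonotll \tmtwo_2$ invariance by $\tonotstrat$ (\refpropp{deg}{invariance-equal}) gives $\Deg{\tmtwo_2} = \Deg\tm \in \nat$. A finite least level means $\tmtwo_2$ is not $\tob$-normal (since $\Deg{\cdot}=\infty$ characterises $\tob$-normality), so $\tmtwo_2 \tob \tmthree'$ for some $\tmthree'$, and item~1 upgrades this to $\tmtwo_2 \toll \tmthree$ for some $\tmthree$.

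The heart of the argument is the \emph{diamond} (item~3). Suppose $\tm \toll \tmtwo$ and $\tm \toll \tmthree$ fire the redexes $R_1$ and $R_2$, which are distinct because $\tmtwo \neq \tmthree$; being $\toll$-steps, both sit at the minimal level $k = \Deg\tm$. The decisive structural observation is that entering the \emph{argument} of an application strictly raises the level (last rule defining $\tostratind{k}$): a redex occurring inside the argument of another has level $\geq k+1$. Since any two redexes of the $\l$-calculus are either disjoint or one is an ancestor of the other along the tree of application nodes, two distinct redexes both at the minimal level $k$ leave only two configurations: either $R_1$ and $R_2$ are \emph{disjoint}, or one lies in the \emph{body} of the other's abstraction, i.e.\ is reached from it without crossing any argument.

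I then close the diagram by a single $\toll$-step on each side. In the disjoint case the redexes commute: firing one leaves the other untouched, and firing them in either order reaches a common $\tmfour$. In the nested case, say $R_1 = (\la\var\tmfive)\tmsix$ with $R_2 \subseteq \tmfive$ and $\tmfive \tob \tmfive'$ obtained by firing $R_2$: the residual of $R_2$ in $\tmtwo$ is its unique image $R_2\isub\var\tmsix$, and left substitutivity of $\tob$ gives $\tmfive\isub\var\tmsix \tob \tmfive'\isub\var\tmsix$; from $\tmthree$, firing the residual $(\la\var{\tmfive'})\tmsix$ of $R_1$ also produces $\tmfive'\isub\var\tmsix$, so both sides meet at the same reduct $\tmfour$. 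In either configuration the surviving redex has a \emph{single} residual still at level $k$; and since monotonicity (\refpropp{deg}{invariance-leq}) forces $\Deg\tmtwo, \Deg\tmthree \geq k$ while that residual exhibits a level-$k$ redex, we get $\Deg\tmtwo = \Deg\tmthree = k$, so each closing step is genuinely a least-level step, yielding $\tmtwo \toll \tmfour \ltoll \tmthree$. I expect the whole difficulty to lie here: one must (i) use the argument-raises-level fact to rule out that two least-level redexes nest through an argument, and (ii) check that the surviving residual stays at the least level, so that the closure is realised by honest $\toll$-steps rather than merely by $\tob$; the remaining bookkeeping with monotonicity and the substitution lemma is routine once this two-configuration dichotomy is in place.
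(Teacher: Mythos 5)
Your proof is correct, and for the first two items it follows exactly the route the paper indicates: completeness is read off \refpropp{deg}{finite} (a nonempty set of naturals has an attained infimum, so $\Deg{\tm} \in \nat$ is realized by an actual step), and persistence transfers the finite least level across the $\tonotll$-step via \refpropp{deg}{invariance-equal} and then re-applies completeness---the paper itself flags invariance as ``essentially persistence''. The divergence is in the diamond. The paper's proof, in keeping with its methodology elsewhere (and with its treatment of the analogous diamond for weak call-by-value), is organized as an induction on derivations, with case analysis on the rules defining $\tostratind{k}$; you instead argue globally on redex occurrences and residuals: two distinct redexes at the common least level $k$ cannot nest through an argument, since entering an argument strictly raises the level, so they are either disjoint or nested through an abstraction body, and in both configurations the unfired redex has exactly one residual, still at level $k$, after which monotonicity (\refpropp{deg}{invariance-leq}) pins the least level of each reduct to exactly $k$, making both closing steps genuine $\toll$-steps rather than mere $\tob$-steps. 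The mathematical content is the same---an inductive proof must rediscover precisely these two facts, distributed over the application cases---but the trade-off is worth noting: your packaging isolates the semantic core cleanly, at the price of invoking background notions (occurrences, residuals, the disjoint-or-nested dichotomy, commutation of disjoint redexes) that the paper never formalizes and whose use its ``abstract'' style deliberately avoids, whereas the inductive packaging is self-contained and mechanical. One small simplification: in persistence the detour through $\tob$-normality is unnecessary, since $\Deg{\tmtwo_2} \in \nat$ together with \refpropp{deg}{finite} directly yields a step $\tmtwo_2 \tostratind{\Deg{\tmtwo_2}} \tmthree$, which is by definition a $\toll$-step.
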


%

As for \lo, merge needs persistence, or, more precisely, invariance by $\tonotstrat$.

\newcounter{prop:macro-strat}
\addtocounter{prop:macro-strat}{\value{proposition}}
\begin{proposition}[\ll macro-step system]
\hfill\label{prop:macro-strat}
\NoteProof{propappendix:macro-strat}
	\begin{enumerate}
		\item\label{p:macro-strat-merge} \emph{Merge}: if $\tm \partonotstrat \tmtwo \tostrat \tmthree$, then $\tm \partob \tmthree$.
		\item\label{p:macro-strat-indexed-split}\emph{Indexed split}:	if $\tm \partobind n \tmtwo$ then $\tm \partonotstrat\tmtwo$, or $n>0$ and $\tm \tostrat \cdot \partobindlong {n-1} \tmtwo$.
		\item\label{p:macro-strat-split} \emph{Split}: if $\tm \partob \tmtwo$ then $\tm \tost^* \cdot \partonotstrat \tmtwo$.
	\end{enumerate}
	That is, $(\Lambda, \set{\toll, \tonotll})$ is a macro-step system with respect to $\partob$ and $\partonotll$.
\end{proposition}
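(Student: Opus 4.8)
The plan is to establish the three items separately and then read off the \macrostep\ system structure, following the same scheme used for head and \lo reduction: prove \emph{merge} by induction on the parallel step, \emph{indexed split} by induction on $\partobind n$ (using substitutivity), and \emph{split} from indexed split by induction on the index. The abstract Macro condition $\tonotll \subseteq \partonotll \subseteq \tonotll^*$ is already recorded right after the definition of the reductions, so only merge and split genuinely remain, and the concluding sentence then instantiates \refdef{macrostep} with $\parto = \partob$ and $\partoint = \partonotll$.

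For \emph{merge} I would argue by induction on the derivation of $\tm \partonotstrat \tmtwo$, that is of $\tm \partostratind{k} \tmtwo$ with $k > \Deg{\tm}$ or $k = \infty$. The $\beta$-rule is vacuous, since it forces $k = 0$, which cannot exceed $\Deg{\tm}$; the variable case is vacuous since $\tmtwo = \var$ admits no $\tostrat$ step; and the abstraction case follows from the \ih, using invariance of $\Deg{\cdot}$ under $\tonotstrat$ (\refpropp{deg}{invariance-equal}) to see that the trailing step stays least-level. The crux is the application case $\tm = \tmfour\tmfive$, with $\tmfour\partostratind{k_\tmfour}\tmfour'$, $\tmfive\partostratind{k_\tmfive}\tmfive'$ and $k = \min\{k_\tmfour, k_\tmfive+1\}$. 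By invariance, the trailing step $\tmfour'\tmfive' \tostrat \tmthree$ fires a redex at level $\Deg{\tmtwo} = \Deg{\tm}$. If it is the top redex, then $\tmfour$ must already be an abstraction (a non-abstraction contracted to the abstraction $\tmfour'$ would expose a level-$0$ redex, forcing $k_\tmfour = 0$ and $k \leq \Deg{\tm}$, against $k > \Deg{\tm}$), and $\tm\partob\tmthree$ holds by the $\beta$-rule of $\partob$. Otherwise the fired redex lies inside $\tmfour'$ or inside $\tmfive'$; here monotonicity (\refpropp{deg}{invariance-leq}) forbids the least level of that subterm from having strictly decreased, which pins the level configuration so that the relevant immediate subterm performs a genuine least-level trailing step and its origin ($\tmfour$ or $\tmfive$) satisfies the hypotheses of the \ih. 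The $\partob$ application rule then reassembles $\tm \partob \tmthree$.

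For \emph{indexed split} I would induct on $\tm \partobind n \tmtwo$. The variable and abstraction cases are immediate (in the former $\var \partonotstrat \var$ via $\partostratind{\infty}$). In the application case $\tm = \tmfour\tmfive$ I also decorate the step as $\partostratind{k}$ with $k = \min\{k_\tmfour, k_\tmfive+1\}$: if $k > \Deg{\tm}$ or $k = \infty$ then $\tm\partonotstrat\tmtwo$ and we are done; otherwise $k = \Deg{\tm}$, and since $\Deg{\tm} \leq \Deg{\tmfour} \leq k_\tmfour$ and $\Deg{\tm} \leq \Deg{\tmfive}+1 \leq k_\tmfive+1$ (\refpropp{deg}{finite}), the minimum being attained forces one immediate subterm to reduce at its own least level. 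The \ih on that subterm yields $n>0$, a $\tostrat$ step on it, and a residual $\partobind{}$ step of index lowered by one; lifting the $\tostrat$ step through the application and recombining gives $\tm \tostrat \cdot \partobind{n-1} \tmtwo$. The $\beta$-step case is exactly as for head reduction: the top redex has level $0 = \Deg{\tm}$, so $\tm = (\la\var\tmthree)\tmfour \tostrat \tmthree\isub\var\tmfour$, and substitutivity of $\partobind n$ (\reflemma{partobind-subs}) delivers the residual step of index $n-1$. Finally, \emph{split} follows from indexed split by induction on $n$, verbatim as in \refprop{macro-head}: the step is either already $\partonotstrat$, or one peels off a leading $\tostrat$ step and applies the \ih at index $n-1$.

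The main obstacle is the application case in both merge and indexed split. Unlike head and \lo reduction, where the essential redex always occupies a fixed position along the left spine, for \ll reduction a least-level redex may lie either in the function or in the argument, depending on where $\Deg{\tm}$ is realized. Handling this cleanly is exactly what monotonicity and invariance of $\Deg{\cdot}$ buy us: they rule out the level-inconsistent configurations and tell us into which immediate subterm to recurse, so the inductions go through with the same shape as in the previous sections.
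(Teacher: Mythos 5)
Your proposal is correct and follows essentially the same route as the paper: merge by induction on the parallel inessential step, using monotonicity and invariance of $\Deg{\cdot}$ under $\tonotstrat$ (the paper itself notes that, as for \lo, merge rests on persistence ``or, more precisely, invariance by $\tonotstrat$''), indexed split by induction on the decorated derivation with substitutivity of $\partobind{n}$ (\reflemma{partobind-subs}) handling the $\beta$-case, and split obtained from indexed split by induction on the index exactly as in \refprop{macro-head}. The level arithmetic you use to pin down which subterm attains the minimum (giving $\Deg{\tmfour}=k_\tmfour=\Deg{\tm}$, resp. $\Deg{\tmfive}=k_\tmfive=\Deg{\tm}-1$, so that the lifted step is indeed least-level) is precisely the point where the \ll case differs from head and \lo, and your treatment of it is sound.
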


\begin{theorem}
\hfill
\begin{enumerate}
\item  \emph{\ll factorization}: if $\tm \tob^* \tmu$ then $\tm\toll^* \cdot \tonotll^* \tmu$.

\item \emph{\ll normalization}: $\toll$ is a normalizing reduction for $\tob$.
	\end{enumerate}
\end{theorem}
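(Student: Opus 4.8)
The plan is to assemble the abstract results of \refsect{abstract-rewriting} with the concrete \ll ingredients already established, following exactly the pattern used for \lo reduction in \refsect{leftmost}.

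For \emph{\ll factorization} (part 1), I would simply combine the \ll macro-step system (\refprop{macro-strat}) with the abstract factorization result (\refprop{parallelsplit-gives-fact}). The former states that $(\Lambda,\{\toll,\tonotll\})$ is a macro-step system with respect to $\partob$ and $\partonotll$; the latter states that every macro-step system satisfies $\toe$-factorization. Instantiating the abstract statement with $\toll$ as essential and $\tonotll$ as inessential reduction yields $\tob^* \subseteq \toll^* \cdot \tonotll^*$, which is precisely the claim.

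For \emph{\ll normalization} (part 2), the strategy is to verify that $(\Lambda,\{\toll,\tonotll\})$ is a \complete \external system (\refdef{extsys}) and then conclude by the essential \complete normalization theorem (\refthm{complete-normalization}). \emph{Persistence} is the persistence item of \refprop{essential-strat}. \emph{Uniform termination} follows from the diamond property of $\toll$ in the same proposition: as recalled in \refsect{weak-cbv}, diamond implies random descent and hence uniform termination. \emph{Terminal factorization} is an immediate consequence of the full factorization just established in part 1, terminal factorization being the special case in which the target term is already $\toll$-normal. Finally, \completeness of the \strategy $\toll$ for $\tob$ means that $\toll$ and $\tob$ share the same normal forms: the inclusion $\toll \subseteq \tob$ makes every $\tob$-normal term $\toll$-normal, while the converse is given by the \completeness item of \refprop{essential-strat}, which guarantees a $\toll$-step out of any non-$\tob$-normal term (and rests on the computational meaning of the least level in \refprop{deg}). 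With the three defining conditions and \completeness in place, \refthm{complete-normalization} delivers that $\toll$ is a normalizing reduction for $\tob$.

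The point to stress is that no genuine difficulty remains at this stage, since the entire technical burden has been discharged by the preceding propositions---in particular the indexed-split argument behind \refprop{macro-strat} and the invariance-by-$\tonotstrat$ property underlying persistence (\refprop{deg} and \refprop{essential-strat}). The only subtlety worth flagging is that uniform termination is secured here by the diamond property rather than by determinism, precisely because $\toll$ is non-deterministic; this is exactly why the abstract notion of \external system was phrased with uniform termination in place of the determinism customary in the literature.
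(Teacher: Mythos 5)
Your proposal is correct and follows exactly the paper's own route: factorization by instantiating the abstract macro-step result (\refprop{parallelsplit-gives-fact}) with \refprop{macro-strat}, and normalization by assembling persistence, diamond (hence uniform termination), terminal factorization, and \completeness from \refprop{essential-strat} into a \complete \external system and invoking \refthm{complete-normalization}, just as done for \lo reduction. Nothing is missing; the paper itself treats this theorem as an immediate corollary of those propositions in precisely this way.
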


\paragraph{\ll split and \lo \textit{vs.} \ll.} As for \lo reduction, left substitutivity does not hold for $\toll$. 
Consider $\tm = \var (I \vartwo) \toll \var \vartwo = \tm'$ where the step has level 1, and $\tm\isub\var {\la\varthree \varthree\varthree} \allowbreak= (\la\varthree \varthree\varthree)(I\vartwo) \not\toll (\la\varthree \varthree\varthree) \vartwo = \tm'\isub\var {\la\varthree \varthree\varthree}$ since now there also is a step $(\la\varthree \varthree\varthree)(I\vartwo)\toll (I \vartwo)(I \vartwo)$ at level 0.

Moreover, \ll and \lo reductions are incomparable. First, note that
	\emph{$\toll \not\subseteq 
	\tolo$}: $\tm = (\la{\var}{II})\vartwo  \tostrat (\la{\var}{I})\vartwo = \tmtwo$, because $\tm \tostratind{0} (\la{\var}{I})\vartwo$ and $\Deg{\tm} = 0$, but $\tm \not\tolo \tmtwo$, indeed $\tm \tolo II$. This fact also shows that $\toll$ is not left–outer in the sense of van Oostrom and Toyama \cite{DBLP:conf/rta/OostromT16}.
	Second, \emph{
	$\tolo \not\subseteq \toll$}: $\tm = \var (\var(II)) (II) \tolo \var (\var I)(II) = \tmtwo$ but $\tm \not\tostrat \tmtwo$, indeed $\tm \tonotstrat \tmtwo$ because $\tm \tostratind{2} \tmtwo$ and $\Deg{\tm} = 1$, and $\tm \toll \var (\var(II)) I \neq \tmtwo$.


\section{Conclusions} 
We provide simple proof techniques for factorization and normalization theorems in the $\l$-calculus, simplifying Takahashi's parallel method \cite{DBLP:journals/iandc/Takahashi95}, 
extending its scope and making it more abstract at the same time. 

About the use of parallel reduction, 
Takahashi	 claims: ``\textit{once the idea is stated properly, the essential part of the proof is almost over, because the inductive verification of the statement is easy, even mechanical}'' \cite[p.~122]{DBLP:journals/iandc/Takahashi95}. 
Our work reinforces this point of view, as our case studies smoothly follow the abstract~schema.

%

\paragraph{Range of application.} 
We apply our method for factorization and
	normalization to two notions of reductions that compute full normal forms:
\begin{itemize}
	\item the classic example of \lo reduction, 
		covered also by the recent techniques by Hirokawa, Middeldorp and Moser \cite{DBLP:conf/rta/HirokawaMM15} and 
	van Oostrom and Toyama \cite{DBLP:conf/rta/OostromT16}; 
	\item 
	\ll reduction, which is out of the scope of \cite{DBLP:conf/rta/HirokawaMM15,DBLP:conf/rta/OostromT16} because it is neither deterministic (as required by \cite{DBLP:conf/rta/HirokawaMM15}), nor  left–outer in the sense of \cite{DBLP:conf/rta/OostromT16} (as pointed out here in \refsect{stratified-cbn}). 
\end{itemize}

Our approach naturally covers also reductions that do not compute full normal forms, such as head and weak \cbv reductions.
	These results are out of reach for 
van Oostrom and Toyama's technique \cite{DBLP:conf/rta/OostromT16}, as they clarify in their conclusions.

Because of the minimality of our assumptions, we believe that our method applies to a large variety of other cases and variants of the $\l$-calculus. 
\SLV{}{
A key feature of our approach is that it derives normalization from factorization.
However, it is worth noting that factorization is not a necessary condition for normalization.\footnotemark
\footnotetext{For instance, in the weak $\l$-calculus---where weak $\beta$-reduction $\towb$ does not reduce under abstractions---our technique does not apply because weak head reduction $\towh$ (\ie head reduction that does not reduce under abstractions) satisfies a weak head normalization theorem (if $\tm \towb^* \tmtwo$ with $\tmtwo$ $\towh$-normal then $\towh$ terminates on $\tm$) but does not factorize: indeed, given the $\towb$-sequence $(\la{\var}\la{\vartwo}\var)(II) \tonotwh (\la{\var}\la{\vartwo}\var)I \towh \la{\vartwo}I$, there is no term $\tm$ such that $(\la{\var}\la{\vartwo}\var)(II) \towh^* \tm \tonotwh^* \la{\vartwo}I$.}
}


\paragraph{Acknowledgments.}
This work has been partially funded by the ANR JCJC grant COCA HOLA (ANR-16-CE40-004-01) and by the EPSRC grant EP/R029121/1
``Typed Lambda-Calculi with Sharing and Unsharing''.

\bibliographystyle{splncs04}
\bibliography{biblio-short}
\addcontentsline{toc}{section}{References}

 \newpage
 \appendix

\section{Technical appendix: omitted proofs and lemmas}
\label{sect:proofs}

The enumeration of propositions, theorems, lemmas already stated in the body of the article is unchanged.

\subsection{Omitted proofs  of \refsect{head} (head)}

\setcounter{propositionAppendix}{\value{prop:macro-head}}
\begin{propositionAppendix}[Head macro-step system]
	\label{propappendix:macro-head}
	\NoteState{prop:macro-head}
	\hfill
	\begin{enumerate}
		\item\label{pappendix:macro-head-merge} \emph{Merge}: if $\tm \partonoth\cdot \toh \tmthree$ then $\tm \partob \tmthree$.
		\item\label{pappendix:macro-head-indexed-split} \emph{Indexed split}: if $\tm \partobind n \tmtwo$ then $\tm \partonoth\tmtwo$, or $n>0$ and $\tm \toh \cdot \partobind {n-1} \tmtwo$.
		
		\item\label{pappendix:macro-head-split} \emph{Split}: if $\tm \partob \tmtwo$ then $\tm \toh^* \cdot \partonoth\tmtwo$.
	\end{enumerate}
	that is, $(\Lambda, \{ \toh, \tonoth \})$ is a \macrostep\ system with respect to $\partob,\partonoth$.
\end{propositionAppendix}

\begin{proof}
	\refpoints{macro-head-indexed-split}{macro-head-split} are already proved on p.~\pageref{prop:macro-head}.
	We prove \refpoint{macro-head-merge} by induction on the definition of $\tm \partonoth \tmtwo$.
	Cases:
	\begin{itemize}
		\item \emph{Variable}: $\tm = \var \partonoth \var = \tmtwo$. 
		Then, there is no $\tmthree$ such that $\tmtwo \toh \tmthree$.
		
		\item \emph{Abstraction}: $\tm = \la \var\tm' \partonoth  \la \var \tmtwo' = \tmtwo$ because $\tm' \partonoth \tmtwo'$.
		According to the definition of $\tmtwo \toh \tmthree$, necessarily $\tmthree = \la{\var}{\tmthree'}$ with $\tmtwo' \toh \tmthree'$.
		By \ih applied to $\tm'$, we have $\tm' \partob \tmthree'$ and hence:
		\begin{align*}
			\AxiomC{$\tm' \partob \tmthree'$}
			\UnaryInfC{$\tm = \la{\var}{\tm'} \partob \la{\var}{\tmthree'} = \tmthree$}
			\DisplayProof
			\,.
		\end{align*}
	
		\item \emph{Application}:
		\begin{align}
		\label{eq:app}
		\AxiomC{$\tmfour \partonoth \tmfour'$}
		\AxiomC{$\tmfive \partob \tmfive'$}
		\BinaryInfC{$\tm = \tmfour\tmfive \partonoth \tmfour'\tmfive' = \tmtwo$}
		\DisplayProof
		\end{align}	
		Sub-cases:
		\begin{enumerate}
			\item $\tmtwo = \tmfour'\tmfive' \toh \tmfour''\tmfive' = \tmthree$ with $\tmfour' \toh \tmfour''$; 
			by \ih applied to $\tmfour \partonoth \tmfour' \toh \tmfour''$, we have $\tmfour \partob \tmfour''$, and so (as $\partonoth \,\subseteq\, \partob$)
			\begin{align*}
			\AxiomC{$\tmfour \partob \tmfour''$}
			\AxiomC{$\tmfive \partob \tmfive'$}
			\BinaryInfC{$\tm = \tmfour\tmfive \partob \tmfour''\tmfive' = \tmthree$}
			\DisplayProof\,;
			\end{align*}
			
			\item $\tmtwo = (\la{\var}\tmsix')\tmfive' \toh \tmsix'\isub{\var}{\tmfive'} = \tmthree$, which means that $\tmfour' = \la{\var}{\tmsix'}$ in \eqref{eq:app}.
			According to the definition of $\tmfour \partonoth \tmfour'$, necessarily $\tmfour = \la{\var}{\tmsix}$ with $\tmsix \partonoth \tmsix'$.
			Thus, (as $\partonoth \,\subseteq\, \partob$)
			\begin{align*}
			\AxiomC{$\tmq\partob \tmq'$}
			\AxiomC{$\tmfive \partob \tmfive'$}
			\BinaryInfC{$\tm =  (\la {\var} \tmsix) \tmfive \partob  \tmq'\isub{\var}{\tmfive'} = \tmthree$}
			\DisplayProof
			\,.
			\end{align*}
		\end{enumerate}
		\item \emph{$\beta$-redex}:
		\begin{align*}
			\AxiomC{$\tmfour \partob \tmfour'$}
			\AxiomC{$\tmfive \partob \tmfive'$}
			\BinaryInfC{$\tm = (\la\var \tmfour) \tmfive \partonoth (\la\var \tmfour') \tmfive' = \tmtwo$}
			\DisplayProof
		\end{align*}
		According to the definition of $\tmtwo \toh \tmthree$, we have $\tmthree = \tmfour'\isub{\var}{\tmfive'}$.
		Hence,
		\[
			\AxiomC{$\tmfour \partob \tmfour'$}
			\AxiomC{$\tmfive \partob \tmfive'$}
			\BinaryInfC{$\tm = (\la\var \tmfour) \tmfive \partob \tmfour'\isub{\var}{\tmfive'} = \tmthree$}
			\DisplayProof
			\ . \qedhere
		\]
	\end{itemize}
\end{proof}

\setcounter{propositionAppendix}{\value{prop:essential-head}}
\begin{propositionAppendix}[Head essential system]
	\label{propappendix:essential-head}
\NoteState{prop:essential-head}
\begin{enumerate}
 \item\label{pappendix:essential-head-persistence} \emph{Persistence}: if $\tm \toh \tmtwo$ and $\tm \tonoth \tmthree$ then $\tmthree \toh \tmfour$ for some $\tmfour$.
 \item\label{pappendix:essential-head-determinism} \emph{Determinism}: if $\tm \toh \tmtwo_1$ and $\tm\toh \tmtwo_2$ then $\tmtwo_1 = \tmtwo_2$.
\end{enumerate}
Then, $(\Lambda, \set{ \tow, \tonotw})$ is an essential system.
\end{propositionAppendix}

\begin{proof}\hfill
\begin{enumerate}
 \item \emph{Persistence}: By induction on $\tm \toh \tmtwo$. Cases:
 \begin{itemize}
  \item \emph{Root step}, \ie $\tm = (\la\var \tmfive) \tmsix \toh \tmfive\isub\var\tmsix = \tmtwo$. Sub-cases:
  \begin{itemize}
    \item \emph{$\tonoth$ in the left sub-term}: $\tm = (\la\var \tmfive) \tmsix \tonoth (\la\var \tmfive') \tmsix=\tmthree$ because $\tmfive \tob \tmfive'$ or $\tmfive \tonoth \tmfive'$. In both cases $\tmthree = (\la\var \tmfive') \tmsix \toh \tmfive'\isub\var\tmsix =:\tmfour$.
    
    \item \emph{$\tonoth$ in the right sub-term}: $\tm = (\la\var \tmfive) \tmsix \tonoth (\la\var \tmfive) \tmsix'=\tmthree$ because $\tmsix \tob \tmsix'$. Then $\tmthree = (\la\var \tmfive) \tmsix' \toh \tmfive\isub\var{\tmsix'} =:\tmfour$.    
  \end{itemize}
  \item \emph{Application}: \ie $\tm = \tmfive \tmsix \toh \tmfive'\tmsix = \tmtwo$ with $\tmfive \toh \tmfive'$ and $\tmfive$ not an abstraction. Sub-cases:
  \begin{itemize}
    \item \emph{$\tonoth$ in the left sub-term}: $\tm = \tmfive \tmsix \tonoth \tmfive'' \tmsix=\tmthree$ because $\tmfive \tonoth \tmfive''$. By \ih, $\tmfive'' \toh \tmfour'$ for some $\tmfour'$. Then $\tmthree = \tmfive'' \tmsix \toh \tmfour'\tmsix =:\tmfour$.
    
    \item \emph{$\tonoth$ in the right sub-term}: $\tm = \tmfive \tmsix \tonoth \tmfive \tmsix'=\tmthree$ because $\tmsix \tob \tmsix'$. Then $\tmthree = \tmfive \tmsix' \toh \tmfive'\tmsix' =:\tmfour$.   
  \end{itemize}
  \item \emph{Abstraction}: \ie $\tm = \la\var \tmfive \toh \tmfive' = \tmtwo$. Then $\tm = \la\var \tmfive \tonoth \la\var\tmfive'' = \tmthree$ for some $\tmfive''$ with $\tmfive \tonoth \tmfive''$. By \ih, $\tmfive'' \toh \tmfour'$ for some $\tmfour'$. Then $\tmthree = \la\var\tmfive'' \toh \la\var\tmfour'$.
 \end{itemize}


	\item \emph{Determinism}: By induction on a derivation with conclusion $\tm \toh \tmtwo_1$. 
	Consider its last rule.
	Cases:
	\begin{itemize}
		\item \emph{Abstraction}, \ie $\tm = \la{\var}{\tm'} \toh \la{\var}{\tmtwo_1'} = \tmtwo_1$ because $\tm' \toh \tmtwo_1'$.
		According to the definition of $\toh$, the only possibility for the last rule of a derivation for $\tm \toh \tmtwo_2$ is 
		\begin{align*}
			\AxiomC{$\tm' \toh \tmtwo_2'$}
			\UnaryInfC{$\tm = \la{\var}{\tm'} \toh \la{\var}{\tmtwo_2'} = \tmtwo_2$}
			\DisplayProof
			\ .
		\end{align*}
		By \ih applied to $\tm'$, we have $\tmtwo_1' = \tmtwo_2'$ and hence $\tmtwo_1 = \la{\var}{\tmtwo_1'} = \la{\var}{\tmtwo_2'} = \tmtwo_2$.
		
		\item \emph{Application}, \ie $\tm = \tm' \tm'' \toh \tmtwo_1'\tm'' = \tmtwo_1$ because $\tm' \toh \tmtwo_1'$ and $\tm' \neq \la{\var}{\tmfour}$.
		According to the definition of $\toh$, the the last rule of a derivation for $\tm \toh \tmtwo_2$ can be only 
		\begin{align*}
		\AxiomC{$\tm' \toh \tmtwo_2'$}
		\UnaryInfC{$\tm = \tm'\tm'' \toh \tmtwo_2'\tm'' = \tmtwo_2$}
		\DisplayProof
		\ .
		\end{align*}
		By \ih applied to $\tm'$, we have $\tmtwo_1' = \tmtwo_2'$ and hence $\tmtwo_1 = \tmtwo_1'\tm'' = \tmtwo_2'\tm'' = \tmtwo_2$.
		
		\item \emph{$\beta$-rule}, \ie $\tm = (\la{\var}\tm')\tm'' \toh \tm'\isub{\var}{\tm''} = \tmtwo_1$.
		According to the definition of $\toh$, the only possibility for the last rule of a derivation for $\tm \toh \tmtwo_2$ is 
		\begin{align*}
			\AxiomC{}
			\UnaryInfC{$\tm = (\la{\var}{\tm'})\tm'' \toh \tm'\isub{\var}{\tm''} = \tmtwo_2$}
			\DisplayProof
			&&\text{and hence } \tmtwo_1 = \tm'\isub{\var}{\tm''} = \tmtwo_2.
			\text{\qedhere}
		\end{align*}
	\end{itemize}
\end{enumerate}
 
\end{proof}

%
%
%
%
%
%
%
%
%

\subsection{Omitted proofs of \refsect{cbv-calculus} (\texorpdfstring{\cbv $\l$-calculus}{\cbv lambda-calculus})}
\setcounter{lemmaAppendix}{\value{l:tobv-subs}}
\begin{lemmaAppendix}[Substitutivity of $\partobvind n$]
	\label{lappendix:tobv-subs} 
	\NoteState{l:tobv-subs}
 If $\tm \partobvind n \tm'$ and $\val \partobvind m \val'$ then $\tm \isub\var\val \partobvind{k} \tm' \isub\var{\val'}$ where $k = n + \sizep{\tm'}\var\cdot m$.
\end{lemmaAppendix}

\begin{proof}
By induction on $\tm \partobind n \tm'$. It follows the exact same pattern of the proof of substitutivity of $\partobind n$. Cases:
\begin{itemize}
	\item \emph{Variable}: two sub-cases
	\begin{itemize}
	\item $\tm = \var$: then $\tm = \var \partobvind 0 \var = \tm'$ then $\tm \isub\var\val = \var\isub\var\val = \val \partobvind m \val' = \var \isub\var{\val'} = \tm' \isub\var{\val'}$ that satisfies the statement because $n + \sizep{\tm'}\var\cdot m = 0+1\cdot m = m$. 
	 \item $\tm = \vartwo$: then $\tm = \vartwo \partobvind 0 \vartwo = \tm'$ then $\tm \isub\var\val = \vartwo\isub\var\val = \vartwo \partobvind 0 \vartwo = \vartwo \isub\var{\val'} = \tm' \isub\var{\val'}$ that satisfies the statement because $n + \sizep{\tm'}\var\cdot m = n +0\cdot m=n$.
	 \end{itemize}
	
	\item \emph{Abstraction}, \ie $\tm = \la{\vartwo}{\tmthree} \partobvind{n} \la{\vartwo}{\tmthree'} = \tm'$ because $\tmthree \partobvind{n} \tmthree'$; we can suppose without loss of generality that $\vartwo \neq \var$, hence $\sizep{\tmthree'}{\var} = \sizep{\tm'}{\var}$ and $\tm\isub{\var}{\val} = \la{\vartwo}(\tmthree\isub{\var}{\val})$ and $\tm'\isub{\var}{\val'} = \la{\vartwo}(\tmthree'\isub{\var}{\val'})$.
		By \ih, $\tmthree\isub{\var}{\val} \partobvindlong{n +  \sizep{\tmthree'}{\var} \cdot m} \tmthree'\isub{\var}{\val'}$, thus
		\begin{align*}
		\AxiomC{$\tmthree\isub{\var}{\val} \partobvindlong{n +  \sizep{\tmthree'}{\var} \cdot m} \tmthree'\isub{\var}{\val'}$}
		\UnaryInfC{$\tm\isub{\var}{\val} = \la {\vartwo}\tmthree\isub{\var}{\val} \partobvindlong{n +  \sizep{\tm'}{\var} \cdot m} \la{\vartwo}\tmthree'\isub{\var}{\val'} = \tm'\isub{\var}{\val'}$}
		\DisplayProof\,.
		\end{align*}

	\item \emph{Application}, \ie	
	\[\AxiomC{$\tmthree \partobvind {n_\tmthree} \tmthree'$}
	\AxiomC{$\tmfour \partobvind {n_\tmfour} \tmfour'$}
	\BinaryInfC{$\tm = \tmthree \tmfour \partobvindlong {n_\tmthree + n_\tmfour} \tmthree' \tmfour'= \tm'$}
	\DisplayProof\]
	with $n = n_\tmthree + n_\tmfour$. By \ih, $\tmthree\isub\var\val \partobvindlong {n_\tmthree +\sizep{\tmthree'}\var \cdot m} \tmthree'\isub\var{\val'}$ and $\tmfour\isub\var\val \partobvindlong {n_\tmfour +\sizep{\tmfour'}\var \cdot m} \tmfour'\isub\var{\val'}$. Then 
	\[\AxiomC{$\tmthree\isub\var\val \partobvindlong {n_\tmthree +\sizep{\tmthree'}\var \cdot m} \tmthree'\isub\var{\val'}$}
	\AxiomC{$\tmfour\isub\var\val \partobvindlong {n_\tmfour +\sizep{\tmfour'}\var \cdot m} \tmfour'\isub\var{\val'}$}
	\BinaryInfC{$\tm\isub\var\val = \tmthree\isub\var\val \tmfour\isub\var\val \partobvind{k} \tmthree'\isub\var{\val'} \tmfour'\isub\var{\val'}= \tm'\isub\var{\val'}$}
	\DisplayProof\]
	with $k = n_\tmthree +\sizep{\tmthree'}\var \cdot m +n_\tmfour +\sizep{\tmfour'}\var \cdot m$, which proves the statement because $$k = n_\tmthree +\sizep{\tmthree'}\var \cdot m +n_\tmfour +\sizep{\tmfour'}\var \cdot m = n + (\sizep{\tmthree'}\var + \sizep{\tmfour'}\var)\cdot m = n + \sizep{\tm'}\var \cdot m.$$

\item \emph{$\betav$-step}, \ie ($\wal$ and $\waltwo$ are values)
	\[\AxiomC{$\tmthree \partobvind {n_\tmthree} \tmthree'$}
	\AxiomC{$\wal \partobvind {n_{\wal}} \waltwo$}
\BinaryInfC{$\tm = (\la\vartwo \tmthree)\wal \partobvindlong {n_\tmthree + \sizep{\tmthree'}\vartwo \cdot n_{\wal} +1} \tmthree'\isub \vartwo {\waltwo} = \tm'$}
\DisplayProof\]
with $n = n_\tmthree + \sizep{\tmthree'}\vartwo \cdot n_{\wal} +1$. We can assume without loss of generality that $\vartwo\neq \var$, 
		hence, $\sizep{\tm'}{\var} = \sizep{\tmthree'\isub\vartwo {\waltwo}}\var = \sizep{\tmthree'}{\var} + \sizep{\tmthree'}{\vartwo}\cdot\sizep{\waltwo}{\var} $ and $\tm\isub\var\val = (\la\vartwo\tmthree\isub\var\val) (\wal\isub\var\val)$ and $\tm'\isub\var{\val'} = \tmthree'\isub\var{\val'} \isub\vartwo{\waltwo\isub\var{\val'}}$.
		
		By \ih, $\tmthree\isub\var\val \partobvindlong {n_\tmthree +\sizep{\tmthree'}\var \cdot m} \tmthree'\isub\var{\val'}$ and $\wal\isub\var\val \partobvindlong {n_{\wal} +\sizep{\waltwo}\var \cdot m} \waltwo\isub\var{\val'}$. 
		Then 
	\[
	\def\ScoreOverhang{1pt}
	\AxiomC{$\tmthree\isub\var\val \partobvindlong {n_\tmthree +\sizep{\tmthree'}\var \cdot m} \tmthree'\isub\var{\val'}$}
	\AxiomC{$\wal\isub\var\val \partobvindlong {n_{\wal} +\sizep{\waltwo}\var \cdot m} \waltwo\isub\var{\val'}$}
	\BinaryInfC{$\tm\isub\var\val = (\la\vartwo\tmthree\isub\var\val) (\wal\isub\var\val) \partobvind{k} \tmthree'\isub\var{\val'} \isub\vartwo{\waltwo\isub\var{\val'}}= \tm'\isub\var{\val'}$}
	\DisplayProof
	\def\ScoreOverhang{4pt}
	\]
where $k = n_\tmthree +\sizep{\tmthree'}\var \cdot m +\sizep{\tmthree'}\vartwo\cdot(n_{\wal} +\sizep{\waltwo}\var \cdot m)+1 = n_\tmthree +\sizep{\tmthree'}\var \cdot m +\sizep{\tmthree'}\vartwo\cdot(n_{\wal} +\sizep{\waltwo}\var \cdot m)+1 
=
n_\tmthree + \sizep{\tmthree'}\vartwo\cdot n_{\wal} + 1 + \sizep{\tmthree'}\var \cdot m + \sizep{\tmthree'}\vartwo\cdot \sizep{\waltwo}\var \cdot m 
= n + \sizep{\tm'}\var\cdot m$.
\qedhere
\end{itemize}
\end{proof}

\subsection{Omitted proofs  of \refsect{weak-cbv} (weak \cbv)}

\setcounter{propositionAppendix}{\value{prop:macro-cbv}} 
\begin{propositionAppendix}[Weak \cbv macro-step system]
	\label{propappendix:macro-cbv}
	\NoteState{prop:macro-cbv}
\begin{enumerate}
 \item \label{pappendix:macro-cbv-merge} \emph{Merge}: if $\tm \partonotw\cdot \tow \tmthree$ then $\tm \partobv \tmthree$.
 \item \label{pappendix:macro-cbv-indexed-split} \emph{Indexed split}: if $\tm \partobvind n \tmtwo$ then $\tm \partonotw\tmtwo$, or $n>0$ and $\tm \tow \cdot \partobvind {n-1} \tmtwo$.
 \item \label{pappendix:macro-cbv-split} \emph{Split}: if $\tm \partobv \tmtwo$ then $\tm \tow^* \cdot \partonotw\tmtwo$.
\end{enumerate}
That is, $(\Lambda, \set{\tow, \tonotw})$ is a macro-step system with respect to $\partobv$ and $\partonotw$.
\end{propositionAppendix}

\begin{proof}
\hfill
\begin{enumerate}
 \item \emph{Merge}: by induction on $\tm \partonotw\tmtwo$. Note that the cases in which $\tmtwo=\var$ or $\tmtwo=\l x. \tms'$ are not possible.  Hence $ \tmtwo=\tmfour'\tmfive' $  and $\tm \partonotw\tmtwo$ is derived as follows
	\begin{align*}
	\AxiomC{$\tmfour \partonotw \tmfour'$}
	\AxiomC{$\tmfive \partonotw \tmfive'$}
	\BinaryInfC{$\tm = \tmfour\tmfive \partonotw \tmfour'\tmfive' =\tmtwo$}
	\DisplayProof
	\end{align*}
\begin{enumerate}
\item If $ \tmfour' \tow \tmfour''$ then $\tmtwo = \tmfour'' \tmfive'$. The \ih gives $\tmfour \partobv \tmfour''$, and $\tm \partobv \tmtwo$ is derived as follows (remember that $\partonotw \subseteq \partobv$):
\begin{align*}
	\AxiomC{$\tmfour \partobv \tmfour''$}
	\AxiomC{$\tmfive \partobv \tmfive'$}
	\BinaryInfC{$\tm = \tmfour\tmfive \partobv \tmfour''\tmfive' =\tmtwo$}
	\DisplayProof
	\end{align*}

	\item If $ \tmfive' \tow \tmfive''$ it is analogous to the previous case.

\item If 	$ \tmtwo \tow \tmthree$ by a top $\betav$ step then $\tmfour' = \la x \tmq'$. Now, by definition of $\partonotw$ the step $\tmfour \partonotw \tmfour'$ necessarily has the form $\tmfour = \la x \tmq \partonotw \la x \tmq' = \tmfour'$ for some $\tmq$ such that $\tmq\partobv \tmq'$. Then the hypothesis is $\tm = (\la x \tmq) \tmfive \partonotw (\la x \tmq') \tmfive' \tow \tmq'\isub{\var}{\tmfive'} = \tmthree$ and  $\tm \partobv \tmtwo$ is derived as follows (remember that $\partonotw \subseteq \partobv$):
\begin{align*}
\AxiomC{$\tmq\partobv \tmq'$}
\AxiomC{$\tmfive \partobv \tmfive'$}
\BinaryInfC{$\tm = (\la x \tmq) \tmfive \partobv  \tmq'\isub{\var}{\tmfive'} =\tmtwo $}
\DisplayProof
\end{align*}
\end{enumerate}

\item \emph{Indexed split}: by induction on the definition of $\tm \partobvind n \tmtwo$. 
	We freely use the fact that if $\tm \partobvind n \tmtwo$ then $\tm \partobv \tmtwo$. Cases:
	\begin{itemize}
		\item \emph{Variable}: $\tm = \var \partobvind 0 \var = \tmtwo$. 
		Then, $\tm = \var \partonotw\var = \tmtwo$.
		
		\item \emph{Abstraction}: $\tm = \la \var\tm' \partobvind n  \la \var \tmtwo' = \tmtwo$ because $\tm' \partobvind n \tmtwo'$. 
		Then, 
		\begin{align*}
		\AxiomC{$\tm' \partobv \tmtwo'$}
		\UnaryInfC{$\tm = \la{\var}\tm' \partonotw \la{\var}\tmtwo' = \tmtwo$}
		\DisplayProof
		\,.
		\end{align*}
		
	\item \emph{Application}:
	\[\AxiomC{$\tmfour \partobvind {n_1} \tmfour'$}
	\AxiomC{$\tmfive \partobvind {n_2} \tmfive'$}
	\BinaryInfC{$\tm = \tmfour \tmfive \partobvindlong {n_1 + n_2} \tmfour' \tmfive' = \tmtwo$}
	\DisplayProof\]
	with $n = n_1 + n_2$. There are only two cases:
	\begin{itemize}
		\item either $\tmfour \tmfive \partonotw \tmfour' \tmfive'$, and then the claim holds.
		\item or $\tmfour \tmfive \not\partonotw \tmfour' \tmfive'$, and then $\tmfour \not\partonotw \tmfour'$ or $\tmfive \not\partonotw \tmfive'$ (otherwise $\tmfour \tmfive \partonotw \tmfour' \tmfive'$). 
		Suppose $\tmfour \not\partonotw \tmfour'$ (the other case is analogous). 
		By \ih applied to $\tmfour \partobvind{n_1} \tmfour'$,
		$n_1>0$ and there is $\tmfour''$ such that $\tmfour \tow \tmfour'' \partobvind {n_1-1} \tmfour'$. 
		So, $\tm = \tmfour \tmfive \tow \tmfour'' \tmfive$ and 
		\[\AxiomC{$\tmfour'' \partobvindlong {n_1-1} \tmfour'$}			
		\AxiomC{$\tmfive \partobvind {n_2} \tmfive'$}	
		\BinaryInfC{$\tmfour'' \tmfive \partobvindlong {n_1-1 + n_2} \tmfour' \tmfive' = \tmtwo$}
		\DisplayProof
		\ .
		\]
	\end{itemize}
		
		\item \emph{$\betav$ step}:
		\[\AxiomC{$\tmfive \partobvind {n_1} \tmfive'$}
		\AxiomC{$\tmfour$ is a value}
		\AxiomC{$\tmfour \partobvind {n_2} \tmfour'$}		
		\TrinaryInfC{$\tm = (\la\var \tmfive)\tmfour \partobvindlong {n_1 + \sizep{\tmfive'}\var \cdot n_2 +1} \tmfive'\isub \var {\tmfour'} = \tmtwo$}
		\DisplayProof\]
		with $n = n_1 + \sizep{\tmfive'}\var \cdot n_2 +1 > 0$. 
		We have $\tm = (\la\var \tmfive)\tmfour \tow \tmfive\isub\var\tmfour \eqdef \tmthree$ and substitutivity of $\partobvind n$ (\reflemma{tobv-subs}) gives $\tmthree = \tmfive\isub\var\tmfour \partobvindlong{n_1 + \sizep{\tmfive'}\var \cdot n_2} \tmfive' \isub\var{\tmfour'} = \tmtwo$.		
	\end{itemize}
	
	\item \emph{Split}: exactly as in the head case (\refpropp{macro-head}{split}), using the Indexed Split property for weak \cbv (\refpoint{macro-cbv-indexed-split} above).
	\qedhere
	\end{enumerate}
\end{proof}

\setcounter{propositionAppendix}{\value{essential-weak}}
\begin{propositionAppendix}[Weak \cbv essential system]
	\label{propappendix:essential-weak}
	\NoteState{prop:essential-weak}
	\begin{enumerate}
		\item\label{pappendix:essential-weak-persistence} \emph{Persistence}: if $\tm \tow \tmtwo$ and $\tm \tonotw \tmthree$ then $\tmthree \tow \tmfour$ for some $\tmfour$.
		\item\label{pappendix:essential-weak-diamond} \emph{Diamond}: if $\tmtwo\ltow \tm \tow\tmthree$ with $\tmtwo \neq \tmthree$ then $\tmtwo \tow \tmfour \ltow \tmthree$ for some $\tmfour$.
	\end{enumerate}
	Then, $(\Lambda, \set{ \tow, \tonotw})$ is an essential system.
\end{propositionAppendix}

\begin{proof}\hfill
	\begin{enumerate}
		\item \emph{Persistence}: By induciton on the definition of $\tm \tonotw \tmtwo$.
		Cases:
		\begin{itemize}
			\item \emph{Abstraction}: $\tm = \la\var\tmfive \tonotw \la\var\tmfive'$ because $\tmfive \tobv \tmfive'$. This case is impossible because $\tm$ is $\tow$ normal, against the hypothesis that $\tm \tow \tmthree$.
			
			\item \emph{Application left}: $\tm = \tmfive\tmsix \tonotw \tmfive_{\neg\wsym}\tmsix = \tmtwo$ because $\tmfive \tonotw \tmfive_{\neg\wsym}$.
			According to the definition of $\tow $, there are the following sub-cases:
			\begin{enumerate}
				\item $\tm = \tmfive\tmsix \tow \tmfive_\wsym\tmsix = \tmthree$ with $\tmfive \tow \tmfive_\wsym$; 
				by \ih applied to $\tmfive$, we have $\tmfive_{\neg\wsym} \tow \tmfour'$  for some term $\tmfour'$, and so 
				\begin{align*}
				\AxiomC{$\tmfive_{\neg\wsym} \tow \tmfour'$}
				\UnaryInfC{$\tmtwo = \tmfive_{\neg\wsym}\tmsix \tow \tmfour'\tmsix =: \tmfour$}
				\DisplayProof
				\,;
				\end{align*}
				\item $\tm = \tmfive\tmsix \tow \tmfive\tmsix' = \tmthree$ with  $\tmsix \tow \tmsix'$; 
				hence
				\begin{align*}
				\AxiomC{$\tmsix \tow \tmsix'$}
				\UnaryInfC{$\tmtwo = \tmfive_{\neg\wsym}\tmsix \tow \tmfive_{\neg\wsym}\tmsix' =`: \tmfour$}
				\DisplayProof
				\,;
				\end{align*}		
				
				\item $\tm = (\la{\var}{\tmfive'})\tmsix \tow \tmfive'\isub{\var}{\tmsix} = \tmthree$ where $\tmfive = \la{\var}{\tmfive'}$ and $\tmsix$ is a value. 
				According to the definition of $\tm \tonotw \tmtwo$,
				\begin{align*}
				\AxiomC{$\tmfive \tobv \tmfive_{\neg\wsym}$}
				\UnaryInfC{$\la{\var}\tmfive' \tonotw \la{\var}\tmfive_{\neg\wsym}'$}
				\UnaryInfC{$\tm = (\la{\var}\tmfive')\tmsix \tonotw (\la{\var}{\tmfive_{\neg\wsym}'})\tmsix = \tmtwo$}
				\DisplayProof
				\end{align*}
				where $\tmfive_{\neg\wsym} = \la{\var}{\tmfive_{\neg\wsym}'}$; therefore
				\begin{align*}
				\AxiomC{}
				\UnaryInfC{$\tmtwo = (\la{\var}\tmfive_{\neg\wsym}')\tmsix \tow \tmfive_{\neg\wsym}'\isub{\var}{\tmsix} =: \tmfour$}
				\DisplayProof
				\,.
				\end{align*}
			\end{enumerate}
			
			\item \emph{Application right}: analogous to the previous point.
		\end{itemize}
		
		\item \emph{Diamond}: 
		The idea of the proof is that, when $(\la{\var}\tm)\val \tow \tm\isub{\var}{\val}$, the $\betav$-redexes in $\val$ (which is a value) can be duplicated in $\tm\isub{\var}{\val}$ but they are under an abstraction and $\tow$ does not reduce  under abstractions.
		
		Formally, the proof is by induction on $\tm$.
		Note that $\tm$ is not a value, because values are $\tow$-normal, since $\tow$ does not reduce under abstractions.
		Therefore, $\tm = \tm_0\tm_1$.
		The case where $\tm_0 = \la{\var}{\tm'}$ and $\tm_1$ is a value is impossible, because $\tm_0$ and $\tm_1$ would be $\tow$-normal and so from $\tmtwo \ltow \tm \tow \tmthree$ it would follow $\tmtwo = \tm'\isub{\var}{\tm_1} = \tmthree$, which contradicts the hypothesis. 
		The remaining cases for $\tm = \tm_0\tm_1$ are:
		\begin{itemize}
			\item $\tmtwo = \tm_0\tmtwo_1 \ltow \tm = \tm_0\tm_1 \tow \tmthree_0\tm_1 = \tmthree$ with $\tm_0 \tow \tmthree_0$,  and $\tm_1 \tow  \tmtwo_1$. 
			Then, $\tmtwo \tow \tmthree_0\tmtwo_1 \ltow \tmthree$.
			
			\item $\tmtwo = \tmtwo_0 \tm_1 \ltow \tm = \tm_0\tm_1 \tow \tmthree_0\tm_1 = \tmthree$ with $\tmtwo_0 \ltow \tm_0 \tow \tmthree_0$. 
			By \ih, there is $\tmfour_0$ such that $\tmtwo_0 \tow \tmfour_0 \ltow \tmthree_0$.
			Thus, $\tmtwo \tow \tmfour_0\tm_1 \ltow \tmthree$.
			
			\item $\tmtwo = \tm_0\tmtwo_1  \ltow \tm = \tm_0\tm_1 \tow \tm_0\tmthree_0 = \tmthree$ with $\tmtwo_1 \ltow \tm_1 \tow \tmthree_1$. 
			Analogous to the previous case.
			\qedhere
		\end{itemize}
	\end{enumerate}
\end{proof}

\subsection{Omitted proofs  of \refsect{leftmost} (leftmost-outermost)}

\setcounter{propositionAppendix}{\value{prop:essential-left}}
\begin{propositionAppendix}[\lo essential properties]
	\label{propappendix:essential-left}
	\NoteState{prop:essential-left}
\hfill
\begin{enumerate}
\item\label{pappendix:essential-left-completeness} \emph{\Completeness}: if $\tm \tob \tmtwo$ then there exists $\tmthree$ such that $\tm \tolo \tmthree$.
\item\label{pappendix:essential-left-determinism} \emph{Determinism}: if $\tm \tolo \tmtwo_1$ and $\tm\tolo \tmtwo_2$, then $\tmtwo_1 = \tmtwo_2$.
\item\label{pappendix:essential-left-persistence} \emph{Persistence}: if $\tm \tolo \tmtwo_1$ and $\tm \tonotlo \tmtwo_2$, then $\tmtwo_2 \tolo \tmthree$ for some $\tmthree$.
\end{enumerate}
\end{propositionAppendix}

\begin{proof}\hfill
\begin{enumerate}
	\item \emph{\Completeness}: by induction on $\tm$. Cases:
  \begin{itemize}
    \item \emph{Variable}, \ie $\tm = \var$: then $\tm \not\tob\tmtwo$, and so the statement trivially holds.
    
    \item \emph{Abstraction}, \ie $\tm = \la\var\tm' \tob \la\var\tmtwo' = \tmtwo$. It follows by the \ih
    
    \item \emph{Application}, \ie $\tm = \tmfour \tmfive$. Three sub-cases:
    \begin{itemize}
      \item \emph{$\tmfour$ is an abstraction}, \ie $\tmfour = \la\var\tmsix$: then $\tm = (\la\var\tmsix) \tmfive \tolo \tmsix\isub\var\tmsix$.
      
      \item \emph{$\tmfour$ is not an abstraction but it is not normal}, \ie $\tmfour \tob \tmfour'$ for some $\tmfour'$: then by \ih $\tmfour \tolo \tmsix$ for some $\tmsix$ and so $\tm = \tmfour \tmfive \tolo \tmsix\tmfive$.
      
      \item \emph{$\tmfour$ is neutral}, \ie $\tm$ is not normal implies $\tmfive$ not normal. Then by \ih $\tmfive \tolo \tmfive'$ for some $\tmfive'$, and so $\tm = \tmfour \tmfive \tolo \tmfour \tmfive'$.
    \end{itemize}
  \end{itemize}
  
%
%
	\item \emph{Determinism}: By induction on a derivation with conclusion $\tm \tolo \tmtwo_1$. 
	Consider its last rule.
	Cases:
	\begin{itemize}
		\item \emph{Abstraction}, \ie $\tm = \la{\var}{\tm'} \tolo \la{\var}{\tmtwo_1'} = \tmtwo_1$ because $\tm' \tolo \tmtwo_1'$.
		According to the definition of $\tolo$, the last rule of a derivation for $\tm \tolo \tmtwo_2$ can be only	 
		\begin{align*}
		\AxiomC{$\tm' \tolo \tmtwo_2'$}
		\UnaryInfC{$\tm = \la{\var}{\tm'} \tolo \la{\var}{\tmtwo_2'} = \tmtwo_2$}
		\DisplayProof
		\ .
		\end{align*}
		By \ih applied to $\tm'$, we have $\tmtwo_1' = \tmtwo_2'$ and hence $\tmtwo_1 = \la{\var}{\tmtwo_1'} = \la{\var}{\tmtwo_2'} = \tmtwo_2$.
		
		\item \emph{Application left}, \ie $\tm = \tm' \tm'' \tolo \tmtwo_1'\tm'' = \tmtwo_1$ because $\tm' \tolo \tmtwo_1'$ and $\tm' \neq \la{\var}{\tmfour}$.
		According to the definition of $\tolo$, the last rule of a derivation for $\tm \tolo \tmtwo_2$ can only be (since $\tm$ is neither an abstraction nor neutral)
		\begin{align*}
		\AxiomC{$\tm' \tolo \tmtwo_2'$}
		\UnaryInfC{$\tm = \tm'\tm'' \tolo \tmtwo_2'\tm'' = \tmtwo_2$}
		\DisplayProof
		\ .
		\end{align*}
		By \ih applied to $\tm'$, we have $\tmtwo_1' = \tmtwo_2'$ and hence $\tmtwo_1 = \tmtwo_1'\tm'' = \tmtwo_2'\tm'' = \tmtwo_2$.

		\item \emph{Application right}, \ie $\tm = \tm' \tm'' \tolo \tm'\tmtwo_1'' = \tmtwo_1$ because $\tm'' \tolo \tmtwo_1''$ and $\tm'$ is neutral.
		According to the definition of $\tolo$, the last rule of a derivation for $\tm \tolo \tmtwo_2$ can only be (since $\tm$ is normal and not an abstraction)
		\begin{align*}
		\AxiomC{$\tm'' \tolo \tmtwo_2''$}
		\UnaryInfC{$\tm = \tm'\tm'' \tolo \tm'\tmtwo_2'' = \tmtwo_2$}
		\DisplayProof
		\ .
		\end{align*}
		By \ih applied to $\tm'$, we have $\tmtwo_1' = \tmtwo_2'$ and hence $\tmtwo_1 = \tm'\tmtwo_1'' = \tm'\tmtwo_2'' = \tmtwo_2$.
		
		\item \emph{$\beta$-rule}, \ie $\tm = (\la{\var}\tm')\tm'' \tolo \tm'\isub{\var}{\tm''} = \tmtwo_1$.
		According to the definition of $\tolo$, the only possibility for the last rule of a derivation for $\tm \tolo \tmtwo_2$ is 
		\begin{align*}
		\AxiomC{}
		\UnaryInfC{$\tm = (\la{\var}{\tm'})\tm'' \tolo \tm'\isub{\var}{\tm''} = \tmtwo_2$}
		\DisplayProof
		&&\text{and hence } \tmtwo_1 = \tm'\isub{\var}{\tm''} = \tmtwo_2.
		\text{\qedhere}
		\end{align*}
	\end{itemize}

	\item \emph{Persistence}: by induction on $\tm \tolo \tmtwo_1$. Cases:
  \begin{itemize}
    \item \emph{Root}: $\tm = (\la\var \tmfour) \tmfive \tolo \tmfour\isub\var\tmfive = \tmtwo_1$. Three sub-cases:
    \begin{itemize}
      \item $\tm = (\la\var \tmfour) \tmfive \tonotlo (\la\var \tmfour') \tmfive  = \tmtwo_2$ because $\tmfour \tob \tmfour'$. Then $\tmtwo_2 = (\la\var \tmfour') \tmfive \tolo \tmfour'\isub\var\tmfive =: \tmthree$. 
      
      \item $\tm = (\la\var \tmfour) \tmfive \tonotlo (\la\var \tmfour') \tmfive  = \tmtwo_2$ because $\tmfour \tonotlo \tmfour'$. Exactly as the previous one.
      
      \item $\tm = (\la\var \tmfour) \tmfive \tonotlo (\la\var \tmfour) \tmfive ' = \tmtwo_2$ because $\tmfive \tob \tmfive'$. Then $\tmtwo_2 = (\la\var \tmfour) \tmfive' \tolo \tmfour\isub\var\tmfive' =: \tmthree$.
    \end{itemize}
    
    \item \emph{Abstraction}: $\tm = \la\var \tmfour \tolo \la\var \tmfour'= \tmtwo_1$. Then $\tm = \la\var \tmfour \tonotlo \la\var \tmfour''= \tmtwo_2$ and the statement follows from the \ih and closure of $\tolo$.
    
    \item \emph{Left of an application}: $\tm = \tmfour \tmfive \tolo \tmfour_1 \tmfive = \tmtwo_1$ with $\tmfour \tolo \tmfour_1$ and $\tmfour$ not an abstraction. Two sub-cases:    
    \begin{itemize}
      \item $\tm = \tmfour \tmfive \tonotlo \tmfour_2 \tmfive = \tmtwo_2$ because $\tmfour \tonotlo \tmfour_2$. Then by \ih there exists $\tmsix$ such that $\tmfour_2 \tolo \tmsix$. Note that $\tonotlo$ cannot create a root abstraction (because it never reduces the root redex) so that if $\tmfour$ is not an abstraction  then $\tmfour_2$ is not an abstraction and $\tmtwo_2 = \tmfour_2 \tmfive \tolo \tmsix \tmfive =:\tmthree$.
      
      \item $\tm = \tmfour \tmfive \tonotlo \tmfour \tmfive' = \tmtwo_2$ by one of the two rules for able to derive it. In both cases $\tmfive \tob \tmfive'$. Then $\tmtwo_2 = \tmfour \tmfive' \tolo \tmfour_1 \tmfive' =:\tmthree$ and $\tmtwo_1 = \tmfour_1 \tmfive \tob \tmfour_1 \tmfive' =\tmthree$.
    \end{itemize}
    
    \item \emph{Right of an application}: $\tm = \tmfour \tmfive \tolo \tmfour \tmfive_1 = \tmtwo_1$ with $\tmfive \tolo \tmfive_1$ and $\tmfour$ neutral. Then necessarily $\tm = \tmfour \tmfive \tolo \tmfour \tmfive_2 = \tmtwo_2$ with $\tmfive \tonotlo \tmfive_2$. The statement then follows by the \ih    
    \qedhere
  \end{itemize}
\end{enumerate}
\end{proof}

\setcounter{propositionAppendix}{\value{prop:macro-left}}
\begin{propositionAppendix}[\lo macro-step system]
	\label{propappendix:macro-left}
	\NoteState{prop:macro-left}
\hfill
  \begin{enumerate}
	\item\label{pappendix:macro-left-merge} \emph{Merge}: if $\tm \partonotlo \cdot \tolo \tmthree$ then $\tm \partob \tmthree$.
	\item\label{pappendix:macro-left-indexed-split} \emph{Indexed split}: if $\tm \partobind n \tmtwo$ then $\tm \partonotlo \tmtwo$, or $n>0$ and  $\tm \tolo \cdot \partobind {n-1} \tmtwo$.
	\item\label{pappendix:macro-left-split} \emph{Split}: if $\tm \partob \tmtwo$ then $\tm \tolo^* \cdot \partonotlo \tmtwo$.
	\end{enumerate}
	That is, $(\Lambda, \set{\tolo, \tonotlo})$ is a macro-step system with respect to $\partob$ and $\partonotlo$.
\end{propositionAppendix}

\begin{proof}
\hfill
  \begin{enumerate}
	\item \emph{Merge}: by induction on $\tm \partonotlo \tmtwo$. Cases:
\begin{itemize}
	\item Rule
	\[\AxiomC{$\tmfour \partob \tmfour'$}
	\AxiomC{$\tmfive \partob \tmfive''$}
	\BinaryInfC{$\tm = (\la\var \tmfour) \tmfive \partonotlo (\la\var \tmfour') \tmfive' = \tmtwo$}
	\DisplayProof\]
	Then $(\la\var \tmfour) \tmfive \partonotlo (\la\var \tmfour') \tmfive' \tolo \tmfour'\isub\var{\tmfive'} = \tmthree$. We simply have:
	\[	\AxiomC{$\tmfour \partob \tmfour'$}
	\AxiomC{$\tmfive \partob \tmfive''$}
\BinaryInfC{$\tm = (\la\var\tmfour)\tmfive \partob \tmfour'\isub\var{\tmfive'} = \tmthree$}
\DisplayProof\]
	
	\item Rule
	\[\AxiomC{$\tmfour$ not neutral}
	\AxiomC{$\tmfour \partonotlo \tmfour'$}
		\AxiomC{$\tmfive \partob \tmfive'$}
	\TrinaryInfC{$\tm = \tmfour \tmfive \partonotlo \tmfour' \tmfive ' = \tmtwo$}
	\DisplayProof\]
	Since $\tmfour$ is not neutral, it is an abstraction or it is not normal. If $\tmfour$ is an abstraction this case continues goes as the first case. Otherwise, $\tmfour$ is not normal, and by persistence (\refpropp{essential-left}{persistence}) $\tmfour'$ is not normal. 
	\Completeness (\refpropp{essential-left}{completeness}) of $\tolo$ gives $\tmfour' \tolo \tmfour''$ for a certain $\tmfour''$. The hypothesis becomes $\tm = \tmfour \tmfive \partonotlo \tmfour' \tmfive ' \tolo \tmfour'' \tmfive' = \tmthree$.
	By \ih, $\tmfour \partob \tmfour''$. Then,
	\[\AxiomC{$\tmfour \partob \tmfour''$}
	\AxiomC{$\tmfive\partob \tmfive'$}
	\BinaryInfC{$\tm = \tmfour \tmfive \partob \tmfour'' \tmfive' = \tmthree$}
	\DisplayProof\]

	\item Rule
	\[\AxiomC{$\tmfour \partonotlo \tmfour'$}
	\UnaryInfC{$\tm = \la\var\tmfour \partonotlo \la\var\tmfour' = \tmtwo$}
	\DisplayProof \]
	Then $\la\var\tmfour \partonotlo \la\var\tmfour' \tolo \la\var\tmfour'' = \tmthree$ with $\tmfour' \tolo \tmfour''$. By \ih, $\tmfour \partob \tmfour''$ and 
	\[\AxiomC{$\tmfour \partob \tmfour'$}
\UnaryInfC{$\tm = \la \var\tmfour \partob \la \var \tmfour'' = \tmthree$}
\DisplayProof\]

	\item Rule 
	\[\AxiomC{$\tmfour$ neutral }
	\AxiomC{$\tmfive \partonotlo \tmfive'$}
	\BinaryInfC{$\tm = \tmfour \tmfive \partonotlo \tmfour \tmfive' = \tmtwo$}
	\DisplayProof  \]
	 Then the hypothesis is $\tmfour \tmfive \partonotlo \tmfour \tmfive' \tolo \tmfour \tmfive'' = \tmthree$ with $\tmfive' \tolo \tmfive''$. By \ih, $\tmfive \partob \tmfive''$, and since $\partob$ is reflexive,
		\[\AxiomC{$\tmfour \partob \tmfour$}
		\AxiomC{$\tmfive\partob \tmfive''$}
		\BinaryInfC{$\tm = \tmfour \tmfive \partob \tmfour \tmfive'' = \tmthree$}
		\DisplayProof\]
\end{itemize}
	\item \emph{Indexed split}: By induction on the definition of $\tm \partobind n \tmtwo$. We use freely the fact that if $\tm \partobind n \tmtwo$ then $\tm \partob \tmtwo$. Cases:
\begin{itemize}
	\item \emph{Variable}: $\tm = \var \partobind 0 \var = \tmtwo$. Then $\tm = \var \partonotlo \var = \tmtwo$.

	\item \emph{Abstraction}: $\tm = \la \var\tm' \partobind n  \la \var \tmtwo' = \tmtwo$ because $\tm' \partobind n \tmtwo'$. It follows from the \ih.

	\item \emph{Application}:
	\begin{align*}
	\label{eq:app}
	\AxiomC{$\tmfour \partobind {n_1} \tmfour'$}
	\AxiomC{$\tmfive \partobind {n_2} \tmfive'$}
	\BinaryInfC{$\tm = \tmfour \tmfive \partobindlong {n_1 + n_2} \tmfour' \tmfive' = \tmtwo$}
	\DisplayProof
	\end{align*}
	with $n = n_1 + n_2$. There are only two cases:
	\begin{itemize}
		\item either $\tmfour \tmfive \partonotlo \tmfour' \tmfive'$, and then the claim holds;
		\item or $\tmfour \tmfive \not\partonotlo \tmfour' \tmfive'$, then the following conditions hold (otherwise $\tmfour \tmfive \partonotlo \tmfour' \tmfive'$):
		\begin{enumerate}
			\item $\tmfour$ is not an abstraction;
			\item if $\tmfour$ is neutral then $\tmfive \not\partonotlo \tmfive'$;
			\item if $\tmfour$ is not neutral then $\tmfour \not\partonotlo \tmfour'$; 
		\end{enumerate} 
		So, if $\tmfour$ is neutral, then by \ih applied to $\tmfive \partobind{n_2} \tmfive'$, $n_2 > 0$ and there is $\tmfive''$ such that $\tmfive \tolo \tmfive'' \partobind{n_2-1} \tmfour'$; thus, $\tm = \tmfour \tmfive \tolo \tmfour\tmfive''$ and 
		\[
		\AxiomC{$\tmfour \partobind {n_1} \tmfour'$}	
		\AxiomC{$\tmfive'' \partobindlong {n_2-1} \tmfive'$}		
		\BinaryInfC{$\tmfour \tmfive'' \partobindlong {n_1 + n_2-1} \tmfour' \tmfive' = \tmtwo$}
		\DisplayProof
		\ .
		\]
		
		Otherwise $\tmfour$ is not neutral and hence, by \ih applied to $\tmfour \partobind{n_1} \tmfour'$,
		$n_1>0$ and there exists $\tmfour''$ such that $\tmfour \tolo \tmfour'' \partobind {n_1-1} \tmfour'$;
		thus, $\tm = \tmfour \tmfive \tolo \tmfour'' \tmfive$ and 
		\[\AxiomC{$\tmfour'' \partobindlong {n_1-1} \tmfour'$}			
		\AxiomC{$\tmfive \partobind {n_2} \tmfive'$}	
		\BinaryInfC{$\tmfour'' \tmfive \partobindlong {n_1-1 + n_2} \tmfour' \tmfive' = \tmtwo$}
		\DisplayProof
		\ .
		\]
	\end{itemize}

	\item \emph{$\beta$ step}:
		\[\AxiomC{$\tmthree \partobind {n_1} \tmthree'$}
		\AxiomC{$\tmfour \partobind {n_2} \tmfour''$}		
		\BinaryInfC{$\tm = (\la\var \tmthree)\tmfour \partobindlong {n_1 + \sizep{\tmtwo}\var \cdot n_2 +1} \tmthree'\isub \var {\tmfour'} = \tmtwo$}
		\DisplayProof\]
		With $n = n_1 + \sizep\tmtwo\var \cdot n_2 +1 > 0$. We have $\tm = (\la\var \tmthree)\tmfour \tolo \tmthree\isub\var\tmfour$ and by substitutivity of $\partobind n$ (\reflemma{partobind-subs}) $\tmthree\isub\var\tmfour \partobindlong{n_1 + \sizep\tmtwo\var \cdot n_2} \tmthree' \isub\var{\tmfour'} = \tmtwo$.
\end{itemize}

	\item \emph{Split}: exactly as in the head case (\refpropp{macro-head}{split}), using the Indexed Split property for \lo (\refpoint{macro-left-indexed-split} above).
	\qedhere
	\end{enumerate} 
\end{proof}

\subsection{Omitted proofs and lemmas of \refsect{stratified-cbn} (least-level)}

\setcounter{propositionAppendix}{\value{prop:deg}}
\begin{propositionAppendix}[Least level properties]
	\label{propappendix:deg}
	\NoteState{prop:deg}
\begin{enumerate}
	\item\label{pappendix:deg-finite}\emph{Computational meaning of $\llsym$}: $\Deg{\tm} = \inf\{ k \in \nat \mid \tm \tostratind{k} \tmthree \text{ for some term } \tmthree\}$.
	\item\label{pappendix:deg-invariance-leq} \emph{Monotonicity}: if $\tm \tob \tmtwo$ then $\Deg{\tmtwo} \geq \Deg{\tm}$. 
	\item\label{pappendix:deg-invariance-equal} \emph{Invariance by $\tonotstrat$}: if $\tm \tonotstrat \tmtwo$ then $\Deg{\tmtwo} = \Deg{\tm}$. 
\end{enumerate}
\end{propositionAppendix}

\begin{proof}\hfill
	\begin{enumerate}
		\item 
		By induction on $\tm$.
		For any term $\tmfour$, we set $\inf_\tmfour = \inf\{ k \in \nat \mid \tmfour \tostratind{k} \tmthree \text{ for some term } \tmthree \}$.
		Cases:
		
		\begin{itemize}
			\item \emph{Variable}, \ie $\tm$ is a variable. Then, $\Deg{\tm} = \infty$ and $\tm$ is $\tob$-normal.
			
			\item \emph{Abstraction}, \ie $\tm = \la{\var}{\tmtwo}$.
			Then, $\Deg{\tm} = \Deg{\tmtwo}$ and, by \ih, $\Deg{\tmtwo} = \inf_\tmtwo$;
			now,
			\begin{align*}
			\AxiomC{$\tmtwo \tostratind{k} \tmthree$}
			\UnaryInfC{$\tm = \la{\var}\tmtwo \tostratind{k} \la{\var}\tmthree$}
			\DisplayProof
			\end{align*}
			and there is no other rule for $\tostratind{n}$ whose conclusion is of the form $\la{\var}{\tmtwo} \tostratind{n} \tmfive$;
			therefore, $\Deg{\tm} = \Deg{\tmtwo} = \inf_\tmtwo = \inf_\tm$.
			
			\item \emph{Application}, \ie $\tm = \tm'\tm''$.
			There are two sub-cases:
			\begin{itemize}
				\item $\tm' = \la{\var}\tmtwo'$, then $\Deg{\tm} = 0$ and
				\begin{align*}
				\AxiomC{}
				\UnaryInfC{$\tm = (\la{\var}\tmtwo')\tm'' \tostratind{0} \tmtwo'\isub{\var}{\tm''} $}
				\DisplayProof
				\end{align*}
				thus $\inf_\tm = 0 = \Deg{\tm}$.
				\item $\tm'$ is not an abstraction, then $\Deg{\tm} = \min \{\Deg{\tm'}, \Deg{\tm''}+1\}$.
				By \ih, $\Deg{\tm'} = \inf_{\tm'}$ and $\Deg{\tm''} = \inf_{\tm''}$; 
				now,
				\begin{align*}
				\AxiomC{$\tm' \tostratind{k} \tmtwo'$}
				\UnaryInfC{$\tm = \tm'\tm'' \tostratind{k} \tmtwo'\tm''$}
				\DisplayProof
				&&\text{and}&&
				\AxiomC{$\tm'' \tostratind{k} \tmtwo''$}
				\UnaryInfC{$\tm = \tm'\tm'' \tostratind{k+1} \tm'\tmtwo''$}
				\DisplayProof
				\end{align*}
				and there is no other rule for $\tostratind{n}$ whose conclusion is of the form $\tm'\tm'' \tostratind{n} \tmfive$ (as $\tm'$ is not an abstraction);
				hence, $\Deg{\tm} = \min\{\inf_{\tm'}, \inf_{\tm''}+1\} = \inf_\tm$.
			\end{itemize}
		\end{itemize}
		
		\item \emph{Monotonicity}: by induction on the definition of $\tm \tob \tmtwo$. 
		Cases:
		\begin{itemize}
			\item \emph{Abstraction}, \ie $\tm = \la{\var}{\tm'} \tob \la{\var}{\tmtwo'} = \tmtwo$ because $\tm' \tob \tmtwo'$.
			Then, $\Deg{\tm} = \Deg{\tm'} \leq \Deg{\tmtwo'} = \Deg{\tmtwo}$ by \ih
			
			\item \emph{Application left}, \ie $\tm = \tm' \tm''  \tob \tmtwo'\tm'' = \tmtwo$ because $\tm' \tob \tmtwo'$.
			By \ih, $\Deg{\tm'} \leq \Deg{\tmtwo'}$.
			Hence, $\Deg{\tm} = \min \{\Deg{\tm'}, \Deg{\tm''} +1\} \leq \min \{ \Deg{\tmtwo'}, \Deg{\tm''} + 1\} = \Deg{\tmtwo}$.
			
			\item \emph{Application right}, \ie $\tm = \tm' \tm''  \tob \tm'\tmtwo'' = \tmtwo$ because $\tm'' \tob \tmtwo''$.
			Analogous to the previous case.
			
			\item \emph{$\beta$-redex}, \ie $\tm = (\la{\var}\tm')\tm'' \tob \tm'\isub{\var}{\tm''} = \tmtwo$. 
			Then, $\Deg{\tm} = 0 \leq \Deg{\tmtwo}$.  
		\end{itemize}
		
		\item \emph{Invariance by $\tonotstrat$}: By hypothesis, $\tm \tostratind{n} \tmtwo$ for some $n > \Deg{\tm}$.
		We proceed by induction on the definition of $\tm \tostratind{n} \tmtwo$.
		Cases:
		\begin{itemize}			
			\item \emph{Abstraction}: $\tm = \la \var\tm' \tostratind{n}  \la \var \tmtwo' = \tmtwo$ because $\tm' \tostratind{n} \tmtwo'$.
			As $n > \Deg{\tm} = \Deg{\tm'}$, then $\Deg{\tmtwo} = \Deg{\tmtwo'} = \Deg{\tm'} = \Deg{\tm}$ by \ih
			
			\item \emph{Application left}, \ie $\tm = \tm' \tm'' \tostratind{n} \tmtwo' \tm'' = \tmtwo$ because $\tm' \tostratind{n} \tmtwo'$.
			According to the definition of $\Deg{\tm}$, there are the following sub-cases:
			\begin{enumerate}
				\item $\Deg{\tm} = \Deg{\tm'} \leq \Deg{\tm''} + 1$ and $\tm'$ is not an abstraction; 
				by \ih applied to $\tm'$ (since $\Deg{\tm'} = \Deg{\tm} < n$), we have $\Deg{\tmtwo'} = \Deg{\tm'}$, and so  $\Deg{\tmtwo} = \min \{\Deg{\tmtwo'}, \Deg{\tm''} + 1\} = \min\{\Deg{\tm'}, \Deg{\tm''} + 1\} = \Deg{\tm}$.
				
				\item $\Deg{\tm} = \Deg{\tm''} + 1 \leq \Deg{\tm'}$ and $\tm'$ is not an abstraction; 
				by \refpropp{deg}{invariance-leq} (since $\tostratind{n} \subseteq \tob$),
				$\Deg{\tm'} \leq \Deg{\tmtwo'}$ and therefore 
				$\Deg{\tmtwo} = \min \{\Deg{\tmtwo'}, \Deg{\tm''} + 1\} = \Deg{\tm''}+1 = \min\{\Deg{\tm'}, \Deg{\tm''} + 1\} = \Deg{\tm}$.
				
				\item $\Deg{\tm} = 0$ and $\tm'= \la{\var}\tmthree'$. 
				According to the definition of $\tm \tostratind{n} \tmtwo$, since $n > 0$, we have
				\begin{align*}
				\AxiomC{$\tmthree' \tostratind{n} \tmtwo'$}
				\UnaryInfC{$\la{\var}\tmthree' \tostratind{n} \la{\var}{\tmtwo'}$}
				\UnaryInfC{$\tm = (\la{\var}\tmthree')\tm'' \tostratind{n} (\la{\var}{\tmtwo'})\tm'' = \tmtwo$}
				\DisplayProof
				&&
				\text{or}
				&&
				\AxiomC{$\tm'' \tostratind{n-1} \tmtwo''$}
				\UnaryInfC{$\tm = (\la{\var}\tmthree')\tm'' \tostratind{n} (\la{\var}{\tmthree'})\tmtwo'' = \tmtwo$}
				\DisplayProof
				\end{align*}
				therefore $\Deg{\tmtwo} = 0 = \Deg{\tm}$.
			\end{enumerate}
			
			\item \emph{Application right}, \ie $\tm = \tm' \tm'' \tostratind{n} \tm' \tmtwo'' = \tmtwo$ because $\tm'' \tostratind{n-1} \tmtwo''$.
			Analogous to the previous case.
			
			\item \emph{$\beta$-step}, \ie $\tm = (\la{\var}\tm')\tm'' \tostratind{0} \tm' \isub{\var}{\tm''} = \tmtwo$
			where $0 = n > \Deg{\tm} = 0$, which is impossible.
			\qedhere
		\end{itemize}
	\end{enumerate}
\end{proof}

\begin{lemma}
	\label{l:not-abs}
	Let $\tm \tostrat \tmtwo$ with $\Deg{\tm} > 0$.
	If $\tm$ is not an abstraction, then $\tmtwo$ is not an abstraction. 
\end{lemma}

\begin{proof}
	By hypothesis, $\tm = \tm'\tm''$ and $\tm'$ is not an abstraction (otherwise $\Deg{\tm} = 0$).
	Therefore, according to the definition of $\tostrat$, there are only two possibilities:
	\begin{enumerate}
		\item either $\tm = \tm' \tm'' \tostratind{k} \tmtwo'\tm'' = \tmtwo$ because $\tm' \tostratind{k} \tmtwo'$ and $k = \Deg{\tm}$;
		\item or $\tm = \tm'\tm'' \tostratind{k} \tm' \tmtwo'' = \tmtwo$ because $\tm'' \tostratind{k - 1} \tmtwo''$ and $k = \Deg{\tm}$.
	\end{enumerate}
	In both cases, $\tmtwo$ is not an abstraction.
\end{proof}

\begin{lemma}[Substitutivity by level]
	\label{l:substitutivity-strat}
	If $\tm \tostratind{k} \tmtwo$ then $\tm\isub{\var}{\tmthree} \tostratind{k} \tmtwo\isub{\var}{\tmthree}$.
%
%
\end{lemma}

\begin{proof}
		By induction on the definition of $\tm \tostratind{k} \tmtwo$.
		Cases:
		\begin{itemize}
			\item \emph{Abstraction}: $\tm = \la{\vartwo}{\tm'} \tostratind{k} \la{\vartwo}{\tmtwo'} = \tmtwo$ with $\tm' \tostratind{k} \tmtwo'$. 
			We can suppose without loss of generality that $\vartwo \notin \fv{\tmthree} \cup \{\var\}$.
			By \ih, $\tm'\isub{\var}{\tmthree} \tostratind{k} \tmtwo'\isub{\var}{\tmthree}$ and hence $\tm\isub{\var}{\tmthree} = \la{\vartwo}\tm'\isub{\var}{\tmthree} \tostratind{k} \la{\vartwo}\tmtwo'\isub{\var}{\tmthree} = \tmtwo\isub{\var}{\tmthree}$.
			
			\item \emph{Application left}: $\tm = \tm'\tm'' \tostratind{k} \tmtwo'\tm'' = \tmtwo$ with $\tm' \tostratind{k} \tmtwo'$. 
			By \ih, $\tm'\isub{\var}{\tmthree} \tostratind{k} \tmtwo'\isub{\var}{\tmthree}$ and hence $\tm\isub{\var}{\tmthree} = \tm'\isub{\var}{\tmthree} \tm''\isub{\var}{\tmthree} \tostratind{k} \tmtwo'\isub{\var}{\tmthree} \tm''\isub{\var}{\tmthree} = \tmtwo\isub{\var}{\tmthree}$.
			
			\item \emph{Application right}: $\tm = \tm'\tm'' \tostratind{k} \tm'\tmtwo'' = \tmtwo$ with $k > 0$ and $\tm'' \tostratind{k} \tmtwo''$. 
			Analogous to the previous case.

			\item \emph{$\beta$-redex}: $\tm = (\la{\vartwo}\tm')\tm'' \tostratind{0} \tm'\isub{\vartwo}{\tm''} = \tmtwo$.
			We can suppose without loss of generality that $\vartwo \notin \fv{\tmthree} \cup \{\var\}$.
			Then, $\tm\isub{\var}{\tmthree} = (\la{\vartwo}\tm'\isub{\var}{\tmthree})\tm''\isub{\var}{\tmthree} \tostratind{0} \tm'\isub{\var}{\tmthree}\isub{\vartwo}{\tm''\isub{\var}{\tmthree}} = \tm'\isub{\vartwo}{\tm''}\isub{\var}{\tmthree} = \tmtwo\isub{\var}{\tmthree}$.
			\qedhere
		\end{itemize}
	
	
\end{proof}

\setcounter{propositionAppendix}{\value{prop:essential-strat}}
\begin{propositionAppendix}[\ll essential properties]
	\label{propappendix:essential-strat}
	\NoteState{prop:essential-strat}\hfill
	\begin{enumerate}
		\item\label{pappendix:essential-strat-completeness}
		\emph{\Completeness}: if $\tm \tob \tmtwo$ then $\tm \toll \tmthree$ for some $\tmthree$.
		\item\label{pappendix:essential-strat-persistence} \emph{Persistence}: if $\tm \toll \tmtwo_1$ and $\tm \tonotll \tmtwo_2$ then $\tmtwo_2 \toll \tmthree$ for some $\tmthree$.
		\item\label{pappendix:essential-strat-diamond} \emph{Diamond}: if $\tmtwo\ltoll \cdot \toll\tmthree$ with $\tmtwo \neq \tmthree$ then $\tmtwo \toll \cdot \ltoll \tmthree$.
	\end{enumerate}
\end{propositionAppendix}

\begin{proof}\hfill
	\begin{enumerate}
		\item \emph{\Completeness}: Since $\tm$ is not $\tob$-normal, then $\infty \neq \Deg{\tm} =  \min\{k \in \nat \mid \tm \tostratind{k} \tmfour \text{ for some } \tmfour\}$ by \refpropp{deg}{finite}.
		Therefore, $\tm \tostrat \tmthree$  for some $\tmthree$.
		
		\item \emph{Persistence}: Since $\tm$ is not $\tob$-normal, $\infty \neq \Deg{\tm} =  \min\{k \in \nat \mid \tm \tostratind{k} \tmfour \text{ for some } \tmfour\}$ by \refpropp{deg}{finite}.
		By least level invariance (\refpropp{deg}{invariance-equal}), $\Deg{\tm} = \Deg{\tmtwo_2}$ and hence, according to \refpropp{deg}{finite} again, we have $\infty \neq \Deg{\tmtwo_2} = \min\{k \in \nat \mid \tmtwo_2 \tostratind{k} \tmthree \text{ for some } \tmthree\}$.
		Therefore, there exists $\tmthree$ such that $\tmtwo_2 \tostrat \tmthree$.

				\item \emph{Diamond}: 
		The idea of the proof is that, when $(\la{\var}\tm)\tmtwo \tostrat \tm\isub{\var}{\tmtwo}$, the $\beta$-redexes in $\tmtwo$ can be duplicated in $\tm\isub{\var}{\tmtwo}$ but they are not at least level and $\tostrat$ does not reduce outside the least level.
		
		Formally, according to the definition of $\tostrat$, we have to prove that $\tmtwo \ltostratind{k} \tm \tostratind{k} \tmthree$ with $k = \Deg{\tm}$, then $\tmtwo \tostratind{m} \tmfour \ltostratind{n} \tmthree$ for some $\tmfour$, where $m = \Deg{\tmtwo}$ and $n = \Deg{\tmthree}$.
		To get the right \ih, we prove also that $\Deg{\tmtwo} = k = \Deg{\tmthree}$ (and so $\tmtwo \tostratind{k} \tmfour \ltostratind{k} \tmthree$). 
		
		Clearly, $\tm$ is not a variable, otherwise it would be $\tob$-normal.
		
		If $\tm = \la{\var}{\tm'}$, then $\tmtwo = \la{\var}{\tmtwo'}$ and $\tmthree = \la{\var}{\tmthree'}$ with $\tmtwo' \ltostratind{k} \tm' \tostratind{k} \tmthree'$ and $\tmtwo' \neq \tmthree'$ and $\Deg{\tm} = \Deg{\tm'}$. 
		By \ih, $\tmtwo' \tostratind{m} \tmfour \ltostratind{n} \tmthree'$ for some term $\tmfour$, with $\Deg{\tmtwo} = \Deg{\tmtwo'} = k = \Deg{\tmthree'} = \Deg{\tmthree}$, hence $\tmtwo = \la{\var}\tmtwo' \tostratind{k} \la{\var}{\tmfour} \ltostratind{k} \la{\var}{\tmthree'} = \tmthree$.
		
		Finally, consider $\tm = \tm_0\tm_1$.
		The case where $\tm_0 = \la{\var}{\tm_2}$ and $\tmtwo = \tm_2\isub{\var}{\tm_1} \ltostratind{k} \tm \tostratind{k} (\la{\var}\tm_2)\tmthree_1 = \tmthree$ with $\tm_1 \tostratind{k-1} \tmthree_1$ is impossible, because $k = \Deg{\tm} = 0$. 
		The remaining cases for $\tm = \tm_0\tm_1$ are:
		\begin{itemize}
			\item $\tmtwo = \tm_0\tmtwo_1 \ltostratind{k} \tm = \tm_0\tm_1 \tostratind{k} \tmthree_0\tm_1 = \tmthree$ with $\tm_0 \tostratind{k} \tmthree_0$  and $\tm_1 \tostratind{k} \tmtwo_1$ and $ \Deg{\tm_0} = \Deg{\tm} = \Deg{\tm_1} + 1 = k > 0$. 
			Then, $\tm_0$ is not an abstraction (otherwise $\Deg{\tm} = 0$) and, by \reflemma{not-abs}, $\tmthree_0$ is not an abstraction. 
			By \refpropp{deg}{invariance-leq}, $\Deg{\tmtwo_1} \geq \Deg{\tm_1}$ and $\Deg{\tmthree_0} \geq \Deg{\tm_0}$.
			Hence, $\Deg{\tmtwo} = \min\{ \Deg{\tm_0}, \Deg{\tmtwo_1}+1 \} = \Deg{\tm_0} = \Deg{\tm_1}+1 = \min\{\Deg{\tmthree_0}, \Deg{\tm_1}+1\} = \Deg{\tmthree}$ and so $\tmtwo \tostratind{k} \tmthree_0\tmtwo_1 \ltostratind{k} \tmthree$.
			
			\item $\tmtwo = \tmtwo_0 \tm_1 \ltostratind{k} \tm = \tm_0\tm_1 \tostratind{k} \tmthree_0\tm_1 = \tmthree$ with $\tmtwo_0 \ltostratind{k} \tm_0 \tostratind{k} \tmthree_0$ and $\tmtwo_0 \neq \tmthree_0$ and $k = \Deg{\tm} = \Deg{\tm_0} \leq \Deg{\tm_1}+1$. 
			By \ih, there is $\tmfour_0$ such that $\tmtwo_0 \tostratind{k} \tmfour_0 \ltostratind{k} \tmthree_0$ where $\Deg{\tmtwo_0} = k = \Deg{\tmthree_0}$.
			Thus, $\tmtwo \tostratind{k} \tmfour_0\tm_1 \ltostratind{k} \tmthree$.
			If $\tmtwo_0$ or $\tmthree_0$ is an abstraction, then $\Deg{\tmtwo} = 0$ or $\Deg{\tmthree} = 0$ and hence (by \refpropp{deg}{invariance-leq}) $0 = \Deg{\tmtwo} \geq \Deg{\tm}$ or $0 = \Deg{\tmtwo} \geq \Deg{\tm}$, so in both cases $k = \Deg{\tm} = 0 = \Deg{\tmtwo} = \Deg{\tmthree}$. 
			Otherwise, $\Deg{\tmtwo} = \min\{\Deg{\tmtwo_0}, \Deg{\tm_1}+1 \} = k = \min\{\Deg{\tmthree_0}, \Deg{\tm_1}+1 \} = \Deg{\tmthree}$.
			
			\item $\tmtwo = \tm_0\tmtwo_1  \ltostratind{k} \tm = \tm_0\tm_1 \tostratind{k} \tm_0\tmthree_0 = \tmthree$ with $\tmtwo_1 \ltostratind{k-1} \tm_1 \tostratind{k-1} \tmthree_1$ and $k = \Deg{\tm} = \Deg{\tm_1}+1$. 
			Analogous to the previous case. 
			
			\item $\tmtwo = \tm_2\isub{\var}{\tm_1} \ltostratind{k} \tm = (\la{\var}\tm_2)\tm_1 \tostratind{k} (\la{\var}{\tmtwo_2})\tm_1  = \tmthree$ where $\tm_0 = \la{\var}{\tm_2}$ and $\tm_2 \tostratind{k} \tmtwo_2$ and $k = \Deg{\tm_2} = \Deg{\tm} = 0 = \Deg{\tmthree}$.
			Thus, $\tm_2 \tostratind{0} \tmtwo_2$ and $\tm \tostratind{0} \tm_2\isub{\var}{\tm_1}$.
			By substitutivity by level (\reflemma{substitutivity-strat}), $\tm_2\isub{\var}{\tm_1} \tostratind{0} \tmtwo_2\isub{\var}{\tm_1}$ and so $\Deg{\tm_2\isub{\var}{\tm_1}} =0$ by \refpropp{deg}{finite}.
			Therefore $\tm_2\isub{\var}{\tm_1} \tostratind{0} \tmtwo_2\isub{\var}{\tm_1} \ltostratind{0} \tmthree$. 
			\qedhere
		\end{itemize}
	\end{enumerate}
\end{proof}

\begin{lemma}[Merge by level]
	\label{l:merge-by-level}
	If $\tm \partostratind{n}\tmtwo \tostratind{m} \tmthree$ with $n > m$, then $\tm \partob \tmthree$.
\end{lemma}

\begin{proof}
	By induction on the definition of $\tm \partostratind{n} \tmtwo$.
	Cases:
	\begin{itemize}
		\item \emph{Variable}: $\tm = \var \partostratind{\infty} \var = \tmtwo$. 
		Then, there is no $\tmthree$ such that $\tmtwo \tostratind{m} \tmthree$ for any $m \in \nat$.
		
		\item \emph{Abstraction}: $\tm = \la \var\tm' \partostratind{n}  \la \var \tmtwo' = \tmtwo$ because $\tm' \partostratind{n} \tmtwo'$. 
		According to the definition of $\tmtwo \tostratind{m} \tmthree$, by necessity $\tmthree = \la{\var}{\tmthree'}$ with $\tmtwo' \tostratind{m} \tmthree'$.
		By \ih, $\tm' \parto \tmthree'$, thus 
		\begin{align*}
		\AxiomC{$\tm' \parto \tmthree'$}
		\UnaryInfC{$\tm = \la{\var}{\tm'} \parto \la{\var}{\tmthree'} = \tmthree$}
		\DisplayProof
		\,.
		\end{align*}
		
		\item \emph{Application}:
		\begin{align*}
		\AxiomC{$\tm_0 \partostratind{n_0} \tmtwo_0$}	
		\AxiomC{$\tm_1 \partostratind{n_1} \tmtwo_1$}
		\BinaryInfC{$\tm = \tm_0 \tm_1 \partostratind{n} \tmtwo_0 \tmtwo_1 = \tmtwo$}
		\DisplayProof 
		\end{align*}
		where $n = \min\{n_0,n_1+1\}$.
		According to the definition of $\tmtwo \tostratind{m} \tmthree$, there are the following sub-cases:
		\begin{enumerate}
			\item $\tmtwo = \tmtwo_0\tmtwo_1 \tostratind{m} \tmthree_0\tmtwo_1 = \tmthree$ with $\tmtwo_0 \tostratind{m} \tmthree_0$; 
			since $m < n \leq n_0$, 
			by \ih applied to $\tm_0 \partostratind{n_0} \tmtwo_0 \tostratind{m} \tmthree_0$, we have $\tm_0 \partob \tmthree_0$, and so (as $\partostratind{n_1} \,\subseteq\, \partob$)
			\begin{align*}
			\AxiomC{$\tm_0 \partob \tmthree_0$}
			\AxiomC{$\tm_1 \partob \tmtwo_1$}
			\BinaryInfC{$\tm = \tm_0\tm_1 \partob \tmthree_0\tmtwo_1 = \tmthree$}
			\DisplayProof
			\,;
			\end{align*}
			\item $\tmtwo = \tmtwo_0\tmtwo_1 \tostratind{m} \tmthree_0\tmtwo_1 = \tmthree$ with $\tmtwo_1 \tostratind{m-1} \tmthree_1$; 
			since $m-1 < n-1 \leq n_1$,
			by \ih applied to $\tm_1 \partostratind{n_1} \tmtwo_1 \tostratind{m-1} \tmthree_1$, we have $\tm_1 \parto \tmthree_1$, and so (as $ \partostratind{n_0} \,\subseteq\, \parto$)
			\begin{align*}
			\AxiomC{$\tm_0 \parto \tmtwo_0$}
			\AxiomC{$\tm_1 \parto \tmthree_1$}
			\BinaryInfC{$\tm = \tm_0\tm_1 \parto \tmtwo_0\tmthree_1 = \tmthree$}
			\DisplayProof
			\,;
			\end{align*}		
			\item $\tmtwo = (\la{\var}{\tmtwo_0'})\tmtwo_1 \tostratind{0} \tmtwo_0\isub{\var}{\tmtwo_1} = \tmthree$ with $\tmtwo_0 = \la{\var}\tmtwo_0'$ and $m = 0$; 
			as $n > 0$ then, according to the definition of $\tm \partostratind{n} \tmtwo$,
			\begin{align*}
			\AxiomC{$\tm_0 \partostratind{n_0} \tmtwo_0$}
			\UnaryInfC{$\la{\var}\tm_0 \partostratind{n_0} \la{\var}\tmtwo_0$}
			\AxiomC{$\tm_1 \partostratind{n_1} \tmtwo_1$}
			\BinaryInfC{$\tm = (\la{\var}\tm_0)\tm_1 \partostratind{n} (\la{\var}{\tmtwo_0})\tmtwo_1 = \tmtwo$}
			\DisplayProof
			\end{align*}
			where $n = \min\{n_0, n_1+1\}$; therefore (as $\partostratind{k} \,\subseteq\, \parto$)
			\[
			\AxiomC{$\tm_0 \parto \tmtwo_0$}
			\AxiomC{$\tm_1 \parto \tmtwo_1$}
			\BinaryInfC{$\tm = (\la{\var}{\tm_0})\tm_1 \parto \tmtwo_0\isub{\var}{\tmtwo_1} = \tmthree$}
			\DisplayProof
			\,.\qedhere
			\]
		\end{enumerate}
	\end{itemize}
\end{proof}

\setcounter{propositionAppendix}{\value{prop:macro-strat}}
\begin{propositionAppendix}[\ll macro-step system]
	\hfill\label{propappendix:macro-strat}
	\NoteState{prop:macro-strat}
	\begin{enumerate}
		\item\label{pappendix:macro-strat-merge} \emph{Merge}: if $\tm \partonotstrat \tmtwo \tostrat \tmthree$, then $\tm \partob \tmthree$.
		\item\label{pappendix:macro-strat-indexed-split}\emph{Indexed split}:	if $\tm \partobind n \tmtwo$ then $\tm \partonotstrat\tmtwo$, or $n>0$ and $\tm \tostrat \cdot \partobindlong {n-1} \tmtwo$.
		\item\label{pappendix:macro-strat-split} \emph{Split}: if $\tm \partob \tmtwo$ then $\tm \tost^* \cdot \partonotstrat\tmtwo$.
	\end{enumerate}
	That is, $(\Lambda, \set{\toll, \tonotll})$ is a macro-step system with respect to $\partob$ and $\partonotll$.
\end{propositionAppendix}

\begin{proof}\hfill
	\begin{enumerate}
		\item \emph{Merge}: As $\tm \partonotstrat \tmtwo \tostrat \tmthree$, then $\tm \partostratind{n} \tmtwo \tostratind{m} \tmthree$ for some $n \in \nat \cup \{\infty\}$ and $m \in \nat$ such that $n > \Deg{\tm}$ and  $m = \Deg{\tmtwo}$.
		Since $\partonotstrat \subseteq \tonotstrat^*$ and $\tonotstrat$ cannot change the least level (\refpropp{deg}{invariance-equal}),  $\Deg{\tm} = \Deg{\tmtwo}$
		and so $n > m$.
		By Merge by Level (\reflemma{merge-by-level}), $\tm \partob \tmthree$.
		
		\item \emph{Indexed split}: By induction on the definition of $\tm \partobind n \tmtwo$. 
		We freely use the fact that if $\tm \partobind n \tmtwo$ then $\tm \partob \tmtwo$. Cases:
		\begin{itemize}
			\item \emph{Variable}: $\tm = \var \partobind 0 \var = \tmtwo$. 
			Then, $\tm = \var \partonotst\var = \tmtwo$  since $\var \partostratind{\infty} \var$.
			
			\item \emph{Abstraction}: $\tm = \la \var\tm' \partobind n  \la \var \tmtwo' = \tmtwo$ because $\tm' \partobind n \tmtwo'$. It follows by the \ih.
			
			\item \emph{Application}:
			\begin{align}
			\label{eq:app-parallel}
			\AxiomC{$\tmr \partobind {n_1} \tmr'$}
			\AxiomC{$\tmfive \partobind {n_2} \tmfive'$}
			\BinaryInfC{$\tm = \tmfour \tmfive \partobindlong {n_1 + n_2} \tmfour' \tmfive' = \tmtwo$}
			\DisplayProof
			\end{align}
			with $n = n_1 + n_2$.
			There are only two cases:
			\begin{itemize}
				\item either $\tmfour \tmfive \partonotst \tmfour' \tmfive'$, and then the claim holds;
				\item or $\tmfour\tmfive \not\partonotstrat \tmfour'\tmfive'$ and hence any derivation with conclusion $\tmr \tmfive \partostratind{d} \tmr' \tmfive'$ is such that $d =\deg{\tmr \tmfive} \in \nat$.
				Let us rewrite the derivation \eqref{eq:app-parallel} replacing $\partobind{n}$ with $\partostratind{k}$: we have\footnote{This is possible because the inference rules for $\partobind{n}$ and $\partostratind{k}$ are the same except for the way they manage their own indexes $n$ and $k$.}
				\[\AxiomC{$\tmr \partoat {d_{\tmr}} \tmr'$}
				\AxiomC{$\tmfive \partoat{d_{\tmfive}} \tmfive'$}
				\BinaryInfC{$\tm = \tmr \tmfive \partoat d \tmr' \tmfive' = \tmtwo$}
				\DisplayProof\]
				where $d = \min\{d_\tmfour, d_\tmfive +1 \}$.
				Thus, there are two sub-cases:
				\begin{enumerate}
					\item $d = d_\tmfour \leq d_\tmfive+1$ and then $d=\deg  {\tmfour \tmfive}\leq \deg \tmfour \leq d_{\tmfour}=d $ (the first inequality holds by definition of $\Deg{\tmfour\tmfive}$), hence $\deg \tmfour = d_{\tmfour}$; 
					we apply the \ih\ to $\tmfour \partobind {n_1} \tmfour'$ and we have that  $\tmfour \partonotstrat \tmfour'$, or $n_1>0$ and $\tmfour \tostrat \tmthree_1 \partobind {n_1-1} \tmr'$;
					but $\tmfour \partonotstrat \tmfour'$ is impossible because otherwise $\tmfour \tmfive \partonotst \tmfour' \tmfive'$ (as $d_\tmfour \leq d_\tmfive+1$of s);
					therefore, $n_1>0$ and $\tmfour \tostrat \tmthree_1 \partobind {n_1-1} \tmr'$, and so $n>0 $ and  $\tm  = \tmfour\tmfive  \tostrat \tmthree_1\tmfive  \partobindlong {n_1-1+n_2} \tmfour'\tmfive' = \tmtwo$.  
					\item $d=d_{\tmfive}+1 \leq d_{\tmfour}$ and then $d=\deg  {\tmfour \tmfive}\leq \deg \tmfive +1 \leq	  d_{\tmfive}+1 =d $, hence $\deg \tmfive = 
					d_{\tmfive}$; we conclude analogously to thee previous sub-case.	
				\end{enumerate}
			\end{itemize}

			\item \emph{$\beta$ step}:
			\[\AxiomC{$\tmthree \partobind {n_1} \tmthree'$}
			\AxiomC{$\tmfour \partobind {n_2} \tmfour'$}		
			\BinaryInfC{$\tm = (\la\var \tmthree)\tmfour \partobindlong {n_1 + \sizep{\tmthree'}\var \cdot n_2 +1} \tmthree'\isub \var {\tmfour'} = \tmtwo$}
			\DisplayProof\]
			With $n = n_1 + \sizep{\tmthree'}\var \cdot n_2 +1 > 0$. We have $\tm = (\la\var \tmthree)\tmfour \tost \tmthree\isub\var\tmfour$ and by substitutivity of $\partobind n$ (\reflemma{partobind-subs}) $\tmthree\isub\var\tmfour \partobindlong{n_1 + \sizep{\tmthree'}\var \cdot n_2} \tmthree' \isub\var{\tmfour'} = \tmtwo$.
		\end{itemize}
		
		\item\emph{Split}: Exactly as in the head case (\refpropp{macro-head}{split}), using the Indexed Split property for \ll (\refpoint{macro-strat-indexed-split} above).
		\qedhere
	\end{enumerate}
\end{proof}

\newcommand{\tmpp}{\tmfive}

\end{document}